\pdfoutput=1
\RequirePackage{ifpdf}
\ifpdf % We~are running pdfTeX in pdf mode
\documentclass[pdftex]{sigma}
\else
\documentclass{sigma}
\fi

\usepackage{tikz-cd}

\usepackage{mathtools}

\numberwithin{equation}{section}
\newtheorem{Theorem}{Theorem}[section]
\newtheorem{Lemma}[Theorem]{Lemma}
\newtheorem{Proposition}[Theorem]{Proposition}
\newtheorem{Corollary}[Theorem]{Corollary}

\theoremstyle{definition}
\newtheorem{Definition}[Theorem]{Definition}
\newtheorem{Remark}[Theorem]{Remark}

\makeatletter
\newcommand{\hathat}[1]{%
\begingroup%
 \let\macc@kerna\z@%
 \let\macc@kernb\z@%
 \let\macc@nucleus\@empty%
 \hat{\raisebox{.2ex}{\vphantom{\ensuremath{#1}}}\smash{\hat{#1}}}%
\endgroup%
}
\makeatother
%-------------------------------------------------

\newcommand{\Z}{\mathbb{Z}}
\newcommand{\CC}{\mathbb{C}}

\begin{document}
\allowdisplaybreaks

\newcommand{\arXivNumber}{2206.10578}

\renewcommand{\PaperNumber}{026}

\FirstPageHeading

\ShortArticleName{On Generalized WKB Expansion of Monodromy Generating Function}

\ArticleName{On Generalized WKB Expansion\\ of Monodromy Generating Function}

\Author{Roman KLIMOV}

\AuthorNameForHeading{R.~Klimov}

\Address{Department of Mathematics and Statistics, Concordia University,\\
1455 de Maisonneuve W., Montreal, QC H3G 1M8, Canada}
\Email{\href{mailto:roman.klimov@concordia.ca}{roman.klimov@concordia.ca}}

\ArticleDates{Received June 22, 2022, in final form April 11, 2023; Published online April 28, 2023}

\Abstract{We study symplectic properties of the monodromy map of the Schr\"odinger equation on a Riemann surface with a meromorphic potential having second-order poles. At first, we discuss the conditions for the base projective connection, which induces its own set of Darboux homological coordinates, to imply the Goldman Poisson structure on the character variety. Using this result, we extend the paper [\textit{Theoret. and Math. Phys.} \textbf{206} (2021), 258--295, arXiv:1910.07140], by performing generalized WKB expansion of the generating function of monodromy symplectomorphism (the Yang--Yang function) and computing its first three terms.}

\Keywords{WKB expansion; moduli spaces; tau-functions}

\Classification{53D30; 34M45; 34E20}

\section{Introduction}
The symplectic aspects of the monodromy map of Schr\"odinger and related Schwarzian equations is a rich topic that draws attention of contributors from mathematical and physical communities. This study was initiated by S.~Kawai \cite{Kawai} who established a relationship between the canonical symplectic structure on the cotangent bundle $T^* \mathcal{M}_{g}$ of the moduli space of curves and Goldman's bracket for the traces of monodromy matrices. Kawai's results have found a physics application in the geometry of four-dimensional supersymmetric quantum field theories in \cite{nekrasov, tesch}.

Later in \cite{Bertola_2017, korotkin2018periods} authors proposed an alternative approach to the symplectic geometry of the monodromy map using homological coordinates to code information about a potential and complex structure of a Riemann surface, in case when potential is holomorphic or with first-order poles. These works highly relied on the canonical identification between associated moduli spaces of quadratic differentials and cotangent bundles $T^* \mathcal{M}_{g,n}$ of moduli spaces of punctured surfaces. In this paper, we will generalize their results by considering quadratic differentials with second-order poles. Although in terms of monodromy data this case is well-studied thanks to Fuchsian nature of the singularities, from the symplectic point of view it brings technical difficulties due to absence of the aforesaid identification of moduli spaces and requires a different approach that we develop. Further generalization to include poles of order three and more poses an issue on account of emergence of generalized monodromy data, including Stokes and connection matrices~\cite{Birk}. That would require a non-obvious choice of local Darboux coordinates on the space of parameters of the equation. Recent developments in this direction for genus~0 are present in~\cite{harnad}. For a higher genus this remains a challenging problem.

Introduce the linear second-order equation on a Riemann surface $\mathcal{C}$ of genus $g$ with $n$ punctures in the form
\begin{equation}\label{1}\partial^2 \phi+U\phi=0,\end{equation}
where $U$ is a meromorphic potential on $\mathcal{C}$ with double poles at the punctures $(z_j)^n_{j=1}$. Invariance of the equation under a coordinate change implies that $U$ transforms as a projective connection, while the
solution $\phi$ locally transforms as $\frac{1}{2}$-differential~\cite{halfdif}. To parametrize the space of all meromorphic potentials, we represent the potential $U$ as
\begin{equation}\label{pot}U=\frac{1}{2}S-Q,\end{equation}
where $S$ is a fixed projective connection on $\mathcal{C}$ with at most simple poles at $z_j$ and holomorphically depending on moduli of $\mathcal{M}_{g,n}$, while the quadratic differential $Q$, having double poles at $z_j$, varies. Let the asymptotics of $Q$ near the poles be given by
\begin{equation*}%\label{2}
Q(x) \sim \left(\frac{r_j^2}{\xi_j^2} +O\big(\xi_j^{-1}\big)\right)({\rm d} \xi_j)^2.
\end{equation*}

Denote by $\mathcal{Q}_{g,n}$ the moduli space of pairs $(\mathcal{C},Q)$ and by $\mathcal{Q}_{g,n}[\mathbf{r}]$ its corresponding stratum for fixed values of $r_j$'s. The solution to \eqref{1} is locally a $-\frac{1}{2}$ differential which could be written as
\begin{equation*}%\label{3}
\phi=\phi(\xi)({\rm d}\xi)^{-\frac{1}{2}}.
\end{equation*}
The ratio $f={\phi_1}/{\phi_2}$ of two
linearly independent solutions of \eqref{1} solves the Schwarzian equation
\begin{equation*}%\label{intro/eq/13}
\{f,\xi\}=S(\xi)-2Q(\xi),
\end{equation*}
where $\xi$ is an arbitrary local parameter on $\mathcal{C}$ and $\{ \, \cdot \, , \, \cdot \, \}$ denotes the Schwarzian derivative. Analytic continuation of $f$ along the cycles of $\pi(\mathcal{C}\backslash \{ z_i\}^n_{j=1}, x_0)$ determines a ${\rm PSL}(2, \CC)$ monodromy representation of the fundamental group with the chosen basepoint $x_0$. The choice of standard generators $\big(\{\kappa\}^{n}_{j=1},\{\alpha, \beta\}^{g}_{j=1}\big)$ of the fundamental group with single relation
\begin{equation*}%\label{4}
\kappa_1 \cdots \kappa_n \prod^g_{i=1}\alpha_i \beta_i \alpha_i^{-1} \beta_i^{-1}={\rm id}
\end{equation*} yields the same relation on the monodromy matrices
\begin{equation}\label{5}
M_{\kappa_1}\cdots M_{\kappa_n}\prod^g_{i=1}M_{\alpha_i} M_{\beta_i} M_{\alpha_i}^{-1}M_{\beta_i}^{-1}=I.
\end{equation}
The matrix $M_{\kappa_j}$ corresponding to the monodromy around the pole $z_j$ has the following diagonal form:
\begin{equation*}%\label{6}
D_j=\begin{pmatrix}m_j & 0 \\ 0 & m_j^{-1} \end{pmatrix},
\end{equation*}
where
\begin{equation}\label{7}
 m_j^2={\rm e}^{4 \pi {\rm i} \lambda_j}.
\end{equation}
We additionally assume that $\lambda_j \notin \Z/2$ to exclude the resonant case.
Local analysis of the solutions for \eqref{1} implies the following relation between the biresidues $(r_j)$ and eigenvalues~$(m_j)$%
\begin{equation}\label{8}{r^2_j}=\lambda_j(\lambda_j-1). \end{equation}
We denote by ${\rm CV}_{g,n}$ the ${\rm PSL}(2)$ character variety corresponding to the representation~\eqref{5}. It is well known that the stratum ${\rm CV}_{g,n}[\mathbf{m}]$ for fixed values $m_j$ is a symplectic leaf with a Poisson structure given by the Goldman bracket~\cite{Goldman_1984}.

The space $\mathcal{Q}_{g,n}[\mathbf{r}]$ admits a system of local coordinates defined in the following way: for every pair $( \mathcal{C}, Q) \in \mathcal{Q}_{g,n} $ consider equation $v^2=Q$ in $T^*\mathcal{C}.$ Although $\sqrt{Q}$ is not single valued on $\mathcal{C},$ the equation induces a branched double covering $\pi\colon \hat{\mathcal{C}} \xrightarrow{} \mathcal{C},$ where $v$ is a single-valued Abelian differential. The map $\pi$ is branched at (simple) zeroes of $Q$ denoted by $(x_j)^{4g-4+2n}_{j=1}$. Thus, each double pole $z_j$ has two preimages that we call $\big(z_j^{(1)}, z_j^{(2)}\big)$. The enumeration of these points is chosen
such that the residue of $v$ at $z_j^{(1)}$
equals $r_j$ and the residue of $v$ at $z_j^{(2)}$
equals~$-r_j$. The genus of $\hat{\mathcal{C}}$ is $\hat{g}=g+g^-$, with $g^-=3g-3+n.$ The surface $\hat{\mathcal{C}}$ is equipped with a natural holomorphic involution $\mu\colon \hat{\mathcal{C}} \xrightarrow{} \hat{\mathcal{C}}$ which interchanges the sheets of the double cover. The involution induces the splitting of the homology group $H_1\big(\hat{\mathcal{C}}\backslash (z_j^{(1)}, z_j^{(2)})^{n}_{j=1}, \Z\big)$ into even $H_+$ and odd $H_-$ parts. We choose appropriate subset of cycles $(a^-_i,b^-_i)^{g^-}_{i=1} \in H_-$ with intersection index $a^-_i \circ b^-_j=\frac{1}{2}\delta_{ij}$ so that the integrals
\begin{equation*}%\label{9}
A_j=\oint_{a^-_j}v, \qquad B_j=\oint_{b^-_j}v
\end{equation*}
become local period (or homological) coordinates on $\mathcal{Q}_{g,n}[\mathbf{r}]$ \cite{Kontsevich_2003}. The intersection pairing defines the natural symplectic form on $\mathcal{Q}_{g,n}[\mathbf{r}]$
\begin{equation*}%\label{10}
\Omega_{{\rm hom}}=\sum^{g^-}_{j=1}2 \delta B_j \wedge \delta A_j.
\end{equation*}
\begin{Remark}

Throughout this article we will use the ``$\delta$'' symbol as differential with respect to moduli, while ``${\rm d}$'' refers to the differential with respect to some local coordinate near a point on surface.
\end{Remark}
Here and below we will assume that $Q$ is free from saddle trajectories (i.e., it is a ``Gaiotto--Moore--Neitzke differential'' \cite{giatto}), so that the symplectic form on ${\rm CV}_{g,n}[\mathbf{m}]$ that inverts the Goldman bracket could be written in terms of homological shear coordinates given by linear combinations of the logarithms of classical Thurston's shear coordinates \cite{thurston} emerging from the ideal triangulation of the Riemann surface $\mathcal{C}$ (they are a simple example of more involved Fock--Goncharov coordinates \cite{fock}):
\begin{equation}\label{11}\Omega_{G}=\sum^{g^-}_{j=1} 2 \delta\rho_{a^-_j} \wedge \delta\rho_{b^-_j}.\end{equation}

Introduce the Bergman projective connection $S_B$ defined in terms of the canonical bidifferential $B(x,y)$ on $\mathcal{C}$, which is normalized with respect to chosen Torelli marking in $H_1(\mathcal{C},\Z)$:%
\begin{equation}\label{12}
 S_B(x)=\left(B(x,y)-\frac{{\rm d}\xi(x) {\rm d}\xi(y)}{(\xi(x)-\xi(y))^2}\right)\Big|_{y=x},
\end{equation}
where $\xi$ is any local coordinate near point $x.$ Since $S_B$ depends holomorphically on the conformal structure of $\mathcal{C},$ the difference $S-S_B$ becomes
a family of quadratic differentials with at most simple poles at the punctures $(z_j)$, depending holomorphically on moduli of $\mathcal{M}_{g,n}.$ Using the identification of the moduli space of quadratic differentials with simple poles and the cotangent bundle $T^*\mathcal{M}_{g,n}$, we can associate $S-S_B$ with the 1-form $\Theta_{(S-S_B)}$, locally defined on $\mathcal{M}_{g,n}$, in the following way.

At first, introduce the set of holomorphic local coordinates $(\Omega_{jk}, q_l)$ on $\mathcal{M}_{g,n}, \, g \geq 2$. To determine locally the
conformal structure of $\mathcal{C}$ we pick at generic point of $\mathcal{M}_{g,n}$ (outside of hyperelliptic locus for $g \geq 3$) a set $D$ of $3g-3$ entries of the
period matrix $\Omega$ of $\mathcal{C}.$ The quadratic differentials corresponding to cotangent vectors $\delta \Omega_{jk} $ are products $u_j u_k$ of normalized
holomorphic differentials. An additional set of $n$ coordinates which determine the positions of punctures $(z_l)^n_{l=1} $
on $\mathcal{C}$ we choose to be $q_l = (u_i/u_j )(z_l)$ where $u_i$ and $u_j$ form a pair of normalized holomorphic 1-forms on
$\mathcal{C}$, such that $u_j(z_l) \neq 0$. The quadratic differential corresponding to cotangent vector $\delta q_l$ is the meromorphic quadratic
differential $Q^{z_l}$ (given by the formula \eqref{110} below) whose only simple pole is at $z_l$. These coordinates are local: in different coordinate charts on
$\mathcal{M}_{g,n}$ one might need to choose other pairs of normalized holomorphic differentials and/or different Torelli
markings. The momenta $p_l$ are then defined
to be coefficients of decomposition of the quadratic differential $S-S_B$ in the
basis described above.
Writing down the quadratic differential $S-S_B$ as
\begin{equation*}
S-S_B=\sum_{(jk) \in D}p_{jk} u_j u_k+ \sum^n_{l=1} p_l Q^{z_l},
\end{equation*}
where $p_{jk}$ and $p_l$ are holomorphic functions of $(\Omega_{jk},q_l)$, the corresponding 1-form $\Theta_{(S-S_B)}$ on~$\mathcal{M}_{g,n}$ reads as
\begin{equation}\label{form}\Theta_{(S-S_B)}=\sum_{(jk) \in D}p_{jk} \delta \Omega_{jk}+ \sum^n_{l=1} p_l \delta q_l.\end{equation}
Local coordinates on $\mathcal{M}_{g,n}$ for $ g=0, 1$ have a special description and were covered in~\cite{korotkin2018periods}.

Our first main result imposes a condition on projective connection $S$ of \eqref{pot} for the monodromy map to become a symplectomorphism.

\begin{Theorem} \label{Th_1}
The monodromy map
\begin{equation*}%\label{13}
\mathcal{F}_{(S)}\colon \ \mathcal{Q}_{g,n}[\mathbf{r}] \xrightarrow[]{} {\rm CV}_{g,n}[\mathbf{m}]
\end{equation*} is a symplectomorphism with $\mathcal{F}_{(S)}^*\Omega_G=-\Omega_{\hom}$
if and only if the $1$-form $\Theta_{(S-S_B)}$, corresponding to family of quadratic differentials $S-S_B$ $($which is locally
defined on the moduli space $\mathcal{M}_{g,n})$,
is closed, $\delta\Theta_{(S-S_B)}=0.$
\end{Theorem}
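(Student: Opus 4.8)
The plan is to reduce the equivalence to a single master formula
\[
\mathcal{F}_{(S)}^* \Omega_G + \Omega_{\hom} = \pi^*\, \delta\Theta_{(S-S_B)},
\]
where $\pi\colon \mathcal{Q}_{g,n}[\mathbf{r}] \to \mathcal{M}_{g,n}$ is the projection forgetting $Q$ and remembering $(\mathcal{C}, z_j)$. Because $\pi$ is a surjective submersion, $\pi^*$ is injective on $2$-forms, so the master formula yields at once that $\mathcal{F}_{(S)}^*\Omega_G = -\Omega_{\hom}$ holds if and only if $\delta\Theta_{(S-S_B)} = 0$, which settles both directions of the theorem. The whole task is therefore to establish the master formula, and the rest of the argument is bookkeeping about where the Bergman connection $S_B$ of \eqref{12} enters.

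First I would set up the variational calculus for the monodromy map. The shear coordinates $\rho_{a_j^-},\rho_{b_j^-}$ are logarithms of cross-ratios of solutions $f=\phi_1/\phi_2$ of \eqref{1}, hence functions of the moduli $(\mathcal{C},Q)$ through the potential $U=\tfrac12 S - Q$. I would compute $\delta\rho_e$ by the standard first-order perturbation formula for Fuchsian monodromy, which expresses the variation of a monodromy datum as a period-type integral of $\delta U$ paired against bilinear combinations of solutions (the same mechanism underlying the Goldman bracket). The variation then splits as $\delta U = \tfrac12\,\delta S - \delta Q$ into a \emph{vertical} piece $-\delta Q$ along the fibers of $\pi$ and a \emph{horizontal} piece $\tfrac12\,\delta S$ coming from the change of conformal structure and the positions of the punctures.

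Next I would pass to the spectral cover $\hat{\mathcal{C}}$ and invoke abelianization: to leading WKB order the shear coordinates are governed by the periods $A_j=\oint_{a_j^-}v$ and $B_j=\oint_{b_j^-}v$ of $v=\sqrt{Q}$. Feeding the vertical variations into $\Omega_G=\sum_{j=1}^{g^-}2\,\delta\rho_{a_j^-}\wedge\delta\rho_{b_j^-}$ and using the intersection pairing $a_i^-\circ b_j^-=\tfrac12\delta_{ij}$ converts this wedge into $-\sum_{j=1}^{g^-}2\,\delta B_j\wedge\delta A_j=-\Omega_{\hom}$. This reproduces the base case (in particular $S=S_B$, where $\Theta_{(S-S_B)}=0$) and isolates the defect $\mathcal{F}_{(S)}^*\Omega_G+\Omega_{\hom}$ as exactly the contribution of the horizontal variations of $S$ and of the conformal structure.

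Finally I would identify this defect with $\pi^*\,\delta\Theta_{(S-S_B)}$. The crucial point is that the canonical Rauch-type variational formulas for periods and monodromy on $\hat{\mathcal{C}}$ are written through the normalized bidifferential $B(x,y)$, whose diagonal regularization \eqref{12} is precisely the Bergman projective connection $S_B$; hence the self-interaction terms generated by the horizontal variation assemble into $S_B$, and only the difference $S-S_B$ survives as an independent source. Since $S-S_B$ is a family of quadratic differentials with at most simple poles, holomorphic in the moduli of $\mathcal{M}_{g,n}$, the cotangent-bundle identification defining $\Theta_{(S-S_B)}$ through \eqref{form} turns this surviving contribution into $\pi^*\,\delta\Theta_{(S-S_B)}$. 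The hard part will be this last identification: tracking the horizontal variation through the WKB/abelianization dictionary and checking that every self-interaction term cancels against $S_B$ while no spurious boundary or residue contributions appear at the double poles $z_j$. It is exactly here that the absence of the usual $T^*\mathcal{M}_{g,n}$ identification for double-pole quadratic differentials, flagged in the introduction, forces the careful direct computation rather than a transfer of the holomorphic-case result.
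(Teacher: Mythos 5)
Your logical skeleton is sound: reducing the theorem to a single identity of the form $\mathcal{F}_{(S)}^*\Omega_G+\Omega_{\hom}=c\,\pi^*\delta\Theta_{(S-S_B)}$ settles both directions at once, and an identity of exactly this shape is what the paper establishes (it is the $\delta$ of equation \eqref{131}). The first gap is in the mechanism you propose for proving it. You invoke abelianization ``to leading WKB order'' to replace the shear coordinates $\rho_{a_j^-},\rho_{b_j^-}$ by the periods $A_j,B_j$ and then compute $\Omega_G$ from the intersection pairing. That substitution is only asymptotic as $\hbar\to0$ (cf.\ \eqref{157}); it cannot establish an \emph{exact} identity of $2$-forms. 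The exact statement $\mathcal{F}_{(S_B)}^*\Omega_G=-\Omega_{\hom}$ is the content of Theorem \ref{Th_4} and its corollary, a nontrivial result imported from \cite{bertola2021wkb} which rests on the identification of the two double covers and on the intersection-theoretic computation of Goldman brackets of homological shear coordinates, not on leading-order WKB. Your sketch in effect asks to re-derive that theorem by an argument that only controls leading asymptotics, and likewise your first-order perturbation formula for the Fuchsian monodromy is never actually connected to the Darboux property of the $\rho$'s.

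The second, larger gap is the identification of the defect with $\pi^*\delta\Theta_{(S-S_B)}$, which you rightly flag as the hard part but for which you offer only the heuristic that ``self-interaction terms assemble into $S_B$.'' The paper's route here is structurally different and bypasses the monodromy map entirely: writing $\mathcal{F}_{(S_B)}=\mathcal{F}_{(S)}\circ\mathcal{H}$ with $\mathcal{H}\colon Q\mapsto Q+\tfrac12(S_B-S)$, everything reduces to comparing the two homological symplectic potentials built from $v_0^2=Q$ and $v_1^2=Q+\tfrac12(S_B-S)$ on the same base curve. That comparison is done by expanding the periods of $v_1$ in a perturbation parameter (formula \eqref{101}), controlling the cross terms with Lemma \ref{Lem_1} --- a Riemann bilinear identity for differentials with jump discontinuities, needed precisely because the $T^*\mathcal{M}_{g,n}$ identification fails for second-order poles --- evaluating residues at the branch points and at $z_j^{(1)},z_j^{(2)}$, recognizing $\bigl\langle \oint \tfrac{S_B-S}{2v},\delta\oint v\bigr\rangle$ as $\tfrac12\Theta_{(S_B-S)}$ via the variational formulas \eqref{51}--\eqref{52}, and finally removing the ``nearby differentials'' restriction by a connectedness and telescoping argument. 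None of this machinery is visible in your outline, so the claim that the horizontal contribution equals $\pi^*\delta\Theta_{(S-S_B)}$ with no residual terms at the double poles remains unsubstantiated; as written, the proposal does not yet constitute a proof.
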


Statement of the theorem generalizes the results proven in \cite{Bertola_2017, korotkin2018periods}, where the differential~$Q$ is assumed to be holomorphic or with simple poles, respectively. The proofs were based on the identification of the homological symplectic form with the canonical form on $T^*\mathcal{M}_{g,n}$ which does not hold in presence of second-order poles. Our proof involves a perturbation of quadratic differential~$Q$ and expansion of homological coordinates by series. A~similar approach was employed before to study Strebel differentials~\cite{DuaHub}. The outlined criterion effectively proves a symplectic nature of the monodromy map for a large class of projective connections which are known to satisfy the given condition of the closedness of 1-form (for example, Schottky, Wirtinger and Bers projective connections~\cite{Bertola_2017}).

Let us take $S=S_B$. Choosing symplectic potentials on $\mathcal{Q}_{g,n}[\mathbf{r}]$:
\begin{equation*}%\label{16}
\theta_{{\rm hom}}=\sum^{g^-}_{j=1} \big(B_j \delta A_j - A_j \delta B_j \big)\end{equation*}
and on ${\rm CV}_{g,n}[\mathbf{m}]$:
\begin{equation*}%\label{17}
\theta_{G}=\sum^{g^-}_{j=1}\big( \rho_{b^-_j} \delta \rho_{a^-_j} - \rho_{a^-_j} \delta \rho_{b^-_j} \big)
\end{equation*}
we may consider the generating function of this symplectomorphism (which was called the Yang--Yang function in \cite{bertola2021wkb} following the seminal work~\cite{nekrasov}) given by
 \begin{equation*}%\label{18}
 \delta \mathcal{G}_B=\mathcal{F}_{(S_B)}^*\theta_{G}-\theta_{{\rm hom}}.
 \end{equation*}
 Our definition of $\mathcal{G}_B$ involves a different choice of the Darboux coordinates on the character variety (homological shear coordinates versus complex Fenchel--Nielsen coordinates). Thus, the actual Yang--Yang function defined in \cite{nekrasov} differs from $\mathcal{G}_B$ by a generating function of the change of these coordinates.\footnote{Another difference with the (Nekrasov--Rosly--Shatashvili) Yang--Yang function of \cite{nekrasov} is that $\mathcal{G}_B$ is explicitly
dependent on chosen projective structure on the base surface as well as the moduli of the spectral cover,
whereas the NRS Yang--Yang function is a function of the complex structure parameters of the base surface.}
While $\mathcal{G}_B$ is invariant under symplectic transformations of the basis $(a^-_i,b^-_i) \in H_{-} $, its dependance on the Torelli marking of $\mathcal{C}$ was unclear in~\cite{bertola2021wkb}. The following result addresses this issue.
\begin{Proposition}
Let two Torelli markings $\alpha^\sigma$ and $\alpha$ be related by ${\rm Sp}(2g,\Z)$ matrix
\begin{equation*}
\sigma=\begin{pmatrix}
C & A \\
D & B
\end{pmatrix}\colon \qquad
\begin{pmatrix}
b \\
a
\end{pmatrix}^{\sigma}= \sigma
\begin{pmatrix}
b \\
a
\end{pmatrix}.
\end{equation*}
Under this change the monodromy generating function transforms as
\begin{equation*}
\mathcal{G}^{\sigma}_B=\mathcal{G}_B+\sum^n_{i=1}\pi {\rm i} r_i\left({\rm reg}\int^{z^{(1)}_i}_{z^{(2)}_i}v_1 -{\rm reg}\int^{z^{(1)}_i}_{z^{(2)}_i}v_0 \right) +6 \pi {\rm i} \log \det (C \Omega +D),
\end{equation*}
where
\begin{equation*}
v_0^2=Q, \qquad v_1^2=Q+6 \pi {\rm i} \sum_{1 \leq j \leq k \leq g}u_j u_k \frac{\partial}{\partial \Omega_{jk}}\log \det (C \Omega +D)
\end{equation*}
 define two canonical coverings. The regularization of the integrals at the endpoints is given by~\eqref{76}.
\end{Proposition}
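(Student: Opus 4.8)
The plan is to isolate the marking-dependence of $\mathcal{G}_B$ to the single ingredient that actually carries it, the Bergman projective connection $S_B$, and then to interpolate between the two markings. Since the statement already records that $\mathcal{G}_B$ is insensitive to $\mathrm{Sp}$-changes of the basis $(a_i^-,b_i^-)$, the relabelling of cycles on the cover contributes nothing; the whole effect of passing from $\alpha$ to $\alpha^\sigma$ enters through the normalized canonical bidifferential $B(x,y)$, hence through $S_B$. First I would record the classical modular transformation law
\begin{equation*}
S_B^{\sigma}-S_B=12\pi{\rm i}\sum_{1\le j\le k\le g}u_j u_k\,\frac{\partial}{\partial\Omega_{jk}}\log\det(C\Omega+D),
\end{equation*}
displaying the $\sigma$-dependence via $\partial_{\Omega_{jk}}\log\det(C\Omega+D)=[(C\Omega+D)^{-1}C]_{jk}$. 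Keeping the potential $U=\tfrac12 S_B-Q$ fixed and re-expanding it as $U=\tfrac12 S_B^{\sigma}-Q^{\sigma}$ forces $Q^{\sigma}=Q+\tfrac12(S_B^{\sigma}-S_B)$, whose square root is exactly the second covering $v_1$ of the statement, while $v_0^2=Q$. Thus the proposition is precisely the comparison of the generating functions attached to $S_B$ and $S_B^{\sigma}$.

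Next I would interpolate. Set $S_t=S_B+t(S_B^{\sigma}-S_B)$ for $t\in[0,1]$, with the associated family of coverings $v_t^2=Q+\tfrac{t}{2}(S_B^{\sigma}-S_B)$ joining $v_0$ to $v_1$. The crucial observation is that the $1$-form attached to $S_t-S_B$ through \eqref{form} is \emph{exact}: its $u_j u_k$-components are $12\pi{\rm i}\,t\,\partial_{\Omega_{jk}}\log\det(C\Omega+D)$ and its $Q^{z_l}$-components vanish, so $\Theta_{(S_t-S_B)}=12\pi{\rm i}\,t\,\delta\log\det(C\Omega+D)$, which is closed. By Theorem~\ref{Th_1} the map $\mathcal{F}_{(S_t)}$ is then a symplectomorphism with $\mathcal{F}_{(S_t)}^*\Omega_G=-\Omega_{\hom}$ for every $t$, a generating function $\mathcal{G}_t$ exists along the whole path, and $\mathcal{G}_B^{\sigma}-\mathcal{G}_B=\int_0^1\partial_t\mathcal{G}_t\,{\rm d}t$.

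The analytic heart is the evaluation of $\partial_t\mathcal{G}_t$. I would differentiate the defining relation $\delta\mathcal{G}_t=\mathcal{F}_{(S_t)}^*\theta_G-\theta_{\hom}^{(t)}$ in $t$, commute $\partial_t$ with $\delta$, and compute the two resulting pieces. On the homological side I use $2v_t\,\partial_t v_t=\tfrac12(S_B^{\sigma}-S_B)$, i.e.\ $\partial_t v_t=(S_B^{\sigma}-S_B)/(4v_t)$, so that $\partial_t A_j=\oint_{a_j^-}(S_B^{\sigma}-S_B)/(4v_t)$ and similarly for $B_j$; feeding these into the Riemann bilinear relations on the odd homology of $\hat{\mathcal{C}}$ pairs the holomorphic quadratic differential $S_B^{\sigma}-S_B$ against the cover and reproduces the $t$-independent bulk form $6\pi{\rm i}\,\delta\log\det(C\Omega+D)=\tfrac12\,\partial_t\Theta_{(S_t-S_B)}$. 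Because $v_t$ has simple poles at the preimages $z_i^{(1)},z_i^{(2)}$ with residues $\pm r_i$, the boundary terms of the bilinear relation localize there and yield, after the regularization \eqref{76}, the relative periods ${\rm reg}\int_{z_i^{(2)}}^{z_i^{(1)}}v_t$ weighted by $\pi{\rm i}\,r_i$; the character-variety contribution $\partial_t(\mathcal{F}_{(S_t)}^*\theta_G)$ is controlled by the same Theorem~\ref{Th_1} identity and adds no further anomaly. Integrating over $t\in[0,1]$ collapses the $v_t$ to the endpoint differentials $v_0,v_1$ and produces the three stated terms.

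The step I expect to be the main obstacle is the rigorous handling of the Fuchsian punctures. The relative periods $\int_{z_i^{(2)}}^{z_i^{(1)}}v_t$ diverge logarithmically because of the simple poles of $v_t$, so the regularization \eqref{76} must be applied uniformly in $t$ in order that $\int_0^1\partial_t\big({\rm reg}\int v_t\big)\,{\rm d}t$ telescope cleanly to ${\rm reg}\int v_1-{\rm reg}\int v_0$; one must check that the regularization constants leak no spurious $t$-dependent contribution. A secondary point requiring care is verifying that the variation of the shear-coordinate potential $\mathcal{F}_{(S_t)}^*\theta_G$ is fully absorbed by $\mathcal{F}_{(S_t)}^*\Omega_G=-\Omega_{\hom}$ together with the puncture data, so that the only surviving bulk term is the exact form $6\pi{\rm i}\,\delta\log\det(C\Omega+D)$; path-independence of the final answer then follows from the exactness of $\Theta_{(S_t-S_B)}$ and the stated $H_-$-basis invariance of $\mathcal{G}_B$.
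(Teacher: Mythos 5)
Your proposal reaches the stated formula, and its opening reduction --- isolating the marking dependence in $S_B$, noting that the induced symplectic change of the $H_-$ basis leaves both potentials $\theta_{{\rm hom}}$ and $\theta_G$ unchanged, and thereby reducing everything to the shift map $\mathcal{H}\colon Q\mapsto Q^\sigma$ with $v_1^2=(Q^\sigma)$ as in the statement --- is exactly the paper's reduction. Where you genuinely diverge is in how the generating function of $\mathcal{H}$ is evaluated: the paper simply quotes Proposition~\ref{Prop_3}(2), already proven by expanding $\sqrt{Q_0+\hbar\tilde Q}$ in powers of $\hbar$, resumming the puncture residues into the regularized relative periods, identifying the order-$\hbar$ bulk term with $\tfrac12\Theta_{(S_0-S_1)}$ via the variational formulas \eqref{51}, \eqref{52}, and telescoping along a path in $\mathcal{Q}_{g,n}[\mathbf{r}]$; since $\hat G_{(S_B^\sigma-S_B)}$ is given by \eqref{gen}, the term $6\pi{\rm i}\log\det(C\Omega+D)$ drops out immediately. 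Your continuous interpolation $v_t^2=Q+\tfrac t2(S_B^\sigma-S_B)$ with $\partial_t v_t=(S_B^\sigma-S_B)/(4v_t)$, followed by the Riemann bilinear relation and integration over $t$, is an infinitesimal version of that same computation, and it is arguably cleaner here because $S_B^\sigma-S_B$ is a combination of the $u_ju_k$ alone (no $Q^{z_l}$ components), so the bulk $1$-form is manifestly exact and $t$-independent; the two points you flag as delicate are indeed the ones carried by the paper's Lemma~\ref{Lem_1}/Proposition~\ref{Prop_3} machinery, and both are unproblematic (the regularization constant is $t$-independent because the biresidues $r_j^2$ are unperturbed, and the Goldman-side variation is absorbed because $\mathcal{F}_{(S_B^\sigma)}=\mathcal{F}_{(S_B)}\circ\mathcal{H}$ while $\theta_G$ is marking-independent). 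One caveat to reconcile before writing this out in full: you state $S_B^\sigma-S_B=+12\pi{\rm i}\sum u_ju_k\,\partial_{\Omega_{jk}}\log\det(C\Omega+D)$, which is opposite in sign to the paper's \eqref{136}; this is compensated by your relation $Q^\sigma-Q=\tfrac12(S_B^\sigma-S_B)$ being opposite in sign to the paper's \eqref{66}, so both conventions land on the stated $v_1$ and on $+6\pi{\rm i}\log\det(C\Omega+D)$, but only one pair of signs can be the correct one and the discrepancy should be resolved explicitly.
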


Although $\mathcal{G}_B$ is defined very implicitly, the WKB approximation of the equation~\eqref{1} allows us to compute its asymptotic expansion. In~\cite{bertola2021wkb}, authors studied the following equation for a~small parameter $\hbar \ll 1$
\begin{equation*}\partial^2 \phi+\left(\frac{1}{2}S_B-\frac{Q}{\hbar^2}\right)\phi=0\end{equation*}
and, following \cite{voros_alegr, AlBrid}, established a link between the $\hbar$-expansion of the homological shear coordinates \eqref{11} and Voros symbols -- integrals over the odd homology group $H_-$ \cite{Voros}. That allowed them to compute the leading asymptotic of the WKB expansion of the generating function $\mathcal{G}_B$. In this paper, we use the result of Theorem \ref{Th_1} to generalize the WKB method by considering Schr\"odinger equation with the potential perturbed by the $\frac{1}{\hbar}$-term: let $Q_1$ be a meromorphic differential on $\mathcal{C}$ with at most simple poles at the punctures $(z_j)^n_{j=1}$, holomorphically depending on moduli of $\mathcal{M}_{g,n}$. Take the base projective connection
\begin{equation*}
S=S_B-\frac{2 Q_1}{\hbar} \end{equation*}
and consider second-order equation \eqref{1} on a Riemann surface $\mathcal{C}$ in the form
\begin{equation}\label{19}
\partial^2 \phi+\left(\frac{1}{2}S_B-\frac{{Q}_1}{\hbar}-\frac{Q}{\hbar^2}\right)\phi=0.
\end{equation}

Such equations in genus zero are known to give rise to Riemann--Hilbert problems emerging naturally in Donaldson--Thomas theory \cite{Bri, BM} and appear in the quantization of spectral curves via the topological recursion~\cite{iwaki}. Our result involves a computation of three leading terms of the $\hbar$-expansion of the generating function $\mathcal{G}_B$ for this equation on a~Riemann surface of arbitrary genus.

\begin{Theorem}
Consider the monodromy map
\begin{equation*}\mathcal{F}_{(S_B-2Q_1/\hbar)}\colon \ \mathcal{Q}_{g,n}[\mathbf{r}/\hbar] \xrightarrow[]{} {\rm CV}_{g,n}[\mathbf{m}(\hbar)]\end{equation*}
of equation \eqref{19} between the moduli space $\mathcal{Q}_{g,n}[\mathbf{r}/\hbar]$ of pairs $\big(\mathcal{C}, Q/\hbar^2\big)$ and the symplectic leaf ${\rm CV}_{g,n}[\mathbf{m}(\hbar)]$ of the ${\rm PSL}(2, \CC)$ character variety, where each $m_j(\hbar)$ is related to $r_j$ via $\eqref{7},\eqref{8}$ by
\begin{equation*}
\frac{r_j}{\hbar}=\pm \left[\frac{\log m_j}{2 \pi {\rm i} } \left(\frac{\log m_j}{2 \pi {\rm i} }-1 \right)\right]^{1/2}. \end{equation*}
Then the map $\mathcal{F}_{(S_B-2Q_1/\hbar)}$ is a symplectomorphism if and only if the 1-form $\Theta_{(Q_1)}$, which is locally defined on $\mathcal{M}_{g,n}$ by \eqref{form}, is closed, $\delta\Theta_{(Q_1)}=0.$ The monodromy generating function $\mathcal{G}_B$ has the following asymptotics as $\hbar \xrightarrow{} 0^+$:
\begin{equation*}%\label{20}
\mathcal{G}_B(\hbar)=\frac{G_{-1}}{\hbar}+G_0+G_1 \hbar + O\big(\hbar^2\big).\end{equation*}
Here
\begin{equation*}%\label{21}
G_{-1} =\Hat{G}_{(Q_1)}+\sum^n_{j=1}\frac{\pi {\rm i} r_j}{2}
\int^{z^{(1)}_j}_{z^{(2)}_j}\frac{{Q}_1}{v} , \end{equation*} where there exists a local holomorphic function $\Hat{G}_{(Q_1)}$ on $\mathcal{M}_{g,n}$, such that
\begin{equation*}\delta\Hat{G}_{(Q_1)}=\Theta_{(Q_1)}.\end{equation*}
 Explicit form of $\Hat{G}_{(Q_1)}$ depends on the concrete choice of $Q_1$;
\begin{equation*}%\label{22}
G_{0}=-12 \pi {\rm i} \log \tau_B|_{r} -\sum^n_{j=1}\frac{\pi {\rm i} r_j}{2} \int^{z^{(1)}_j}_{z^{(2)}_j}\left(qv+\frac{1}{4r^2_j}v \right)+\sum^n_{j=1}
\pi {\rm i} r_j \binom{\frac{1}{2}}{2}\int^{z^{(1)}_j}_{z^{(2)}_j}\frac{{Q^2_1}}{v^{3}},
\end{equation*}
here\footnote{Here and futher the binomial coefficient is defined by \textit{$\binom{\frac{1}{2}}{k}:=\frac{1/2(1/2-1)\cdots (1/2-k+1)}{k (k-1) \cdots 1}$}.}
$\log \tau_B|_{r}$ is the Bergman tau-function defined on stratum of the moduli space of quadratic differentials with $n$ second-order poles with fixed biresidues $($see~{\rm \cite{Bertola_2020}} for the definition and main properties. In fact, $\tau_B$ could be viewed as a section of the determinant line bundle of the Hodge vector
bundle over the moduli space $\mathcal{Q}_{g,n}[\mathbf{r}])$, $q(x)$ is a meromorphic function on $\mathcal{C}$ given by
\begin{equation}\label{47}q(x)=\frac{S_B-S_v}{2v^2}, \end{equation}
where $S_v$ is the Schwarzian projective connection
\begin{equation*}
S_v(\xi(x))=\bigg\{\int^x v, \xi(x) \bigg\}({\rm d} \xi(x))^2,
 \end{equation*}
$(\{\cdot, \xi\}$ is the Schwarzian derivative, $\xi$ is a local coordinate on $\mathcal{C})$;
\begin{gather*}%\label{23}
G_1=-\sum^{4g-4+2n}_{i=1} \frac{5 \pi {\rm i}}{72}\underset{x_i}{\rm res}\left(\frac{Q_1/v}{\int^x_{x_i}v} \right)+\sum^{n}_{j=1}\frac{\pi {\rm i} r_j }{4}\int^{z^{(1)}_j}_{z^{(2)}_j} q\frac{Q_1}{v} \\
\hphantom{G_1=}{}
+\sum^{n}_{j=1}\frac{\pi {\rm i}}{16{r}_j}\int^{{z}^{(1)}_j}_{{z}^{(2)}_j}\frac{Q_1}{v}+\sum^n_{j=1}\pi {\rm i} r_j\binom{\frac{1}{2}}{3}
 \int^{z^{(1)}_j}_{z^{(2)}_j}\frac{Q^3_1}{v^{5}},
 \end{gather*}
where the first sum runs over the branch points $(x_i)$ of the double cover.
\end{Theorem}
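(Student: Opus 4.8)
This should follow at once from Theorem~\ref{Th_1}. For fixed $\hbar>0$, equation \eqref{19} is of the form \eqref{1}, \eqref{pot} with base projective connection $S=S_B-2Q_1/\hbar$ and quadratic differential $Q/\hbar^2$; since $Q_1$ has at most simple poles, $S-S_B=-2Q_1/\hbar$ is exactly a family of quadratic differentials with simple poles of the type governing \eqref{form}. The construction of the $1$-form $\Theta$ is linear in the quadratic differential, so $\Theta_{(S-S_B)}=-\tfrac{2}{\hbar}\Theta_{(Q_1)}$, and hence $\delta\Theta_{(S-S_B)}=0$ if and only if $\delta\Theta_{(Q_1)}=0$. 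Theorem~\ref{Th_1} then yields the symplectomorphism statement, and the local function $\Hat{G}_{(Q_1)}$ with $\delta\Hat{G}_{(Q_1)}=\Theta_{(Q_1)}$ exists precisely because of this closedness.

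\textbf{WKB setup.} Writing a solution of \eqref{19} as $\phi=\exp\!\big(\int^x P\big)$ turns the equation into the Riccati equation $P'+P^2+\tfrac12 S_B-Q_1/\hbar-Q/\hbar^2=0$. I expand $P=\hbar^{-1}\sum_{k\ge0}\hbar^k p_k$ and solve recursively, obtaining $p_0=v$ (with $v^2=Q$), $p_1=(Q_1-v')/(2v)$, and $p_2,p_3$ from the standard recursion $2v\,p_k=-\big(\sum_{0<i<k}p_ip_{k-i}+p_{k-1}'+[U]_{k-2}\big)$ with the $Q_1$ and $\tfrac12 S_B$ source terms inserted at the appropriate orders. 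The crucial observation is that, dropping the derivative and $S_B$ contributions, the classical momentum is $\hbar^{-1}\sqrt{Q+\hbar Q_1}=\hbar^{-1}v\sum_{k\ge0}\binom{1/2}{k}(\hbar Q_1/Q)^k$, so that the parts of $P$ odd under the sheet involution $\mu$ and proportional to powers of $Q_1$ are exactly $\binom{1/2}{k}Q_1^k/v^{2k-1}$ at order $\hbar^{k-1}$; these reproduce the $\int Q_1/v$, $\binom{1/2}{2}\int Q_1^2/v^3$ and $\binom{1/2}{3}\int Q_1^3/v^5$ terms in $G_{-1},G_0,G_1$. The remaining ``quantum'' corrections come from the Schwarzian source: a short computation gives $p_1^2+p_1'+\tfrac12 S_B=\tfrac12(S_B-S_v)+\,(Q_1\text{-terms})$, so the $Q_1$-independent part of $p_2$ equals $-\tfrac12 qv$ with $q$ as in \eqref{47}, producing the $\int qv$ term.

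\textbf{Assembling the generating function.} Using the identification of the homological shear coordinates with Voros symbols \cite{bertola2021wkb, AlBrid, voros_alegr, Voros}, the pulled-back coordinates expand as $\rho_\gamma=\oint_\gamma P^{\mathrm{odd}}=\hbar^{-1}\oint_\gamma p_0+\oint_\gamma p_1^{\mathrm{odd}}+\hbar\oint_\gamma p_2^{\mathrm{odd}}+\hbar^2\oint_\gamma p_3^{\mathrm{odd}}+\cdots$, while $A_j=\hbar^{-1}\oint_{a^-_j}v$ and $B_j=\hbar^{-1}\oint_{b^-_j}v$. I substitute these into $\delta\mathcal{G}_B=\sum_j(\rho_{b^-_j}\delta\rho_{a^-_j}-\rho_{a^-_j}\delta\rho_{b^-_j})-\sum_j(B_j\delta A_j-A_j\delta B_j)$. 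The order-$\hbar^{-2}$ terms cancel because the leading term of $\rho_\gamma$ coincides with the period coordinate; closedness of $\Theta_{(Q_1)}$ from Part~1 then guarantees that the resulting $\hbar$-series is exact order by order, so the coefficients of $\hbar^{-1},\hbar^{0},\hbar^{1}$ integrate to $G_{-1},G_0,G_1$. Collecting the $p_k^{\mathrm{odd}}$ contributions yields the explicit integrals, and the term $-12\pi {\rm i}\log\tau_B|_r$ arises when the bilinear combination of the second correction $p_2\propto qv$ over the whole symplectic basis is resummed by the homological variational formula for the Bergman tau-function (cf.\ \cite{Bertola_2020}).

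\textbf{Main obstacle.} The genuinely delicate step is the regularization of the WKB period integrals, which diverge both at the double poles $z_j$ and at the simple zeros $x_i$ of $Q$. Near $z_j$ the divergences are removed by the endpoint regularization \eqref{76}, and the subleading pole structure of $v$ (with biresidue $r_j$) is what generates the $\tfrac{1}{4r_j^2}v$ and $\tfrac{1}{16r_j}$ corrections. Near the turning points the higher densities $p_2^{\mathrm{odd}},p_3^{\mathrm{odd}}$ acquire poles in the local coordinate $\int^x_{x_i}v\sim(x-x_i)^{3/2}$, whose residues produce the branch-point term $\mathrm{res}_{x_i}\big((Q_1/v)/\!\int^x_{x_i}v\big)$ with coefficient $\tfrac{5\pi {\rm i}}{72}$. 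I expect that establishing the exact tau-function identification and controlling these turning-point residues — rather than the parity bookkeeping or the binomial resummation, which are routine once recognized — will be the hard part, and it is here that the defining properties of $\tau_B$ from \cite{Bertola_2020} and a careful local analysis near the $x_i$ are required.
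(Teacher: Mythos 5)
Your proposal follows essentially the same route as the paper: Part 1 is obtained exactly as you say, by applying the admissibility criterion (Theorem~\ref{Th_1}/Theorem~\ref{Th_5}) to $S=S_B-2Q_1/\hbar$ and using linearity of $\Theta$; the expansion is then built from the Riccati recursion, the odd part of the WKB momentum, and the Voros-symbol identification of the shear coordinates; the $\tau_B$ term and the turning-point residues arise where you place them. One step you dismiss as routine is, however, the technical core of the paper's argument and is genuinely not automatic: the mixed pairings $\left\langle \oint w_1,\delta\oint w_2\right\rangle$ in which neither differential is $v$ (e.g.\ $\left\langle \oint Q_1/v,\delta\oint Q_1^2/v^3\right\rangle$ or $\left\langle \oint qv,\delta\oint Q_1/v\right\rangle$) cannot be evaluated by a naive Riemann bilinear identity, because $\delta$ does not commute with the period integrals and the moduli-derivatives $\delta w/\delta A_i$ acquire jump discontinuities across the $b$-cycles. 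The paper's Lemma~\ref{Lem_1} is devoted precisely to this, reducing such pairings to $\left\langle \oint \frac{w_1w_2}{v},\delta\oint v\right\rangle$ plus a boundary integral over $\partial\hat{\mathcal{C}}_0$; only after that do the Vandermonde-type identity $\sum_{l=0}^{k}\binom{\frac{1}{2}}{l}\binom{\frac{1}{2}}{k-l}=0$ (for $k\geq 2$) and the vanishing of the branch-point residues of $\sqrt{Q+\hbar Q_1}\int^x\delta\sqrt{Q+\hbar Q_1}$ make your ``binomial resummation'' close up into the $\binom{\frac{1}{2}}{k}\int_{z_j^{(2)}}^{z_j^{(1)}}Q_1^k/v^{2k-1}$ terms (note also that the coefficient of $Q_1^k/v^{2k-1}$ in the individual density $v_{k-1}$ matches $\binom{\frac12}{k}$ only modulo exact differentials, and the final coefficient assembles from several distinct pairings). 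Your plan would need this lemma, or an equivalent device, spelled out; with it, the rest of your outline (the Euler-field/homogeneity computation giving the $\frac{1}{4r_j^2}v$ correction, the residue of $qv\sim\frac{5}{12\hat{\xi}^4}{\rm d}\hat{\xi}$ at the turning points giving $\frac{5\pi{\rm i}}{72}$, and the endpoint regularization \eqref{76}) matches the paper's proof.
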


This paper is organized as follows: in Section~\ref{section2}, we describe the geometry and main objects associated with the canonical double cover, Section~\ref{section3} is devoted to the symplectic properties of the monodromy map. In Section~\ref{section4}, we perform the generalized WKB expansion of the monodromy generating function and compute its first few terms.

\section{Spaces of quadratic differentials with second-order poles}\label{section2}
\subsection{Canonical double cover}

Denote by $\mathcal{Q}_{g,n}$ the moduli space of meromorphic quadratic differentials on Riemann surface $\mathcal{C}$ of genus $g$ with $n$ double poles $(z_1, \dots, z_n)$ and $4g-4+2n$ simple zeroes $(x_1,\dots,x_{4g-4+2n})$. We assume that any quadratic differential $Q \in \mathcal{Q}_{g,n}$ has the following asymptotics near poles:
\begin{gather*}%\label{24}
Q(x) \sim \left(\frac{r_j^2}{\xi_j^2} +O(\xi_j^{-1})\right)(d \xi_j)^2,
\end{gather*}
as $x \xrightarrow{} z_i,$ here $\xi_j$ is any local coordinate near pole $z_j$.
For all such $Q$ the equation $v^2=Q$ in the cotangent bundle $T^*\mathcal{C}$ defines double covering
$\pi\colon \hat{\mathcal{C}}\xrightarrow{}\mathcal{C},$ branched at zeroes of $Q.$ The covering surface $\hat{\mathcal{C}}$ possesses a natural holomorphic involution $\mu\colon \hat{\mathcal{C}} \xrightarrow{}\hat{\mathcal{C}}.$ The differential $v$ is single-valued and meromorphic on $\hat{\mathcal{C}}$, skew-symmetric under the involution: $v(x^{\mu})=-v(x)$. $v$ has double zeroes at branch points $(x_j)^{4g-4+2n}_{j=1}$ and simple poles at $2n$ preimages of $(z_j)^{n}_{j=1}$ denoted by $z^{(1)}_j$ and $z^{(2)}_j$ with residues $r_j$ and $-r_j$, respectively. The Riemann--Hurwitz formula implies the genus of the covering surface $\hat{\mathcal{C}}$ equals
\begin{equation*}%\label{25}
\hat{g}=4g-3+n.
\end{equation*}
We decompose the first homology group of $H_1\big(\hat{\mathcal{C}}\backslash \big\{z^{(1)}_j, z^{(2)}_j \big\}^n_{j=1}, \Z \big)$ into
 \begin{equation*}%\label{26}
 H_1\big(\hat{\mathcal{C}}\backslash \big\{z^{(1)}_j, z^{(2)}_j \big\}^n_{j=1}, \Z \big)=H_+ \oplus H_- ,
 \end{equation*}
which are the $+1$ and $-1$ eigenspaces of the map, induced by the involution $\mu.$ $\dim(H_+)=2g+n-1$ and $\dim(H_-)=6g-6+3n:=2g^-+n $.
The canonical basis of $H_1\big(\hat{\mathcal{C}}\backslash \big\{z^{(1)}_j, z^{(2)}_j \big\}^n_{j=1}, \Z \big)$ can be chosen as follows:
\begin{equation*}%\label{27}
\big\{a_k, a^\mu_k, \tilde{a}_l, b_k, b^\mu_k, \tilde{b}_l, t_j, t^\mu_j \big\}, \qquad k=1,\dots,g, \quad l=1,\dots,2g-3+n, \quad j=1,\dots,n.
\end{equation*}
Here
$\big\{ a_k, b_k, a^\mu_k, b^\mu_k \big\} $ is a lift of the canonical basis of cycles $\{ a_k ,b_k \}$ from $\mathcal{C}$ to $\hat{\mathcal{C}}$ such that
\begin{equation*}%\label{28}
\mu_*a_k=a^\mu_k, \qquad \mu_*b_k=b^\mu_k.
\end{equation*}
The cycles $\big\{\tilde{a}_l, \tilde{b}_l\big\}$ are skew-symmetric double covers of the branch cuts between corresponding pairs of zeroes of differential $Q$ with the condition
\begin{equation*}
 \mu_*\tilde{a}_l+\tilde{a}_l=\mu_*\tilde{b}_l+\tilde{b}_l=0. \end{equation*}
Cycles $\big\{t_j, t_j^\mu \big\}$ are two lifts of a small positively oriented loop~$t_j$ around~$z_j$ on~$\mathcal{C}$. On double cover~$\hat{\mathcal{C}}$, $t_j$~denotes a positively oriented loop encircling $z^{(1)}_j$, while $t_j^\mu$ is a small loop around~$z^{(2)}_j$. In the group $H_{+}$ there is a single relation given by
\begin{equation*}%\label{29}
\sum^{n}_{j=1}\big(t_j+t^\mu_j\big)=0. \end{equation*}
The classes
\begin{equation*}%\label{30}
a^+_k=\frac{1}{2}\big(a_k+a^\mu_k\big), \qquad b^+_k=\frac{1}{2}\big(b_k+b^\mu_k\big), \qquad t^+_j=\frac{1}{2}\big(t_j+t^\mu_j\big)\end{equation*}
 generate the group $H_+$ with the intersection index
\begin{equation*}%\label{31}
a^+_i \circ b^+_k=\frac{1}{2}\delta_{ik},
\end{equation*}
while $t^+_j$'s have zero intersection with all cycles.
The following cycles
\begin{gather}\label{32}a^-_k=\frac{1}{2}\big(a_k-a^\mu_k\big), \qquad b^-_k=\frac{1}{2}\big(b_k-b^\mu_k\big),\\
\label{33}a^-_l=\frac{1}{\sqrt{2}}\tilde{a}_l, \qquad b^-_l=\frac{1}{\sqrt{2}}\tilde{b}_l,\\
%\label{34}
t^-_j=\frac{1}{2}\big(t_j-t^\mu_j\big)\nonumber
\end{gather}
are the generators of the group $H_-$. Similarly, their intersection index is
\begin{equation*}%\label{35}
a^-_i \circ b^-_k=\frac{1}{2}\delta_{ik}
\end{equation*}
and all other intersections are zero.

The differential $v$ is used to introduce
a system of local coordinates on both $\mathcal{C}$ and $\hat{\mathcal{C}}.$ If $x$ is a~point of $\hat{\mathcal{C}}$ which does not coincide with branch points $\{x_i\}$ and poles $\big\{ z^{(1)}_i, z^{(2)}_i \big\}$ then the local coordinate (also called ``flat'' coordinate) near $x$ is given by
\begin{equation}\label{36}
z(x)=\int^x_{x_1}v,
\end{equation}
$x_1$ is a chosen ``first'' zero of $v.$ $z(x)$ could also be used as a coordinate on $\mathcal{C}$ outside branch points and poles. Notice that in this case $v=dz.$ Near a branch point $x_i$ on $\hat{\mathcal{C}}$ the \textit{distinguished} local coordinate is given by
\begin{equation}\label{37}
\hat{\xi}_i(x)=\left(\int^x_{x_i}v\right)^{\frac{1}{3}}.
\end{equation}
On the curve $\mathcal{C}$ the local coordinate near $x_i$ is
\begin{equation}\label{38}{\xi}_i(x)=\hat{\xi}^2_i(x)=\left(\int^x_{x_i}v\right)^{\frac{2}{3}}. \end{equation}
Near a double pole $z_i$ on $\mathcal{C}$ and simple poles $\big(z^{(1)}_i, z^{(2)}_i\big)$ on $\hat{\mathcal{C}}$ the local coordinate is
\begin{equation}\label{39}\zeta_i(x)=\exp \left(\frac{1}{ r_i} \int^x_{x_1}v\right). \end{equation}
To define local coordinates near the poles uniquely, we on $\mathcal{C}$ connect first zero $x_1$ with a chosen first double pole~$z_1$ by a branch cut, then connect $z_1$ with the remaining poles $\{z_i\}^n_{i=2}$ forming a tree. Then we lift this tree to $\hat{\mathcal{C}}$ via $\pi^{-1}.$

\subsection{Period coordinates. Homological symplectic form}
The dimension of $H_-$ coincides with the dimension of $\mathcal{Q}_{g,n}$. We introduce the following set of \textit{period} (homological) local coordinates on $\mathcal{Q}_{g,n}$:
\begin{equation*}%\label{40}
A_j=\oint_{a^-_j}v, \qquad B_j=\oint_{b^-_j}v, \qquad 2 \pi r_k= \oint_{t^-_k}v. \end{equation*}
We fix the values $(r_k)$ and denote the corresponding stratum of moduli space by $\mathcal{Q}_{g,n}[\mathbf{r}]$. Then $(A_j, B_j)^{g^-}_{j=1}$ become local coordinates on $\mathcal{Q}_{g,n}[\mathbf{r}].$
There is a natural Poisson structure on $\mathcal{Q}_{g,n}$ between periods of $v$ induced by the intersection index of the corresponding cycles $s_1, s_2 \in H_{-}$:
\begin{equation}\label{41}\bigg\{\int_{s_1}v, \int_{s_2}v \bigg\}= s_2 \circ s_1 .\end{equation}
This Poisson structure is degenerate, with Casimir functions $r_1,\dots,r_n.$ The stratum $\mathcal{Q}_{g,n}[\mathbf{r}]$ becomes a symplectic leaf for given bracket.
It allows us to introduce the following symplectic form on $\mathcal{Q}_{g,n}[\mathbf{r}]$:
\begin{equation*}%\label{42}
\Omega_{{\rm hom}}=\sum^{g^-}_{j=1}2 \delta B_j \wedge \delta A_j.
\end{equation*}

\subsection{Variational formulas}
In this section, we introduce several meromorphic functions associated with surfaces~$\hat{\mathcal{C}}$ and~$\mathcal{C}.$ Then we recall their variational formulas with respect to period coordinates on the moduli space~$\mathcal{Q}_{g,n}$. Let us denote by $(u_j)^{g}_{j=1} \in H^{1,0}({\mathcal{C}})$ the basis of holomorphic differentials normalized over $a$-cycles of $H_1({\mathcal{C}})$.
\begin{itemize}\itemsep=0pt
\item The matrix $\Omega_{ij}=\oint_{b_j}u_i$ represents the $g \times g$ period matrix of the base surface $\mathcal{C}$.

\item Meromorphic functions $f_j\colon \hat{\mathcal{C}} \xrightarrow{} \CC P^1$ given by
\begin{equation*}%\label{43}
f_j(x)=\frac{u_j(x)}{v(x)}, \qquad j=1,\dots,g.
\end{equation*}
These functions are skew-symmetric under the involution and generically (when zeroes of~$u_j(x)$ and~$v(x)$ differ) have simple poles at the branch points $(x_j)^{4g-4+2n}_{j=1}$.

\item The meromorphic function $q\colon {\mathcal{C}} \xrightarrow{} \CC P^1 $
\begin{equation} \label {q_form} q(x)=\frac{S_B-S_v}{2v^2}, \end{equation}
where $S_v$ is the Schwarzian projective connection defined by
\[
S_v(\xi(x))=\left\{\int^x_{p}v, \xi(x) \right\}({\rm d} \xi(x))^2
\]
 in any local coordinate $\xi$. Here
 \[
 \{f, \xi \}=\left(\frac{f''(\xi)}{f'(\xi)}\right)' -\frac{1}{2}\left(\frac{f''(\xi)}{f'(\xi)}\right)^2
 \] -- Schwarzian derivative. The pullback of $q$ to $\hat{\mathcal{C}}$ has residueless 6-order poles at~$(x_j)$.

\item The meromorphic function $b\colon \hat{\mathcal{C}} \times \hat{\mathcal{C}} \xrightarrow{} \CC P^1 $
\begin{equation*}%\label{44}
b(x,y)=\frac{B(x,y)}{v(x)v(y)},
\end{equation*}
skew-symmetric in both arguments, with simple poles at $(x_j)$ on $\hat{\mathcal{C}}$ with respect to each argument. Outside the branch points on the diagonal, it has a double pole with the following asymptotics in the coordinate $z(x)=\int^x_{x_1}v$:
\begin{equation*}%\label{45}
b(x,y)=\frac{1}{(z(x)-z(y))^2}+\frac{1}{3}q(x)+\frac{1}{6}q_z(x)(z(y)-z(x))+\cdots, \qquad y \xrightarrow{} x.
\end{equation*}

\item The meromorphic function $h\colon {\mathcal{C}} \times {\mathcal{C}} \xrightarrow{} \CC P^1 $,
\begin{equation*}%\label{46}
h(x,y)=\frac{B^2(x,y)}{Q(x)Q(y)}=b^2(x,y).
\end{equation*}
Its pullback to $\hat{\mathcal{C}} \times \hat{\mathcal{C}}\xrightarrow{} \CC P^1$ is symmetric and has residueless double poles at $(x_j)$ in both arguments.
\end{itemize}

It is convenient to introduce the periods $\mathcal{P}_{s_i}=\oint_{s_i}v$ for $s_i$ being an element from the canonical basis of~$H_{-}$:
\begin{equation*}%\label{48}
\{s_i\}^{\dim (H_-)}_{i=1}= \big\{\big\{a^-_j, b^-_j \big\}^{g^-}_{j=1}, \big\{t^-_k \big\}^{n}_{k=1} \big\}.
\end{equation*}
The dual basis $\{s^*_i\}$ is defined by the condition
\begin{equation*}%\label{49}
s^*_i \circ s_j = \delta_{ij}
\end{equation*}
and is given by
\begin{equation*}%\label{50}
\{s^*_i\}^{\dim (H_-)}_{i=1}= \big\{\bigl\{-2b^-_j, 2a^-_j \bigr\}^{g^-}_{j=1}, \big\{2\kappa^-_k \big\}^{n}_{k=1} \big\},
\end{equation*}
here $\kappa^-_j$ is a $1/2$ of the contour connecting poles $z^{(1)}_j$ with $z^{(2)}_j$ and skew-symmetric under the involution, not intersecting other contours.
The following variational formulas were derived in~\cite{Bertola_2017} for a holomorphic differential $v$. In our framework, for $v$ being meromorphic the same formulas apply since the proof does not rely on the presence of poles of $v$. Note that the variations of the functions depending on the point $x \in \hat{\mathcal{C}}$ are computed assuming that the coordinate $z(x)$ is independent of the moduli.

\begin{Proposition}
For arbitrary basis $\{s_j\}^{6g-6+2n}_{j=1}$ of $H_-$ and its dual basis $\{ s^{*}_{j} \}^{6g-6+2n}_{j=1}$ the following formulas hold on $\mathcal{Q}_{g,n}[\mathbf{r}]$:
\begin{gather}\label{51}\frac{\delta \Omega_{ij}}{\delta \mathcal{P}_{s}}=\frac{1}{2}\oint_{s^*}{f}_i {f}_j v,\\
\label{52}\frac{\delta f_j(x)} {\delta \mathcal{P}_s}\Big|_{z(x)={\rm const}}=\frac{1}{4 \pi {\rm i}}\oint_{s^*}{f}_j(t){b}(x,t)v(t),\\
\label{53}\frac{\delta b(x,y)}{\delta \mathcal{P}_s}\Big|_{z(x),z(y)={\rm const}}=\frac{1}{4 \pi {\rm i}}\oint_{s^*}{b}(x,t){b}(t,y)v(t),\\
\label{54}\frac{\delta q(x)}{\delta \mathcal{P}_s}\Big|_{z(x)={\rm const}}=\frac{3}{4 \pi {\rm i}}\oint_{s^*}{h}(x,t)v(t).\end{gather}
\end{Proposition}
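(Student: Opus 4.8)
The plan is to reduce all four formulas to a single master identity, the variation of the canonical bidifferential, and then obtain \eqref{51}, \eqref{52}, \eqref{54} from it by purely algebraic manipulations. Dividing \eqref{53} by $v(x)v(y)$ and using that holding $z(x)$ fixed is the same as holding $v={\rm d}z$ fixed, one sees that \eqref{53} is equivalent to the fundamental formula
\[
\frac{\delta B(x,y)}{\delta\mathcal{P}_s}\Big|_{z}=\frac{1}{4\pi{\rm i}}\oint_{s^*}\frac{B(x,t)B(t,y)}{v(t)},
\]
which I take as the starting point. I would first record the integral representations $u_j(x)=\frac{1}{2\pi{\rm i}}\oint_{b_j}B(x,\cdot)$ and $\Omega_{ij}=\oint_{b_i}u_j$ of the normalized differentials and the period matrix. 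Applying $\delta/\delta\mathcal{P}_s$, swapping the order of integration, and collapsing one contour via $\oint_{b_j}B(t,\cdot)=2\pi{\rm i}\,u_j(t)$ yields $\delta u_j(x)=\frac{1}{4\pi{\rm i}}\oint_{s^*}f_j(t)B(x,t)$, which is \eqref{52} after dividing by $v(x)$ (since $B(x,t)/v(x)=b(x,t)v(t)$); a second application of the same reduction gives \eqref{51}, using $f_i f_j v=u_iu_j/v$. Finally \eqref{54} follows by sending $y\to x$ in \eqref{53}: the singular part $\big(z(x)-z(y)\big)^{-2}$ is moduli-independent at fixed $z$, so only the finite term $\frac{1}{3}q(x)$ of the expansion $b(x,y)=\big(z(x)-z(y)\big)^{-2}+\frac{1}{3}q(x)+\cdots$ survives, while on the right $b(x,t)b(t,y)\to b^2(x,t)=h(x,t)$.

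The heart of the argument is the master formula itself. Here I would use the Rauch/Ahlfors variational technique: the derivative $\delta/\delta\mathcal{P}_s$ taken at fixed flat coordinate is realized by the elementary ``cut-and-reglue'' (shear) deformation along the dual cycle $s^*$, which is encoded by a Beltrami differential supported on $s^*$. Inserting this into the standard formula for the first variation of the Bergman kernel under a change of complex structure localizes the resulting area integral onto the contour $s^*$, the factor $1/v(t)$ being produced by ${\rm d}z=v$. The identity is then pinned down by checking that both sides are symmetric bidifferentials, holomorphic on the diagonal (the diagonal singularity of $B$ is frozen at fixed $z$), with vanishing $a$-periods, the contour $s^*$ recording precisely the intersection of the homology cycles with the shear locus. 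This is exactly the computation of \cite{Bertola_2017}.

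The point that requires care for the present generalization -- and what I regard as the main obstacle -- is to confirm that the simple poles of $v$ at the preimages $z^{(1)}_j,z^{(2)}_j$ of the punctures do not spoil any step. First, we work on the fixed-biresidue stratum $\mathcal{Q}_{g,n}[\mathbf{r}]$, so the shear deformations $\delta/\delta\mathcal{P}_s$ preserve the residues and the entire pole structure of $v$, while the cycles $s^*$ lie in $H_-$ of the \emph{punctured} cover and never meet the diagonal singularity of $B$. Decisively, every integrand appearing in \eqref{51}--\eqref{54} carries a net factor $1/v$: the integrands equal $u_iu_j/v$, then $u_j(t)B(x,t)/\big(v(x)v(t)\big)$, then $B(x,t)B(t,y)/\big(v(x)v(y)v(t)\big)$, and finally $B^2(x,t)/\big(v(x)^2v(t)\big)$, each of which \emph{vanishes} at the simple poles of $v$ because $1/v$ has a simple zero there while the numerators remain holomorphic. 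Hence there is neither a convergence problem nor a spurious residue at the punctures -- even for the dual contour $\kappa^-_k$ joining $z^{(1)}_k$ to $z^{(2)}_k$ -- so the holomorphic derivation of \cite{Bertola_2017} transfers verbatim and the four formulas hold on $\mathcal{Q}_{g,n}[\mathbf{r}]$.
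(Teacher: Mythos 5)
Your proposal is correct and follows essentially the same route as the paper, which gives no independent proof here but simply cites \cite{Bertola_2017} and remarks that the derivation there ``does not rely on the presence of poles of $v$.'' Your reduction of \eqref{51}, \eqref{52}, \eqref{54} to the variational formula for $B(x,y)$ and, in particular, your explicit check that every integrand carries a net factor $1/v$ and hence vanishes at $z^{(1)}_j$, $z^{(2)}_j$ is precisely the justification of that one-line remark, so the argument of \cite{Bertola_2017} indeed transfers as claimed.
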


\subsection{Bergman tau-function}
The Bergman tau-function $\tau_B$ on the moduli space of the curves was originally defined as a~higher genus generalization of the Dedekind eta function on elliptic surface. It appears in various context -- from isomonodromy deformations to
spectral geometry, Frobenius manifolds and random matrices, see the review~\cite{korotkin2020bergman}. In the setting of moduli
spaces of quadratic differentials the Bergman tau-function was originally discussed in \cite{korotkin2013tau} in holomorphic case, later in~\cite{Bertola_2020} for meromorphic quadratic differentials with second-order poles.

In our framework, we consider a moduli space $\mathcal{Q}_{g,n}$ of quadratic meromorphic differentials with second-order poles. The explicit formula and main properties of $\tau_B$ were outlined in~\cite{bertola2021wkb}. In the present context, we only need its defining differential equations and transformation under rescaling of the differential $Q$ by a constant.
\begin{Theorem}[\cite{bertola2021wkb}]
The Bergman tau-function $\tau_B$ satisfies the following system of differential equations on $\mathcal{Q}_{g,n}$:
\begin{equation}\label{55} \frac{\delta \log \tau_B}{\delta A_j }=\frac{1}{12 \pi {\rm i}}\oint_{b^-_i}qv, \qquad \frac{\delta \log \tau_B}{\delta B_j }=-\frac{1}{12 \pi {\rm i}}\oint_{a^-_j}qv,\end{equation}
for j=$1,\dots,3g-3+n$ and
\begin{equation}\label{56}\frac{\delta \log \tau_B}{\delta (2 \pi {\rm i} r_k) }=-\frac{1}{12 \pi {\rm i}}\int_{\kappa^-_k}\left(qv+\frac{1}{4r^2_k}v \right),\end{equation}
for $k=1,\dots,n$.
\end{Theorem}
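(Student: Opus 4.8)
The plan is to treat \eqref{55}--\eqref{56} as a first-order system for $\log\tau_B$ and to proceed in two stages: first prove that the one-form $\varpi$ on $\mathcal{Q}_{g,n}$ assembled from the right-hand sides is closed, so that it integrates to a well-defined function, and then identify that function with the $\tau_B$ of \cite{Bertola_2020} by fixing the integration constant on a sub-stratum where the equations are already known. The engine of the first stage is the Rauch-type variational formula \eqref{54}, which expresses $\delta q/\delta\mathcal{P}_s$ as a contour integral of the symmetric function $h(x,t)=b^2(x,t)$ against $v$. Since each component of $\varpi$ is itself a period integral of $qv$ (respectively a regularized integral of $qv+\tfrac{1}{4r_k^2}v$), differentiating a component along a second period and inserting \eqref{54} reduces every mixed second derivative to a double contour integral of $h(x,t)\,v(x)v(t)$.

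In the $(A_j,B_j)$ sector closedness is then immediate. Differentiating $\oint_{b_j^-}qv$ along $\mathcal{P}_{s'}$ and using \eqref{54} yields $\tfrac{1}{16\pi^2}\iint_{b_j^-\times(s')^*}h(x,t)\,v(x)v(t)$ up to the numerical factor; because $h(x,t)=h(t,x)$ and the order of the two integrations may be exchanged, the expression is symmetric under swapping the two moduli directions. Comparing with the analogous derivative of the other component shows the two mixed partials coincide, so the corresponding block of $\delta\varpi$ vanishes. This part is purely formal and does not see the poles of $v$.

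The main obstacle is the $r_k$-direction \eqref{56} together with its regularization, since here $v$ has simple poles at the two preimages $z_k^{(1)},z_k^{(2)}$ with residues $\pm r_k$. To control the endpoint behaviour I would compute the local expansion of $q$ near a pole: in the coordinate \eqref{39} one has $v=r_k\,{\rm d}\zeta_k/\zeta_k$, and a direct Schwarzian computation gives $S_v\sim\tfrac{1}{2\zeta_k^{2}}({\rm d}\zeta_k)^2$ while the pullback of $S_B$ stays regular, so by \eqref{q_form} $q\to-\tfrac{1}{4r_k^{2}}$ and hence $qv\sim-\tfrac{1}{4r_k}\,{\rm d}\zeta_k/\zeta_k$. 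The counterterm $\tfrac{1}{4r_k^{2}}v$ therefore cancels exactly this simple pole, rendering $\int_{\kappa_k^-}\!\big(qv+\tfrac{1}{4r_k^{2}}v\big)$ convergent. Establishing closedness of the relations that pair an $r_k$-direction with an $(A,B)$-direction is the delicate step: deforming the integration contour $\kappa_k^-$ across the poles $z_k^{(\ell)}$, or differentiating the regularized integrand, picks up residue contributions at $z_k^{(\ell)}$ whose bookkeeping must be carried out carefully and shown to be consistent with the $\tfrac{1}{4r_k^2}$ regularization. This residue tracking near the poles is where essentially all the new content of the second-order-pole case resides.

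Once $\varpi$ is closed, its local potential is determined up to an additive constant. I would fix this constant, together with the overall normalization $\tfrac{1}{12\pi{\rm i}}$, by restricting to the holomorphic and simple-pole cases where the corresponding equations are already established in \cite{korotkin2013tau, Bertola_2017}, and then extend the identification to all of $\mathcal{Q}_{g,n}[\mathbf{r}]$ using the holomorphic dependence of both sides on the biresidues. With the pole analysis of the previous step in hand, this final matching is routine, and it is the localization at the poles $z_k^{(\ell)}$ — not the branch-point calculus, which is handled uniformly by \eqref{51}--\eqref{54} — that I expect to be the principal difficulty.
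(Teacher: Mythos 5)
First, a point of reference: the paper does not prove this theorem at all --- it is imported verbatim from \cite{bertola2021wkb} (see also \cite{Bertola_2020}), where it is established by differentiating the \emph{explicit formula} for $\tau_B$ (in terms of prime forms, theta functions and the distinguished local parameters at the zeros and poles of $Q$) using Rauch-type variational formulas. So there is no in-paper argument to compare against, and your proposal must be judged on its own. Your local analysis at the poles is correct and consistent with the computation the paper later performs for $G_0$: in the coordinate \eqref{39} one indeed gets $S_v\sim\tfrac{1}{2\zeta_k^2}({\rm d}\zeta_k)^2$, hence $qv\sim-\tfrac{1}{4r_k}\,{\rm d}\zeta_k/\zeta_k$, and the counterterm $\tfrac{1}{4r_k^2}v$ in \eqref{56} removes exactly that singularity. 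That part is sound.

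The gaps are elsewhere. (i) Even if carried out completely, your two-stage plan proves only that the system \eqref{55}--\eqref{56} is compatible and admits \emph{some} local potential; it does not show that the specific function $\tau_B$ defined in \cite{Bertola_2020} satisfies it, which is the actual content of the theorem. Your proposed fix --- pinning down the potential by restricting to the holomorphic and simple-pole cases of \cite{korotkin2013tau,Bertola_2017} and then ``extending by holomorphic dependence on the biresidues'' --- does not work: those cases live on different moduli spaces, not on sub-loci of $\mathcal{Q}_{g,n}[\mathbf{r}]$. As the pole order drops or $r_k\to 0$ the number of zeros of $Q$, the genus $\hat g$ of the canonical cover, and $\dim H_-$ all jump, so there is no holomorphic family in $\mathbf{r}$ along which to continue the identification; the degeneration is singular, not routine. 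A proof that avoids the explicit formula for $\tau_B$ cannot close this loop. (ii) The claim that closedness in the $(A_j,B_j)$ sector is ``immediate'' from the symmetry of $h$ is too quick. Differentiating a period $\oint_{b_j^-}qv$ along $A_i$ or $B_i$ in the flat coordinate produces, besides the term from \eqref{54}, corrections coming from the moduli-dependence of the cycles --- precisely the extra evaluation term $\tfrac{w}{v}(R_i)$ and the jump discontinuities of $\delta w/\delta A_i$ across $2b_i^-$ that the paper must track in the proof of Lemma \ref{Lem_1}. Moreover, $h(x,t)$ has double poles on the diagonal and at the branch points, and the cycles $a_i^-$, $b_j^-$ intersect, so exchanging the order of the two contour integrations picks up residue contributions that must be shown to cancel against those corrections. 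This bookkeeping, not the symmetry of $h$, is the real content of the compatibility statement, and it is missing from the proposal.
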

The function $\tau_B$ satisfies the following homogeneity property, which will be important in the present context:
\begin{equation*}%\label{57}
\tau_B(\mathcal{C}, \kappa Q) = \kappa^{ \frac{5(2g-2+n)}
{72}} \tau_B(\mathcal{C}, Q) .
\end{equation*}
Equivalently, defining the Euler vector field via
\begin{equation}\label{58}
E=\sum^{g^-}_{j=1}\left(A_j \frac{\delta}{\delta A_j}+B_j \frac{\delta}{\delta B_j}\right)+\sum^n_{j=1}r_j \frac{\delta}{\delta r_j}, \end{equation}
we have that
\begin{equation*}%\label{59}
E \log{\tau_B}=\frac{5(2g-2+n)}{72}.
\end{equation*}
We also notice that on the stratum $\mathcal{Q}_{g,n}[\mathbf{r}]$ the differential $\delta \log \tau_B|_{r}$ is given by
\begin{equation}\label{60}
\delta \log \tau_B |_{r}=-\frac{1}{12 \pi {\rm i}}\sum^{g_-}_{j=1} \bigg[\bigg(\oint_{a^-_j}qv \bigg)\delta B_j - \bigg(\oint_{b^-_j}qv \bigg)\delta A_j \bigg].
\end{equation}

\section{Monodromy map}\label{section3}

\subsection{Monodromy symplectomorphism and Goldman bracket}
Consider the monodromy map
\begin{equation}\label{62}
\mathcal{F}_{(S)}\colon \ \mathcal{Q}_{g,n} \xrightarrow[]{} {\rm CV}_{g,n}\end{equation}
for the equation~\eqref{1} defined by~\eqref{5}.
The Goldman bracket on ${\rm CV}_{g,n}[\mathbf{m}]$ is defined as follows~\cite{Goldman_1984}. For two arbitrary loops $\gamma$ and $\tilde{\gamma}$
\begin{equation}\label{63}
\{\operatorname{tr}M_\gamma, \operatorname{tr}M_{\tilde{\gamma}} \}_G=\frac{1}{2}\sum_{p \in \gamma \circ \tilde{\gamma}}\nu(p)(\operatorname{tr}M_{\gamma_p \tilde{\gamma}} - \operatorname{tr}M_{\gamma_p \tilde{\gamma}^{-1}} ),
\end{equation}
where the monodromy matrices $M_\gamma, M_{\tilde{\gamma}} \in {\rm PSL}(2, \CC) $; $\gamma_p \tilde{\gamma}$ and $\gamma_p \tilde{\gamma}^{-1}$
 are paths obtained by resolving the intersection point $p$ in two different ways (see \cite{Goldman_1984}); $\nu(p)=\pm 1 $ is the contribution of the point p to the
intersection index of $\gamma$ and $\tilde{\gamma}$.

The following theorem was stated in \cite{bertola2021wkb} and it is a natural extension of the results proven in~\cite{Bertola_2017} for holomorphic potentials and in \cite{korotkin2018periods} for potentials with simple poles.

\begin{Theorem}[\cite{bertola2021wkb}]\label{Th_4}
For the Bergman projective connection $S_{B}$ \eqref{12} chosen to be the base projective connection $S$ the monodromy map $\mathcal{F}_{(S_B)}$ \eqref{62} of equation \eqref{1} is Poisson. Namely, it preserves homological bracket \eqref{41} and minus the Goldman bracket \eqref{63} between traces of monodromy matrices.
\end{Theorem}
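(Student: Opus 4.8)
The argument is the $S=S_B$ base case of the circle of ideas behind Theorem~\ref{Th_1}, and I would prove it by a direct comparison of brackets rather than by invoking that theorem (whose perturbative proof is built on top of the present statement). Since the squared traces $\big(\operatorname{tr}M_\gamma\big)^2$ descend to $\mathrm{PSL}(2,\CC)$ and generate the ring of functions on $\mathrm{CV}_{g,n}$, by the Leibniz rule it suffices to prove, for every pair of loops $\gamma,\tilde\gamma$,
\[
\{\operatorname{tr}M_\gamma,\operatorname{tr}M_{\tilde\gamma}\}_{\hom}=-\{\operatorname{tr}M_\gamma,\operatorname{tr}M_{\tilde\gamma}\}_{G},
\]
where the left-hand side is computed through the chain rule and the defining relation \eqref{41}, $\{\mathcal P_{s},\mathcal P_{s'}\}=s'\circ s$. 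The first ingredient is a variational formula for the monodromy of \eqref{1}. Writing the equation as a first-order $2\times2$ system and perturbing the potential $U=\tfrac12 S_B-v^2$, first-order perturbation theory for the fundamental matrix, evaluated in the basis $\phi_1,\phi_2$ that diagonalizes $M_\gamma=\operatorname{diag}\big(\rho,\rho^{-1}\big)$, gives
\[
\delta\operatorname{tr}M_\gamma=\big(\rho-\rho^{-1}\big)\oint_\gamma\phi_1\phi_2\,\delta U.
\]
Here $\phi_1\phi_2$ is the single-valued symmetric bilinear formed from the $\gamma$-eigensolutions, and since each $\phi_i$ is a $\big({-}\tfrac12\big)$-differential while $\delta U$ is a quadratic differential, the integrand is a genuine $1$-form along $\gamma$.

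I would then rewrite this variation in the period coordinates $\mathcal P_s$. Because $S_B$ is fixed holomorphically through the canonical bidifferential \eqref{12}, the moduli variation reduces to those of $v^2$ and of $S_B$, both governed by the Rauch-type formulas \eqref{51}--\eqref{54}; in particular the variation of $b(x,y)=B(x,y)/v(x)v(y)$ in \eqref{53} and of $q$ in \eqref{54} express $\delta U/\delta\mathcal P_s$ as a contour integral over the dual cycle $s^*$ of a kernel bilinear in $v$ and in the Bergman kernel. Substituting, the monodromy derivative takes the form $\delta\operatorname{tr}M_\gamma/\delta\mathcal P_s=\oint_{s^*}\Xi_\gamma$ for a skew-symmetric ``monodromy differential'' $\Xi_\gamma$ on $\hat{\mathcal C}$ obtained by convolving $\phi_1\phi_2$ along $\gamma$ against that kernel. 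The point of the choice $S=S_B$ is that it is exactly the connection for which this reduction is consistent: the anomalous, non-Goldman contributions to $\delta U$ cancel by the symmetry of $B(x,y)$, which is why no other projective connection would work without the correction term of Theorem~\ref{Th_1}.

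Inserting these derivatives into the homological bracket produces
\[
\{\operatorname{tr}M_\gamma,\operatorname{tr}M_{\tilde\gamma}\}_{\hom}=\sum_{s,s'}(s'\circ s)\oint_{s^*}\Xi_\gamma\oint_{s'^{*}}\Xi_{\tilde\gamma},
\]
and the double sum over the dual basis, together with the intersection form on $H_-$, collapses by completeness of the cycle basis and the Riemann bilinear relations into a single contour integral whose only contributions come from the transverse intersections $p\in\gamma\circ\tilde\gamma$. At each such $p$ the local reorganization of the two Floquet bilinears $\phi_1\phi_2$ and $\tilde\phi_1\tilde\phi_2$ together with their eigenvalues reassembles the resolved monodromies, yielding the factor $\operatorname{tr}M_{\gamma_p\tilde\gamma}-\operatorname{tr}M_{\gamma_p\tilde\gamma^{-1}}$; summing with the signs $\nu(p)$ and the overall $\tfrac12$ reproduces the Goldman bracket \eqref{63} with the opposite sign, which is the assertion.

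The main obstacle I anticipate is the localization step together with the meromorphic nature of $v$. One must show that the double contour integral receives contributions only from transverse crossings, that the coincidence (diagonal) terms cancel by the skew-symmetry of $\Xi$ under the involution $\mu$, and that the features absent in the holomorphic setting of \cite{Bertola_2017}---the pole cycles $t^-_j$ and the simple poles of $v$ at $z^{(1)}_j,z^{(2)}_j$---contribute no extra boundary terms and merely fix the conjugacy classes $D_j$. Verifying that the variational formulas \eqref{51}--\eqref{54}, derived for a holomorphic differential, transfer verbatim to the present meromorphic $v$ (as asserted above in the text) is precisely where the second-order-pole generalization of \cite{Bertola_2017, korotkin2018periods} requires genuine care, even though the algebraic skeleton of the Goldman matching is unchanged.
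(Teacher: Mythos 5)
First, a point of reference: the paper does not prove this statement. Theorem~\ref{Th_4} is imported from \cite{bertola2021wkb} (``The following theorem was stated in \cite{bertola2021wkb}\dots''), and the present article only uses it as input, so there is no in-paper argument to measure your proposal against; what you have written is a reconstruction of the strategy of \cite{Bertola_2017, korotkin2018periods} and of the cited reference. Your opening variational formula is correct: writing \eqref{1} as a first-order system and applying variation of parameters gives $\delta M_\gamma=M_\gamma\oint_\gamma\Psi^{-1}\delta A\,\Psi$, whose trace in the eigenbasis of $M_\gamma$ is exactly $\big(\rho-\rho^{-1}\big)\oint_\gamma\phi_1\phi_2\,\delta U$. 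That is the right first step, and the overall architecture (chain rule through the homological bracket \eqref{41}, localization at intersection points, local skein identity) is the one used in the literature.

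As a proof, however, the text stops precisely where the theorem becomes nontrivial, and every load-bearing step is asserted rather than carried out. Concretely: (i) you never verify that substituting \eqref{51}--\eqref{54} turns $\delta\operatorname{tr}M_\gamma/\delta\mathcal P_s$ into $\oint_{s^*}\Xi_\gamma$ for a well-defined skew-symmetric differential $\Xi_\gamma$ on $\hat{\mathcal C}$; (ii) the collapse of $\sum_{s,s'}(s'\circ s)\oint_{s^*}\Xi_\gamma\oint_{{s'}^*}\Xi_{\tilde\gamma}$ onto the transverse intersections of $\gamma$ and $\tilde\gamma$ is the analytic heart of the argument --- in the cited works it rests on computing the distributional bracket $\{U(x),U(y)\}_{\hom}$ and showing it is supported on the diagonal with coefficients coming from the double pole of $B(x,y)$ --- and ``collapses by completeness of the cycle basis and the Riemann bilinear relations'' is not a substitute for that computation; (iii) the local reassembly at each $p$ into $\operatorname{tr}M_{\gamma_p\tilde\gamma}-\operatorname{tr}M_{\gamma_p\tilde\gamma^{-1}}$ with the factor $\tfrac12\nu(p)$ and the overall minus sign is stated, not derived. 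Moreover, the claim that ``the anomalous, non-Goldman contributions to $\delta U$ cancel by the symmetry of $B(x,y)$'' is exactly the reason $S=S_B$ is singled out, so it cannot be waved through by symmetry; and the genuinely new features of the second-order-pole case --- the simple poles of $v$ at $z_j^{(1)},z_j^{(2)}$, the Casimir cycles $t_j^-$, and the matching of symplectic leaves $\mathbf r\mapsto\mathbf m$ via \eqref{7}, \eqref{8} --- are the content that distinguishes this theorem from its holomorphic and simple-pole predecessors, yet you defer them to a closing caveat. The outline is the correct one, but none of its essential steps is established.
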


The homological shear coordinates, which are the Darboux coordinates for Goldman brack\-et~\eqref{63}, can be constructed in the following way (see appendix of \cite{bertola2021wkb}, also \cite{2021,chekhov} for the details):
Assume that quadratic differential $Q$ is generic, i.e., it does not have any saddle connections (as in the
definition of the ``Gaiotto--Moore--Neitzke differential''~\cite{giatto}).
Then each horizontal trajectory given by $\operatorname{Im} \int^x_{x_1}v=0$, where $x_1$ is an arbitrary ``first'' zero, starting at a zero $x_j$ of $Q$ ends at one of the poles $z_k$, defining critical graph $\Gamma_Q$. Additionally, three horizontal
trajectories meet at each zero, determining three vertices of the triangle at the poles, therefore,
defining the triangulation $\Sigma_Q$ of $\mathcal{C}$. The dual graph with vertices at $x_j$ is denoted by $\Sigma^*_Q$. Notice that the number of edges of $\Sigma_Q$ equals to $6g-6+3n=\dim ({\rm CV}_{g,n})$. The Thurston shear coordinate is a value $\zeta_e \in \CC$ attached to each edge $e$ of the graph $\Sigma_Q$.

Using the graph $\Sigma^*_Q$ one defines a two-sheeted branch covering $\hat{\mathcal{C}}_{\Sigma_B}$ by assuming that all edges of $\Sigma^*_Q$ are branch cuts. To each coordinate $\zeta_e$ we assign skew-symmetric cycle $l_e$ on $\hat{\mathcal{C}}_{\Sigma_B}$, which is a double cover of the corresponding dual edge $e^* \in \Sigma^*_Q$. Lemma A.1 of \cite{bertola2021wkb} states that the Goldman Poisson brackets \eqref{63} between the coordinates $\zeta_e$ can be expressed via the intersection indices of the cycles $l_e$ as
\begin{equation*}
\{\zeta_e, \zeta_{e'}\}_G=\frac{1}{4}l_e \circ l_{e'}.
\end{equation*}
An observation \cite[Proposition A.6]{bertola2021wkb} that the double cover $\hat{\mathcal{C}}_{\Sigma_B}$ is holomorphically equivalent to the canonical double cover $\hat{\mathcal{C}}$, defined analytically by $v^2=Q$, allows us to view these brackets in terms of corresponding cycles on $\hat{\mathcal{C}}$. Then one can consider linear combinations with half-integer coefficients of cycles $l_e$, generating the elements $\big\{a^-_j, b^-_j, t^-_k\big\}$ of the homology group
$H_-$. Taking the same linear combinations of the elements $2\zeta_e$ one defines the homological shear coordinates $\big\{\rho_{a^-_j}, \rho_{b^-_j}, \rho_{t^-_k}\big\}$ with the following Poisson brackets:
\begin{equation}\label{homsc}
 \big\{\rho_{a^-_j},\rho_{b^-_k} \big\}_{G}=\frac{\delta_{jk}}{2}, \qquad \big\{\rho_{a^-_j},\rho_{a^-_k} \big\}_{G}=\big\{\rho_{b^-_j},\rho_{b^-_k} \big\}_{G}=0, \end{equation}
while $\rho_{t^-_k}$ lie in the center of Poisson algebra.
The coordinates $\rho_{t^-_k}$
are related to the monodromy eigenvalue as follows:
\begin{equation*}\rho_{t^-_k}=\log m_k.\end{equation*}
Thus, on the symplectic leaf ${\rm CV}_{g,n}[\mathbf{m}]$ for fixed values of $m_k$ the Goldman symplectic form is written as
\begin{equation*}%\label{goldman}
\Omega_{G}=\sum^{g^-}_{j=1} 2 \delta\rho_{a^-_j} \wedge \delta\rho_{b^-_j}.
\end{equation*}

\begin{Corollary}
The homological symplectic form $\Omega_{{\rm hom}}$ on $\mathcal{Q}_{g,n}[\mathbf{r}]$ is minus pullback of the Goldman symplectic form $\Omega_{G}$ by the map $\mathcal{F}_{(S_B)}$
\begin{equation*}\mathcal{F}_{(S_B)}^*\Omega_{G}=-\Omega_{{\rm hom}}. \end{equation*}
\end{Corollary}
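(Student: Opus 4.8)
The plan is to derive the Corollary directly from Theorem~\ref{Th_4} (the statement that $\mathcal{F}_{(S_B)}$ is Poisson) by translating the Poisson-compatibility of the map into an identity of symplectic forms on the respective symplectic leaves. The key observation is that both $\Omega_{\hom}$ and $\Omega_G$ are the symplectic forms that \emph{invert} the corresponding (restricted) Poisson structures on the leaves $\mathcal{Q}_{g,n}[\mathbf{r}]$ and ${\rm CV}_{g,n}[\mathbf{m}]$. Since Theorem~\ref{Th_4} asserts that $\mathcal{F}_{(S_B)}$ intertwines the homological bracket~\eqref{41} with \emph{minus} the Goldman bracket~\eqref{63}, inverting these brackets on a fixed leaf should transform this intertwining relation into the claimed pullback identity $\mathcal{F}_{(S_B)}^*\Omega_G=-\Omega_{\hom}$, with the sign surviving the inversion.

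\emph{First I would} make precise that on the leaf $\mathcal{Q}_{g,n}[\mathbf{r}]$ the Darboux coordinates for the homological bracket are $(A_j,B_j)_{j=1}^{g^-}$, with $\{B_j,A_k\}=b^-_k\circ a^-_j=\tfrac12\delta_{jk}$ read off from~\eqref{41} and the intersection indices $a^-_i\circ b^-_k=\tfrac12\delta_{ik}$; correspondingly $\Omega_{\hom}=\sum_j 2\,\delta B_j\wedge\delta A_j$ is exactly the 2-form inverting this bracket. \emph{Next} I would invoke the construction of the homological shear coordinates $\{\rho_{a^-_j},\rho_{b^-_j}\}$ recalled just above the Corollary: by~\eqref{homsc} they satisfy $\{\rho_{a^-_j},\rho_{b^-_k}\}_G=\tfrac12\delta_{jk}$, so that $\Omega_G=\sum_j 2\,\delta\rho_{a^-_j}\wedge\delta\rho_{b^-_j}$ is the symplectic form inverting the Goldman bracket on ${\rm CV}_{g,n}[\mathbf{m}]$, with the central coordinates $\rho_{t^-_k}=\log m_k$ frozen on the leaf.

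\emph{The core step} is to pull back the coordinate functions. Since $\mathcal{F}_{(S_B)}$ is Poisson for minus the Goldman bracket, the pulled-back functions $\mathcal{F}_{(S_B)}^*\rho_{a^-_j}$ and $\mathcal{F}_{(S_B)}^*\rho_{b^-_j}$ satisfy, as functions on $\mathcal{Q}_{g,n}[\mathbf{r}]$,
\begin{equation*}
\big\{\mathcal{F}_{(S_B)}^*\rho_{a^-_j},\mathcal{F}_{(S_B)}^*\rho_{b^-_k}\big\}_{\hom}=-\tfrac12\delta_{jk},
\end{equation*}
with the other brackets vanishing, where $\{\,\cdot\,,\,\cdot\,\}_{\hom}$ is the homological bracket~\eqref{41}. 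Because $A_j,B_j$ form a Darboux chart for $\{\,\cdot\,,\,\cdot\,\}_{\hom}$, this bracket relation states exactly that $\{\mathcal{F}_{(S_B)}^*\rho_{a^-_j}\}$ and $\{-\mathcal{F}_{(S_B)}^*\rho_{b^-_j}\}$ form another Darboux chart for the \emph{same} Poisson structure, hence compute $\Omega_{\hom}$ with the opposite overall sign. Concretely, evaluating $\mathcal{F}_{(S_B)}^*\Omega_G=\sum_j 2\,\delta\big(\mathcal{F}_{(S_B)}^*\rho_{a^-_j}\big)\wedge\delta\big(\mathcal{F}_{(S_B)}^*\rho_{b^-_j}\big)$ against pairs of Hamiltonian vector fields and using the inverse-bracket characterization gives $\mathcal{F}_{(S_B)}^*\Omega_G=-\Omega_{\hom}$.

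\emph{The main obstacle I anticipate} is bookkeeping the signs and the factors of $\tfrac12$ consistently across the three conventions in play: the antisymmetry in the homological bracket~\eqref{41} (where the ordering $s_2\circ s_1$ introduces a sign flip relative to the naive $s_1\circ s_2$), the factor $\tfrac14$ versus $\tfrac12$ in the shear-coordinate brackets, and the normalization $a^-_i\circ b^-_k=\tfrac12\delta_{ik}$ of the half-integer cycles. A clean way to avoid errors is to verify the identity purely at the level of the inverse bracket tensors: a symplectic form $\Omega$ and its Poisson bivector $\Pi$ satisfy $\Omega(\Pi^\sharp\alpha,\Pi^\sharp\beta)=-\Pi(\alpha,\beta)$, so that replacing $\Pi_G$ by $-\mathcal{F}_{(S_B)*}\Pi_{\hom}$ (the content of Theorem~\ref{Th_4}) and inverting yields $\mathcal{F}_{(S_B)}^*\Omega_G=-\Omega_{\hom}$ without ever expanding coordinates; one then cross-checks the leading normalization against the explicit Darboux expressions above to confirm no spurious factor of $2$ is introduced.
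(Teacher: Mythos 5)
Your proposal is correct and follows essentially the same route the paper intends: the Corollary is stated without separate proof precisely because it is the immediate translation of Theorem~\ref{Th_4} (Poisson for the homological bracket and minus the Goldman bracket) into an identity of the inverting symplectic forms on the leaves, using the Darboux normalizations $\{B_j,A_k\}_{\rm hom}=\tfrac12\delta_{jk}$ and $\{\rho_{a^-_j},\rho_{b^-_k}\}_G=\tfrac12\delta_{jk}$ from \eqref{41} and \eqref{homsc}. The only blemish is the intermediate expression $b^-_k\circ a^-_j$ in your first step, which by \eqref{41} should read $a^-_k\circ b^-_j$ (the value $\tfrac12\delta_{jk}$ you state is nonetheless correct), and this does not affect the argument.
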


\subsection{Admissible meromorphic projective connections}
The map between the space of quadratic differentials $\mathcal{Q}_{g,n}[\mathbf{r}]$ and the character variety ${\rm CV}_{g,n}[\mathbf{m}]$ essentially depends on the choice of the base projective connection on a Riemann surface $\mathcal{C}.$ To parametrize space of such connections we consider the holomorphic affine bundle $\mathbb{S}_{g,n}$ of meromorphic projective connections with at most simple poles at the punctures over the moduli space of closed curves $\mathcal{M}_{g,n}$. Theorem~\ref{Th_4} states that for the choice $S=S_B$ the monodromy map is symplectic. The naturally arising question is when the monodromy map with a fixed projective connection~$S$ other than $S_B$ is also a symplectomorphism.

\begin{Definition}
A holomorphic section $S$ of the affine bundle $\mathbb{S}_{g,n}$ is called \textit{admissible} if the homological symplectic structure on $\mathcal{Q}_{g,n}[\mathbf{r}]$ implies Goldman bracket on the character variety~${\rm CV}_{g,n}[\mathbf{m}].$
\end{Definition}
For any two choices of $S_0, S_1 \in \mathbb{S}_{g,n} $ we write the same equation in two ways
\begin{equation*}%\label{64}
\partial^2 \phi+\left(\frac{1}{2}S_0-Q_0 \right)\phi=0 \qquad \text{and}\qquad
%\label{65}
\partial^2 \phi+\left(\frac{1}{2}S_1-Q_1 \right)\phi=0,
\end{equation*}
where both $Q_0$ and $Q_1 $ belong to $\mathcal{Q}_{g,n}[\mathbf{r}]$ and are related by
\begin{equation}\label{66}Q_1-Q_0=\frac{1}{2}(S_0 -S_1 ).\end{equation}
We have the following diagram of maps
\[
 \begin{tikzcd}
\mathcal{Q}_{g,n}[\mathbf{r}] \arrow[r, " Q_0 \xrightarrow{\mathcal{H}} Q_1"] \arrow[dr, "\mathcal{F}_{(S_0)}"']
& \mathcal{Q}_{g,n}[\mathbf{r}] \ar[d, "\mathcal{F}_{(S_1)}"]\\
& {\rm CV}_{g,n}[\mathbf{m}].
\end{tikzcd}
\]
Assuming that $S_0$ is admissible, the condition for $S_1$ to be also admissible (or \textit{equivalent} to $S_0$) is that the map
\begin{equation}\label{67}\mathcal{H}\colon \ Q_0 \xrightarrow{} Q_0 +\frac{1}{2}(S_0 -S_1 )\end{equation} is a symplectomorphism implying the coincidence of homological 2-forms calculated via the periods of $v_0$ and $v_1$, where $v_0^2=Q_0$, $v_1^2=Q_1 $ define canonical coverings with different conformal structures. The following proposition gives a condition for the map \eqref{67} to be a symplectomorphism, generalizing the results proven in \cite{Bertola_2017,korotkin2018periods}, where $Q_0$, $Q_1$ are both assumed to be holomorphic or with simple poles, respectively.
\begin{Proposition} \label{Prop_3}\quad
\begin{enumerate}\itemsep=0pt
\item[$1.$] Two meromorphic differentials $Q_0, Q_1 \in \mathcal{Q}_{n,g}[\textbf{r}]$ induce the same homological $2$-form on $\mathcal{Q}_{n,g}[\textbf{r}]$ if and only if the $1$-form $\Theta_{(S_0-S_1)}$, corresponding to family of quadratic differentials $S_0-S_1$ and locally defined on $\mathcal{M}_{g,n}$,
is closed, $\delta\Theta_{(S_0-S_1)}=0.$

\item[$2.$] The generating function of the symplectomorphism between the periods $\big(A^{(0)}_k,B^{(0)}_k\big)$ of $v_0$ and $\big(A^{(1)}_k,B^{(1)}_k\big)$ of $v_1$ for the chosen potentials \begin{equation}\label{69}\theta_0=\sum^{g^-}_{k=1}\big(B^{(0)}_k \delta A^{(0)}_k - A^{(0)}_k \delta B^{(0)}_k\big), \qquad \theta_1=\sum^{g^-}_{k=1}\big(B^{(1)}_k \delta A^{(1)}_k - A^{(1)}_k \delta B^{(1)}_k\big),\end{equation}
defined by
\begin{equation*}%\label{70}
\delta\mathcal{G}_{{\rm hom}}=\mathcal{H}^*\theta_1 - \theta_0 ,
\end{equation*}
has the following form:
\begin{equation}\label{71}
\mathcal{G}_{{\rm hom}}=\sum^n_{i=1}\pi {\rm i} r_i\left({\rm reg}\int^{z^{(1)}_i}_{z^{(2)}_i}v_1 -{\rm reg}\int^{z^{(1)}_i}_{z^{(2)}_i}v_0 \right)+\frac{1}{2}\Hat{G}_{(S_0 -S_1)},
\end{equation}
where there exists a local holomorphic function $\Hat{G}_{(S_0 -S_1)}$ on $\mathcal{M}_{g,n}$, such that
\begin{equation*}%\label{72}
\delta\Hat{G}_{(S_0 -S_1)}= \Theta_{(S_0-S_1)}.
\end{equation*}
\end{enumerate}
\end{Proposition}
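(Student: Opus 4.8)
The plan is to connect $Q_0$ and $Q_1$ by the affine family $Q_t=Q_0+tP$, $t\in[0,1]$, where by \eqref{66} $P:=Q_1-Q_0=\tfrac12(S_0-S_1)$ is a quadratic differential with at most \emph{simple} poles at the punctures; hence each $Q_t$ stays in the same stratum $\mathcal Q_{g,n}[\mathbf{r}]$, and $\mathcal H=\mathcal H_1$ for the family $\mathcal H_t\colon Q_0\mapsto Q_t$. Writing $v_t=\sqrt{Q_t}$, $A^{(t)}_k=\oint_{a^-_k}v_t$, $B^{(t)}_k=\oint_{b^-_k}v_t$, the decisive preliminary observation is that
\[
\dot v_t:=\partial_t v_t=\frac{P}{2v_t}
\]
is a \emph{holomorphic} differential on $\hat{\mathcal C}_t$, odd under $\mu$: at a branch point $P$ and $v_t$ vanish to the same order in the distinguished coordinate \eqref{37}, so the quotient is regular, and at each $z^{(i)}_j$ the simple pole of $P$ cancels the simple pole of $v_t$, leaving no residue (consistently with $\delta r_j=0$ on the stratum). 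This regularity is what replaces the cotangent-bundle identification of \cite{Bertola_2017,korotkin2018periods}, which is unavailable once $Q$ has double poles, and it lets me differentiate every period integral by simply inserting $\dot v_t$.

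Set $\vartheta_t:=\mathcal H_t^*\theta_t=\sum_k\bigl(B^{(t)}_k\delta A^{(t)}_k-A^{(t)}_k\delta B^{(t)}_k\bigr)$, with potentials as in \eqref{69}. A direct rearrangement gives
\[
\partial_t\vartheta_t=\delta U_t+2\sum_k\bigl(\dot B^{(t)}_k\,\delta A^{(t)}_k-\dot A^{(t)}_k\,\delta B^{(t)}_k\bigr),\qquad U_t:=\sum_k\bigl(B^{(t)}_k\dot A^{(t)}_k-A^{(t)}_k\dot B^{(t)}_k\bigr),
\]
and the two summands are evaluated by different means. The function $U_t$ is the Riemann bilinear pairing of the honest differentials $v_t$ and $\dot v_t$; since $\dot v_t$ is holomorphic, the bilinear identity reduces to the residue contributions at the simple poles $z^{(i)}_j$ of $v_t$, weighted by $\pm r_j$ and by the values of $\int^x\dot v_t$, yielding $U_t=\pi{\rm i}\sum_{j}r_j\int^{z^{(1)}_j}_{z^{(2)}_j}\dot v_t$ (the factor $\tfrac12$ from $a^-_j\circ b^-_j=\tfrac12$ combining with $2\pi{\rm i}$).

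For the $1$-form summand I use the variational formulas. Decomposing $S_0-S_1=\sum_{(jk)\in D}P_{jk}u_ju_k+\sum_l P_l Q^{z_l}$ in the basis underlying \eqref{form} turns the holomorphic differential into $\dot v_t=\tfrac14\sum_{(jk)}P_{jk}f_jf_kv_t+\tfrac14\sum_l P_l\,Q^{z_l}/v_t$, so that \eqref{51} (together with its analogue for the puncture coordinates $q_l$) identifies $\dot A^{(t)}_k,\dot B^{(t)}_k$ with the components of the pullback of $\Theta_{(S_0-S_1)}$ along $\partial/\partial B_k$ and $-\partial/\partial A_k$. Since on the stratum $\Theta_{(S_0-S_1)}$ has only $\delta A_k,\delta B_k$ components, the summand reassembles into $\tfrac12\Theta_{(S_0-S_1)}$ (the overall constant fixed by the dual-basis relations $s^*_i\circ s_j=\delta_{ij}$), manifestly independent of $t$. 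Integrating over $t\in[0,1]$ and using $\int_0^1\dot v_t\,dt=v_1-v_0$, whose endpoint behaviour is controlled by the regularization \eqref{76}, assembles the single identity
\[
\mathcal H^*\theta_1-\theta_0=\delta\!\left[\sum_{i=1}^n\pi{\rm i}\,r_i\!\left({\rm reg}\!\int^{z^{(1)}_i}_{z^{(2)}_i}\!v_1-{\rm reg}\!\int^{z^{(1)}_i}_{z^{(2)}_i}\!v_0\right)\right]+\tfrac12\Theta_{(S_0-S_1)},
\]
valid with no closedness hypothesis.

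Part~1 follows by applying $\delta$: since $\delta(\mathcal H^*\theta_1-\theta_0)=\mathcal H^*\Omega_{{\rm hom}}-\Omega_{{\rm hom}}$ and $\delta\delta=0$, we get $\mathcal H^*\Omega_{{\rm hom}}-\Omega_{{\rm hom}}=\tfrac12\,\delta\Theta_{(S_0-S_1)}$, which vanishes precisely when $\delta\Theta_{(S_0-S_1)}=0$. When this holds, $\tfrac12\Theta_{(S_0-S_1)}=\delta\bigl(\tfrac12\Hat G_{(S_0-S_1)}\bigr)$ for a local primitive $\Hat G_{(S_0-S_1)}$ on $\mathcal M_{g,n}$, and the identity reads $\mathcal H^*\theta_1-\theta_0=\delta\mathcal G_{{\rm hom}}$ with $\mathcal G_{{\rm hom}}$ exactly of the form \eqref{71}, proving part~2. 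I expect the main obstacle to be the residue bookkeeping forced by the poles of $v_t$: fixing the normalization $\pi{\rm i}\,r_i$ in the bilinear pairing, matching the logarithmically divergent endpoint integrals to the prescribed regularization \eqref{76}, and establishing the $q_l$-variational formula required to identify the bulk with $\Theta_{(S_0-S_1)}$. This is precisely the step with no analogue in the holomorphic or simple-pole settings, where $v$ is regular; along the path one argues at generic $t$ (simple zeroes of $Q_t$) and extends the identity by analyticity.
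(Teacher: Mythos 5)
Your argument is correct in substance but organizes the proof quite differently from the paper. The paper fixes $Q_1=Q_0+\hbar\tilde Q$, Taylor--expands every period of $v_1$ in $\hbar$ via \eqref{101}, and then must control the coefficient of \emph{every} power $\hbar^k$, $k\ge 2$, in the difference of potentials; this is where its technical Lemma~\ref{Lem_1} (a Riemann bilinear identity for differentials with jump discontinuities across $b$-cycles) and the auxiliary double cover $\hathat{\mathcal{C}}_\hbar$ enter, to show that the branch-point residues at each order cancel and only the puncture residues survive, after which the series is resummed into the regularized integrals. Your first-order deformation along the affine path $Q_t=Q_0+tP$ bypasses all of that: differentiating $\vartheta_t$ once produces only the symmetric rearrangement $\delta U_t+2\langle\oint\dot v_t,\delta\oint v_t\rangle$, the first term is handled by the ordinary bilinear identity because $\dot v_t=P/(2v_t)$ is holomorphic on $\hat{\mathcal{C}}_t$, and the second is exactly the paper's computation \eqref{114} identifying it with $\tfrac12\Theta_{(S_0-S_1)}$, manifestly $t$-independent; integration in $t$ then gives \eqref{131} directly. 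This is cleaner, but it buys less: the paper's order-by-order coefficients \eqref{123} are reused verbatim in Section~\ref{section4} to compute $G_{-1}$, $G_0$, $G_1$, so the expansion is not wasted effort there. Two points in your write-up need the paper's machinery to be made rigorous. First, differentiating $\oint_{a^-_k}v_t$ under the integral sign requires the argument the paper gives below \eqref{101} (projecting the $t$-dependent cycles to fixed contours on $\mathcal{C}$), since the branch points, and hence the cycles of $H_-$, move with $t$. Second, and more seriously, the affine path can pass through finitely many $t$ at which $Q_t$ acquires non-simple zeroes, where the covering, the homology basis and the period coordinates all degenerate; ``extend by analyticity'' is not automatic there because the cycles undergo monodromy. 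The paper avoids this by invoking connectedness of $\mathcal{Q}_{g,n}[\mathbf{r}]$ and telescoping the local identity along a finite chain of neighborhoods covering a path that stays inside the stratum of simple zeroes; you should either import that device or perturb to a generic path and telescope likewise.
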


\begin{Remark}
For any $Q \in \mathcal{Q}_{g,n}[\mathbf{r}]$ the integral $\int^{z^{(1)}_i}_{z^{(2)}_i} v$ is singular at the endpoints. We define its regularization by removing the divergent part as follows: fix a coordinate $\xi_j$ near $z_j$ such that
\begin{equation*}%\label{73}
Q(x) \sim \left(\frac{r_j^2}{\xi_j^2} +O\big(\xi_j^{-1}\big)\right)({\rm d} \xi_j)^2.\end{equation*}
$\xi_j$ can also serve as a local coordinate on $\hat{\mathcal{C}}$ near the lifts $\big\{ {z^{(1)}_j}, {z^{(2)}_j}\big\}$ with
\begin{equation*}%\label{74}
v(x)\sim \pm \left(\frac{ r_j}{\xi_j} +O(1)\right){\rm d} \xi_j.\end{equation*}
Let $z^t_j$ be an arbitrary sequence of points on $\mathcal{C}$ converging to $z_j,$ such that in the local coordinate~$\xi_j$
\begin{equation*}%\label{75}
\operatorname{Re} \big(\xi_j\big(z^t_j\big)\big) \sim \frac{1}{t}, \quad t \xrightarrow{} \infty, \qquad \operatorname{Im} \big(\xi_j\big(z^t_j\big)\big)=0.\end{equation*}
Then the regularization is defined by
\begin{equation}\label{76}
{\rm reg}\int^{z^{(1)}_j}_{z^{(2)}_j}v := \lim_{t \xrightarrow{} \infty } \left( \int^{z^{t (1)}_j}_{z^{t (2)}_j}v - 2 r_j \log t \right).
\end{equation}
\end{Remark}
Before proceeding to the proof of Proposition \ref{Prop_3}, we will prove the following technical lemma, which will also be used in the computation of the WKB expansion of the Yang--Yang function. Introduce the pairing between any two meromorphic differentials $w_1$, $w_2$ on $\hat{\mathcal{C}}$:
\begin{equation}\label{77} \left\langle\oint w_1,\oint w_2 \right\rangle:=\sum^{g^-}_{j=1}\left[\oint_{b^-_j}w_1 \oint_{a^-_j}w_2 - \oint_{a^-_j}w_1 \oint_{b^-_j}w_2\right].\end{equation}

\begin{Lemma} \label{Lem_1}
Let $w_1$ and $w_2$ be two meromorphic differentials on $\hat{\mathcal{C}}$, skew-symmetric under involution. Assuming both $w_1$ and $w_2$ holomorphically depend on moduli $(A_i, B_i)^{g^-}_{i=1}$, the following identity of $1$-forms on $\mathcal{Q}_{g,n}[\textbf{r}]$ holds:
\begin{equation}\label{78}\left\langle \oint w_1, \delta \oint w_2\right\rangle =-\frac{1}{2}\int_{\partial \hat{\mathcal{C}}_0}\left(\delta w_2
\int^x_{p_0} w_1\right)+\left\langle \oint \frac{w_1 w_2}{v},\delta \oint v\right\rangle \end{equation} or
\begin{equation}\label{79}\left\langle \oint w_1, \delta \oint w_2\right\rangle =\frac{1}{2}\int_{\partial \hat{\mathcal{C}}_0}\left( w_1
\int^x_{p_0} \delta w_2\right)+\left\langle \oint\frac{w_1 w_2}{v},\delta \oint v\right\rangle. \end{equation}
Here $\hat{\mathcal{C}_0}$ is the fundamental polygon of the covering surface $\hat{\mathcal{C}}$, $p_0$ is a generic point.
\end{Lemma}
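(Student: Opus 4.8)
The plan is to prove both identities by a single Riemann bilinear computation on the fundamental polygon $\hat{\mathcal{C}}_0$, using skew-symmetry under $\mu$ to reduce everything to the odd homology $H_-$. The first step is to disentangle the two meanings of varying a period: if $\delta w_2$ denotes the variation of $w_2$ at fixed flat coordinate $z(x)=\int^x v$, then the honest variation of the period differs from $\oint_\gamma \delta w_2$ by a flat-structure correction. Holding $z$ fixed amounts to transporting points along the vector field $\eta=\delta z\,\partial_z$, where $\delta z=\int^x \delta v$ is the (multivalued) variation of the flat coordinate, whose monodromies around $a_k^-,b_k^-$ are precisely the coordinate variations $\delta A_k,\delta B_k$. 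Since $w_2$ is closed away from its poles, $\mathcal{L}_\eta w_2=\mathrm{d}\!\left(\frac{w_2}{v}\,\delta z\right)$, so I would record the chain rule $\delta\oint_\gamma w_2=\oint_\gamma\delta w_2+\oint_\gamma \mathrm{d}\!\left(\frac{w_2}{v}\,\delta z\right)$ and substitute it into the pairing \eqref{77}.

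For the first piece, $\langle\oint w_1,\oint\delta w_2\rangle$, I would apply the Riemann bilinear relation on $\hat{\mathcal{C}}$ with $W_1(x)=\int^x_{p_0}w_1$, which is single-valued on $\hat{\mathcal{C}}_0$. Here the crucial simplification is that $w_1$ and $\delta w_2$ are skew under $\mu$, so all their periods over the symmetric cycles vanish (for instance $\oint_{a_k^+}w=\frac12(\oint_{a_k}w+\oint_{a_k^\mu}w)=0$, because $\oint_{a_k^\mu}w=-\oint_{a_k}w$); the full-surface bilinear sum therefore collapses onto the $H_-$ generators, and the half-integer intersection $a_k^-\circ b_k^-=\frac12$ produces exactly the factor that turns the boundary integral into $-\frac12\int_{\partial\hat{\mathcal{C}}_0}\big(\delta w_2\,W_1\big)$, giving \eqref{78}. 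Integrating by parts along $\partial\hat{\mathcal{C}}_0$, i.e.\ replacing $\delta w_2\,\int^x w_1$ by $-\,w_1\,\int^x\delta w_2$ modulo an exact term, yields the alternative form \eqref{79}.

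For the second piece I expect the flat-structure correction to reorganize into $\langle\oint\frac{w_1w_2}{v},\delta\oint v\rangle$. Writing $\mathrm{d}\!\left(\frac{w_2}{v}\delta z\right)=\mathrm{d}\!\left(\frac{w_2}{v}\right)\delta z+\frac{w_2}{v}\,\delta v$ and inserting into the pairing produces a term $\langle\oint w_1,\oint\frac{w_2}{v}\delta v\rangle$ together with a remainder involving $\mathrm{d}(w_2/v)$ and the multivalued $\delta z$; a second Riemann bilinear relation, applied to the function $w_2/v$ and the differentials $w_1,\delta v$, trades the point-evaluations coming from the monodromy of $\delta z$ for the genuine period $\oint\frac{w_1w_2}{v}$, converting $\langle\oint w_1,\oint\frac{w_2}{v}\delta v\rangle$ into $\langle\oint\frac{w_1w_2}{v},\oint\delta v\rangle=\langle\oint\frac{w_1w_2}{v},\delta\oint v\rangle$.

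The main obstacle is the bookkeeping of multivaluedness and poles. Both $\delta z$ and $W_1$ are only single-valued after cutting along the canonical basis, and their jumps across the cuts are exactly the period variations of $v$ and $w_1$; keeping track of these jumps is what makes the $H_-$ pairing and the $\delta\oint v$ factor appear, but it also means one must verify that the spurious basepoint dependence (on $p_0$ and on the basepoint of $z$) cancels between the two terms on the right-hand side. In parallel, $v$ has double zeros at the branch points $x_i$ and simple poles at the $z_j^{(1)},z_j^{(2)}$, so $w_1,w_2,w_2/v$ are meromorphic there; I would need to check that the residue contributions generated by the Riemann bilinear relations (from poles of $\delta w_2$ and of $w_1 w_2/v$ meeting the multivalued integrands) cancel, which simultaneously guarantees the $p_0$-independence of \eqref{78} and the consistency of \eqref{78} with \eqref{79}.
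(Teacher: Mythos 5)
Your plan is essentially the paper's own proof: the paper likewise splits the period variation into the fixed-flat-coordinate variation plus a correction (the evaluation term $\frac{w_2}{v}(p_0)$ at the intersection point of the dual cycles, following Kokotov--Korotkin), applies the Riemann bilinear identity on $\partial\hat{\mathcal{C}}_0$ with the $H_+$ contributions killed by skew-symmetry under $\mu$, and obtains the term $\bigl\langle \oint \frac{w_1 w_2}{v},\delta \oint v\bigr\rangle$ from the jump discontinuity $-{\rm d}(w_2/v)$ that $\delta w_2$ acquires across the cycles $2b^-_i$ --- which is exactly the multivaluedness of your $\delta z$ in different clothing. Your Lie-derivative packaging is an equivalent reorganization of the same computation, and you correctly flag the one delicate point: the periods $\oint \frac{w_1 w_2}{v}$ come from integrating the jump terms by parts (with the resulting basepoint evaluations cancelling against the monodromy correction), not from the monodromy correction alone.
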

\begin{Remark}
Note that by $\delta \oint w$ appearing in the pairing we mean differential applied to the periods of $w$, while $\delta w$ is given by
\begin{equation}\label{80}\delta w:=\sum^{g^-}_{i=1} \left(\frac{\delta w}{\delta A_i}\Big|_{z(x)} \delta A_i +\frac{\delta w}{\delta B_i}\Big|_{z(x)} \delta B_i\right),\end{equation}
where the partial derivatives are defined under the assumption that the local coordinate~\eqref{36} is independent of the moduli. For example,
\begin{equation*}\frac{\delta w}{\delta A_i}\Big|_{z(x)}:=
v(x) \frac{\delta}{\delta A_i}\Big|_{z(x)={\rm const}}\left\{\frac{w(x)}{v(x)}\right\}, \end{equation*}
where $w(x)/v(x)$ is
a meromorphic function on $\hat{\mathcal{C}}$.
\end{Remark}
\begin{proof}
Expressing the differential in coordinates $(A_i, B_i)^{g^-}_{i=1}$, we write the pairing on the left-hand side of \eqref{78} as follows:
\begin{gather}%\label{81}
\sum^{g^-}_{j=1} \left(\oint_{b^-_j} w_1 \delta\oint_{a^-_j}w_2 - \oint_{a^-_j}w_1 \delta\oint_{b^-_j}w_2 \right)\nonumber\\
\label{82}\qquad{}=\sum^{g^-}_{i=1}\Bigg[ \sum^{g^-}_{j=1}\left(\oint_{b^-_j} w_1 \frac{\delta}{\delta A_i}\oint_{a^-_j}w_2-\oint_{a^-_j} w_1 \frac{\delta}{\delta A_i}\oint_{b^-_j}w_2\right)\delta A_i \\
\label{83}\qquad\quad{} +\sum^{g^-}_{j=1}\left(\oint_{b^-_j} w_1 \frac{\delta}{\delta B_i}\oint_{a^-_j}w_2-\oint_{a^-_j} w_1 \frac{\delta}{\delta B_i}\oint_{b^-_j}w_2\right)\Bigg]\delta B_i. \end{gather}

Take a reference point $p_0$ and consider a canonical dissection of the covering surface along the cycles in $H_1(\hat{\mathcal{C}})$ to obtain the fundamental polygon $\hat{\mathcal{C}_0}$. The coordinate $z(x)=\int^x_{p_0}v$ serves as a local coordinate on~$\hat{\mathcal{C}_0}$ outside branch points and poles and is kept fixed while differentiating with respect to the moduli. Consider the expression~\eqref{82} near~$\delta A_i$. According to the discussion in~\cite{Kokotov_2009}, when we differentiate the integral over $a^-_i$ with respect to the variable $A_i=\oint_{a^-_i}v$ an additional term appears:
\begin{equation*}%\label{84}
\frac{\delta}{\delta A_i}\oint_{a^-_i}w_2=\frac{w_2}{v}(R_i)+\oint_{a^-_i}\frac{\delta w_2}{\delta A_i},
\end{equation*}
where $R_i$ is an intersection point of the cycles $a^-_i$ and $2b^-_i$ (due to intersection index $a_i^- \circ b_j^- =\frac{\delta_{ij}}{2}$), whereas all other integrals commute with the differentiation with respect to coordinate $A_i.$ In our case $R_i=p_0.$ Therefore, we can rewrite the term near $\delta A_i$ as
\begin{equation}\label{85}
\sum^{g^-}_{j=1}\left(\oint_{b^-_j} w_1 \oint_{a^-_j}\frac{\delta w_2}{\delta A_i}-\oint_{a^-_j} w_1 \oint_{b^-_j}\frac{\delta w_2}{\delta A_i}\right)+\frac{w_2}{v}(p_0)\oint_{b^-_j} w_1.
\end{equation}
Let us write
$w_2(x)=f(z){\rm d}z$, for $x \in \hat{\mathcal{C}_0}$. Since $w_2$ is globally defined on $\hat{\mathcal{C}}$,
it is invariant under analytic continuation along the cycles in $H_{-}(\hat{\mathcal{C}})$.
Writing $w_2(x+a^-_i)=w_2(x)$, we have that
\begin{equation}\label{fzBfz}
f(z+A_i)=f(z).
\end{equation}
Differentiating this equality with respect to $z$, we get
\begin{equation*}
\frac{\partial f(z+A_i)}{\partial (z+A_i)}=\frac{\partial f(z)}{\partial z}.
\end{equation*}
Differentiating \eqref{fzBfz} again with respect to $A_i$, while $z$ is kept constant, we also mind that $f$ implicitly depends on $A_i$:
\begin{equation*}
\frac{\partial f(z+A_i)}{\partial (z+A_i)}+\frac{\delta f(z+A_i)}{\delta A_i}=\frac{\delta f(z)}{\delta A_i}.
\end{equation*}
Combining these formulas, we write
\begin{equation*}
\frac{\delta f(z+A_i)}{\delta A_i}{\rm d}z-\frac{\delta f(z)}{\delta A_i}{\rm d}z=-\frac{\partial f(z)}{\partial z}{\rm d}z,
\end{equation*}
or in invariant form (using that away from branch points and poles $v={\rm d}z)$
\begin{equation*}
\frac{\delta w_2}{\delta A_i}(x+a^-_i)-\frac{\delta w_2}{\delta A_i}(x)=-{\rm d} \left(\frac{w_2}{v}\right).
\end{equation*}
Hence, the differential $\frac{\delta}{\delta A_i}w_2$ could be seen as meromorphic on $\hat{\mathcal{C}}$ with a jump discontinuity $-{\rm d}\big(\frac{w_2}{v}\big)$ on the cycle $2b^-_i.$ Denote by $F:=\int^x_{p_0}w_1$. We apply a modification of the Riemann bilinear identity for differentials having discontinuities along the homology cycles. Splitting the integral over the boundary of $\hat{\mathcal{C}}_0$ into even and odd parts of $H_1\big(\hat{\mathcal{C}}, \Z\big)$ and recalling that the intersection index is $a_i^+ \circ b_j^+=a_i^- \circ b_j^- =\frac{\delta_{ij}}{2}$, one has
\begin{gather}\int_{\partial \hat{\mathcal{C}}_0}F\frac{\delta w_2}{\delta A_i} = \sum^{g^-}_{j=1}\Bigg[\left(\oint_{ 2b^-_j}F \frac{\delta w_2}{\delta A_i}+\oint_{( 2b^-_j)^{-1}}F\frac{\delta w_2}{\delta A_i} \right)+\left(\oint_{ a^-_j}F \frac{\delta w_2}{\delta A_i}+\oint_{( a^-_j)^{-1}}F \frac{\delta w_2}{\delta A_i} \right)
\nonumber\\
%\label{87}
\hphantom{\int_{\partial \hat{\mathcal{C}}_0}F\frac{\delta w_2}{\delta A_i} =}{}
+\left(\oint_{ 2b^+_j}F \frac{\delta w_2}{\delta A_i}+\oint_{( 2b^+_j)^{-1}}F\frac{\delta w_2}{\delta A_i} \right)+\left(\oint_{ a^+_j}F \frac{\delta w_2}{\delta A_i} +\oint_{( a^+_j)^{-1}}F \frac{\delta w_2}{\delta A_i} \right)\Bigg].\label{86}
\end{gather}
Consider the following term of the above sum:
\begin{equation*}%\label{88}
\oint_{ 2b^-_i}F \frac{\delta w_2}{\delta A_i}+\oint_{( 2b^-_i)^{-1}}F \frac{\delta w_2}{\delta A_i}. \end{equation*}
It could be rewritten as
\begin{equation}\label{89}
\oint_{ 2b^-_i}\frac{\delta w_2}{\delta A_i}(P)\int^P_{p_0}w_1-\oint_{ 2b^-_i}\frac{\delta w_2}{\delta A_i}(P')\int^{P'}_{p_0}w_1,
\end{equation} where $P$, $P'$ are identified points on $ 2b^-_i$ and $(2 b^-_i)^{-1}$ cycles, respectively. $P'=P-a^-_i$.
That means that $P$, $P'$ lie on the different sides of a cycle $ 2b^-_i$, where $\frac{ \delta w_2}{\delta A_i}$ gains a jump. Then
\begin{equation*}%\label{90}
\frac{\delta w_2}{\delta A_i}(P')=\frac{\delta w_2}{\delta A_i}(P)-\text{``jump''}=\frac{\delta w_2}{\delta A_i}(P)+{\rm d}\left(\frac{w_2}{v}\right)(P).\end{equation*}
Plugging it into \eqref{89}, we rewrite this expression as
 \begin{equation*}%\label{91}
 \oint_{ 2b^-_i}\frac{\delta w_2}{\delta A_i}\oint_{a^-_i}w_1-\oint_{ 2b^-_i}{\rm d}\left(\frac{w_2}{v}\right)(P)\int^{P}_{p_0}w_1.
 \end{equation*}
Integrating the second term by parts, we get
 \begin{equation*}%\label{92}
 \oint_{ 2b^-_i}F \frac{\delta w_2}{\delta A_i}+\oint_{( 2b^-_i)^{-1}}F \frac{\delta w_2}{\delta A_i}=\oint_{ 2b^-_i}\frac{\delta w_2}{\delta A_i}\oint_{a^-_i}w_1-\frac{w_2}{v}(p_0)\oint_{ 2b^-_i}w_1+\oint_{ 2b^-_i}\frac{w_1 w_2}{v}.
 \end{equation*}
In all the the remaining terms of \eqref{86} the differential $\frac{\delta w_2}{\delta A_i}$ does not gain jump discontinuities and they could be commonly expressed as
\begin{gather}%\label{93}
\oint_{ 2b^-_j}F \frac{\delta w_2}{\delta A_i}+\oint_{( 2b^-_j)^{-1}}F \frac{\delta w_2}{\delta A_i}=\oint_{ 2b^-_j}\frac{\delta w_2}{\delta A_i}\oint_{a^-_j}w_1, \qquad j \neq i, \nonumber\\
%\label{94}
\oint_{ a^-_j}F \frac{\delta w_2}{\delta A_i}+\oint_{( a^-_j)^{-1}}F \frac{\delta w_2}{\delta A_i}=-\oint_{ a^-_j}\frac{\delta w_2}{\delta A_i}\oint_{2b^-_j}w_1, \qquad \forall j, \nonumber\\
 \label{95}\oint_{ 2b^+_j}F \frac{\delta w_2}{\delta A_i}+\oint_{( 2b^+_j)^{-1}}F \frac{\delta w_2}{\delta A_i}=\oint_{ 2b^+_j}\frac{\delta w_2}{\delta A_i}\oint_{a^+_j}w_1, \qquad \forall j, \\
\label{96}\oint_{ a^+_j}F \frac{\delta w_2}{\delta A_i}+\oint_{( a^+_j)^{-1}}F \frac{\delta w_2}{\delta A_i}=-\oint_{ a^+_j}\frac{\delta w_2}{\delta A_i}\oint_{2b^+_j}w_1, \qquad \forall j. \end{gather}
The integrals in \eqref{95}, \eqref{96} over $a^+$, $b^+$ cycles vanish due to skew symmetry of $w_1$.
Thus, \eqref{86} could be rewritten as
\begin{gather}
\frac{1}{2}\int_{\partial \hat{\mathcal{C}}_0}F\frac{\delta w_2}{\delta A_i}=-\sum^{g^-}_{j=1}\left(\oint_{b^-_j} w_1
\oint_{a^-_j}\frac{\delta w_2}{\delta A_i}-\oint_{a^-_j} w_1
 \oint_{b^-_j}\frac{\delta w_2}{\delta A_i}\right)\nonumber\\
 \hphantom{\frac{1}{2}\int_{\partial \hat{\mathcal{C}}_0}F\frac{\delta w_2}{\delta A_i}=}{}
 -\frac{w_2}{v}(p_0)\oint_{ b^-_i}w_1+\oint_{ b^-_i}\frac{w_1 w_2}{v}.\label{97}
\end{gather}
Comparing the expressions \eqref{85} and \eqref{97}, we see that
\begin{gather*}%\label{98}
\left[ \sum^{g^-}_{j=1}\left(\oint_{b^-_j} w_1 \frac{\delta}{\delta A_i}\oint_{a^-_j}w_2-\oint_{a^-_j} w_1 \frac{\delta }{\delta A_i}\oint_{b^-_j}w_2\right)\right]\delta A_i\\
\qquad{} =\left[-\frac{1}{2}\int_{\partial \hat{\mathcal{C}}_0}F\frac{\delta w_2}{\delta A_i}+\oint_{ b^-_i}\frac{w_1 w_2}{v}\right] \delta A_i \end{gather*}
Similarly, one can show that
\begin{gather*}%\label{99}
\left[ \sum^{g^-}_{j=1}\left(\oint_{b^-_j} w_1 \frac{\delta}{\delta B_i}\oint_{a^-_j}w_2-\oint_{a^-_j} w_1 \frac{\delta}{\delta B_i}\oint_{b^-_j}w_2\right)\right]\delta B_i\\
\qquad{} =\left[-\frac{1}{2}\int_{\partial \hat{\mathcal{C}}_0}F\frac{\delta w_2}{\delta B_i}-\oint_{ a^-_i}\frac{w_1 w_2}{v}\right] \delta B_i.
 \end{gather*}
Plugging these expressions into \eqref{82}, \eqref{83}, one obtains the formula \eqref{78}. \eqref{79} follows from~\eqref{78} by applying the Stokes' theorem and the fact that in the interior of $\hat{\mathcal{C}_0}$ away from poles, when differentiating with respect to any local coordinate $\xi$, one has
\begin{equation*}%\label{100}
{\rm d}_\xi\left( \frac{\delta w_2}{\delta \mathcal{P} }\int^x_{p_0} w_1\right)=\frac{\delta w_2}{\delta \mathcal{P} } \wedge w_1 =- w_1 \wedge \frac{\delta w_2}{\delta \mathcal{P} }=-{\rm d}_\xi\left(w_1 \int^x_{p_0}\frac{\delta w_2}{\delta \mathcal{P} }\right)=0,
\end{equation*}
where $\mathcal{P} \in (A_i, B_i)^{g^-}_{i=1} $
\end{proof}

\begin{proof}[Proof of Proposition \ref{Prop_3}]
\textit{Case of two nearby differentials} $Q_0, Q_1 \in \mathcal{Q}_{g,n}[\mathbf{r}] $: let $U$ be a~sim\-ply-connected neighborhood of $Q_0$ and take $Q_1 \in U.$ For sufficiently small $\hbar$ this differential may be expressed as $Q_1=Q_0+\hbar \tilde{Q},$ where $\tilde{Q}$ is a quadratic differential with at most simple poles at~$(z_j)^{n}_{j=1}$. The canonical cover $\hat{\mathcal{C}_\hbar}$, defined by $\big(v^\hbar_1\big)^2=Q_1$, becomes $\hbar$-dependent. Consider $\mathcal{P}_{\hbar}$ to be one of the periods $\big(A^{(1)}_i,B^{(1)}_i\big)$ of $v^\hbar_1$. Then its $k$-th derivative with respect to $\hbar$ is given~by\footnote{Double factorial is defined by $n!!:=n(n-2)(n-4)\cdots$.}
\begin{equation}\label{101}
\frac{\partial^k }{\partial \hbar^k}\mathcal{P}_{\hbar}\Big|_{\hbar=0} =
(-1)^{k+1}\frac{(2k-3)!!}{2^k}\oint_{s}\frac{\tilde{Q}^k}{v_{0}^{2k-1}},\end{equation}
where $s$ is an element of $H_{-}(\hat{\mathcal{C}}_0, \Z)$.

To justify these formulas consider $ \mathcal{P}_{\hbar}=\oint_{s(\hbar)}v^\hbar_{1}$ as the integral on the base curve $\mathcal{C}$ via the projection $\pi\colon \hat{\mathcal{C}}_\hbar \xrightarrow{} {\mathcal{C}}$. If cycle $s(\hbar)$ belongs to the subset~\eqref{32}, using skew-symmetry of $v^\hbar_1$ it projects onto one of homology cycles of $\mathcal{C}$, which is independent of $\hbar$. If $s(\hbar)$ is an element of~\eqref{33}, it projects onto the contour encircling or passing through branch cuts arranged between pairs of zeroes of $Q^\hbar_{1}$. Despite the positions of the branch points vary along with $\hbar$, we may assume that the projections $\pi(s(\hbar))$ are kept fixed on $\mathcal{C}$. Then one has
\begin{equation*}%\label{102}
\frac{\partial}{\partial \hbar}\oint_{s(\hbar)}v^\hbar_{1}=\frac{\partial}{\partial \hbar}\oint_{\pi(s(\hbar))}\sqrt{Q_0+\hbar \tilde{Q}}= \oint_{\pi(s(\hbar))}\frac{\partial}{\partial \hbar}\sqrt{Q_0+\hbar\tilde{Q}},
\end{equation*}
and the differentiation is followed by pullback to $\hat{\mathcal{C}}$. Higher derivatives are obtained the same way.

Applying this argument, we can expand period coordinates by powers of $\hbar.$ Write
\begin{align*}%\label{103}
\mathcal{P}_{\hbar}& = \mathcal{P}_{\hbar}\Big|_{\hbar=0}+ \frac{\partial}{\partial \hbar}\mathcal{P}_{\hbar}\Big|_{\hbar=0}\hbar+\dots +\frac{\partial^k}{\partial \hbar^k}\mathcal{P}_{\hbar}\Big|_{\hbar=0}\frac{\hbar^k}{k!}+\cdots\\
%\label{104}
& =\oint_{s} {v_0} + \frac{\hbar}{2}\oint_{s} \frac{\tilde{Q}}{ v_0}+\dots +\hbar^k\binom{\frac{1}{2}}{k}\oint_{s}\frac{\tilde{Q}^k}{v_0^{2k-1}}+\cdots.
\end{align*}
Plugging $\hbar$-expansions of the periods $\big(A^{(1)}_i,B^{(1)}_i\big)$ into the potential $\theta_1$ \eqref{69} and arranging terms by powers of $\hbar$, we write with the help of the pairing notation \eqref{77}:
\begin{gather}\label{105}
\theta_1=\left\langle \oint\! v_{1}, \delta \oint\! v_{1}\right\rangle =\left\langle \oint\! v_0, \delta \oint\! v_0\right\rangle + \hbar \left[\frac{1}{2}\left\langle \oint \!v_0, \delta \oint\! \frac{\tilde{Q}}{v_0}\right\rangle +\frac{1}{2}\left\langle \oint\! \frac{\tilde{Q}}{v_0}, \delta \oint\! v_0\right\rangle \right]\!\!\!\\
\hphantom{\theta_1=}{}
+\sum^{\infty}_{k=2}\hbar^k\Bigg[\binom{\frac{1}{2}}{k}\left(\left\langle \oint v_0,\delta \oint \frac{\tilde{Q}^k}{v_0^{2k-1}}\right\rangle +\left\langle \oint \frac{\tilde{Q}^k}{v_0^{2k-1}},\delta \oint v_0\right\rangle \right)\nonumber
\\
\hphantom{\theta_1=}{}
+\sum^{k-1}_{l=1} \binom{\frac{1}{2}}{l} \binom{\frac{1}{2}}{k-l}
\left\langle \oint \frac{\tilde{Q}^l}{v_0^{2l-1}}, \delta \oint \frac{\tilde{Q}^{k-l}}{v_0^{2(k-l)-1}} \right\rangle \Bigg].\label{106}
\end{gather}

We will treat separately the expressions near $\hbar^1$ and $\hbar^k$, $k \geq 2$.

\textit{Coefficient near $\hbar^1$}: noticing that
\begin{equation*}%\label{107}
\delta\left\langle \oint v_0, \oint \frac{\tilde{Q}}{v_0}\right\rangle =\left\langle \delta \oint v_0, \oint \frac{\tilde{Q}}{v_0}\right\rangle +\left\langle \oint v_0, \delta \oint \frac{\tilde{Q}}{v_0}\right\rangle,
\end{equation*}
the expression near $\hbar^1$ could be rewritten as
\begin{equation*}%\label{108}
\frac{1}{2}\delta\left\langle \oint v_0, \oint \frac{\tilde{Q}}{v_0}\right\rangle +\left\langle \oint \frac{\tilde{Q}}{v_0}, \delta \oint v_0\right\rangle .
\end{equation*}
 Applying the Riemann bilinear identity, the pairing $\big\langle \oint v_0, \oint \frac{\tilde{Q}}{v_0}\big\rangle $ could be written as the sum over residues inside the fundamental domain. Differential $\tilde{Q}$, being lifted to $\hat{\mathcal{C}}$, gains double zeroes at branch points $x_i$, which are the zeroes of $Q_0$, and simple poles at preimages of~$z_j$. That makes $\frac{\tilde{Q}}{v_0}$ holomorphic, while $v_0$ has poles at
$\big\{z^{(1)}_j, z^{(2)}_j \big\}^n_{j=1}$. Therefore, using that near $z^{(1)}_j$ \big($z^{(2)}_j$\big) in the local coordinate~\eqref{39} $v_0=\pm \frac{r_j}{\zeta_i}{\rm d}\zeta_i$, we write
\begin{equation}\label{109}\delta
\left[\sum^n_{j=1}\pi {\rm i} \underset{\{z^{(1)}_j, z^{(2)}_j \}}{\rm res}\left(v_0
\int^x_{p_0}\frac{\tilde{Q}}{v_0} \right)\right]=\delta\left[\sum^n_{j=1}\pi{\rm i} r_j
\int^{z^{(1)}_j}_{z^{(2)}_j}\frac{\tilde{Q}}{v_0} \right]. \end{equation}
We will also express the pairing $\big\langle \oint \frac{\tilde{Q}}{v_0}, \delta \oint v_0\big\rangle $ in a different form, introducing the system of local coordinates on $\mathcal{M}_{g,n}$. For simplicity here we restrict us to the case $g \geq 2$ (low genus cases $g=0,1$ could be covered by analogy following~\cite{korotkin2018periods}). At generic point of the moduli space~$\mathcal{M}_{g,n}$ the quadratic differential~$\tilde{Q}$ could be represented as a linear combination of $3g-3$ products of normalized holomorphic differentials $u_j u_k$, where $(jk) \in D$ for some subset~$D$ of entries of matrix~$\Omega$, and additional~$n$ quadratic differentials encoding the meromorphic part could be represented by the following generically meromorphic differentials~$Q_k$ whose only pole of order one located at~$z_k$:
\begin{equation}\label{110}
Q^{z_k}(t)=\frac{1}{4 \pi {\rm i}}\frac{u_i(t)u_j(z_k)-u_i(z_k)u_j(t)}{u_j^2(z_k)}B(t, z_k),
\end{equation}
here $u_i$ and $u_j$ are two arbitrary normalized holomorphic differentials such that $u_j(z_k) \neq 0$.

Using the variational formulas \eqref{52} after lifting the function $\frac{u_i}{u_j}(x)$ to $\hat{\mathcal{C}}$, we get
\begin{equation}\label{111}
\frac{\delta }{\delta \mathcal{P}_s}\left[\frac{u_i}{u_j}(z_k)\right]=\frac{\delta }{\delta \mathcal{P}_s}\Bigg[\frac{u_i}{u_j}\big(z^{(1)}_k\big)\Bigg]=\oint_{s^*}\frac{Q^{z^{(1)}_k}}{v_0},
\end{equation}
where $Q^{z^{(1)}_k}$ is a lift of $Q^{z_k}$ to $\hat{\mathcal{C}}$. The entries $\Omega_{jk}, \ (jk) \in D$ of the period matrix can serve as the moduli of the base curve $\mathcal{C},$ while $\frac{u_i}{u_j}(z_k):=q_k$ code the positions of poles, providing in total $3g-3+n$ local coordinates on $\mathcal{M}_{g,n}.$

At generic point of $\mathcal{M}_{g,n}$ quadratic differential $\tilde{Q}$ can be expressed as
\begin{equation*}%\label{112}
\tilde{Q}=\sum_{(jk) \in D}p_{jk} u_j u_k+ \sum^n_{l=1} p_l Q^{z_l}, \qquad p_{jk}, p_l \in \CC.\end{equation*}
Then applying variational formulas \eqref{51} and \eqref{111}, one has
\begin{gather}
\left\langle \oint \frac{\tilde{Q}}{v_0}, \delta \oint v_0\right\rangle =\sum^{g_-}_{j=1} \left[ \left(\oint_{b^-_j}\frac{\tilde{Q}}{v_0} \right)\delta A^{(0)}_j - \left(\oint_{a^-_j}\frac{\tilde{Q}}{v_0} \right)\delta B^{(0)}_j \right] \nonumber\\
\qquad{}
 =\sum_{(jk) \in D}p_{jk}\sum^{g_-}_{j=1} \left[ \left(\oint_{b^-_j}\frac{u_j u_k}{v_0} \right)
 \delta A^{(0)}_j - \left(\oint_{a^-_j}\frac{u_j u_k}{v_0} \right)\delta B^{(0)}_j \right ]\nonumber\\
 \qquad\quad{} + \sum^n_{l=1} p_l \sum^{g_-}_{j=1} \left[ \left(\oint_{b^-_j}\frac{Q^{z^{(1)}_l}}{v_0} \right)\delta A^{(0)}_j
 - \left(\oint_{a^-_j}\frac{Q^{z^{(1)}_l}}{v_0} \right)\delta B^{(0)}_j \right ]\nonumber\\
\qquad{}
=\sum_{(jk) \in D}\!p_{jk}\sum^{g_-}_{j=1} \!\left[ \frac{\partial \Omega_{ij}}{\partial A^{(0)}_j} \delta A^{(0)}_j +\frac{\partial \Omega_{ij}}{\partial B^{(0)}_j}\delta B^{(0)}_j \right]+\sum^n_{l=1} p_l \sum^{g_-}_{j=1} \!\left[ \frac{\partial q_l}{\partial A^{(0)}_j}\delta A^{(0)}_j + \frac{\partial q_l}{\partial B^{(0)}_j}\delta B^{(0)}_j \right ]
\nonumber\\
\qquad{}
=\sum_{(jk) \in D}p_{jk} \delta \Omega_{jk}+ \sum^n_{l=1} p_l \delta q_l. \label{114}
\end{gather}
Therefore, the term near $\hbar^1$ becomes
\begin{equation}\label{115}\delta\left[\sum^n_{j=1}\frac{\pi {\rm i} r_j}{2}
\int^{z^{(1)}_j}_{z^{(2)}_j}\frac{\tilde{Q}}{v_0} \right]+\sum_{(jk) \in D}p_{jk} \delta \Omega_{jk}+ \sum^n_{l=1} p_l \delta q_l.\end{equation}

\textit{Coefficients near $\hbar^k$, $k \geq 2$}: by Lemma \ref{Lem_1}, we can rewrite the pairings appearing in \eqref{106} as
\begin{gather*}%\label{116}
\left\langle \oint \frac{\tilde{Q}^l}{v_0^{2l-1}}, \delta \oint \frac{\tilde{Q}^{k-l}}{v_0^{2(k-l)-1}}\right\rangle =\frac{1}{2}\int_{\partial \hat{\mathcal{C}}_0}\left( \frac{\tilde{Q}^l}{v_0^{2l-1}}
\int^x_{p_0}\delta \frac{\tilde{Q}^{k-l}}{v_0^{2(k-l)-1}}\right)+\left\langle \oint\frac{\tilde{Q}^k}{v_0^{2k-1}},\delta \oint v_0\right\rangle. \end{gather*}
Also
\begin{equation*}%\label{117}
\left\langle \oint v_0, \delta \oint \frac{\tilde{Q}^k}{v_0^{2k-1}}\right\rangle =\frac{1}{2}\int_{\partial \hat{\mathcal{C}}_0}\left(v_0
\int^x_{p_0}\delta\frac{\tilde{Q}^k}{v_0^{2k-1}}\right)+\left\langle \oint \frac{\tilde{Q}^k}{v_0^{2k-1}},\delta \oint v_0\right\rangle.
\end{equation*}
Thus, the expression \eqref{106} near $\hbar^k$, $k \geq 2$ becomes
\begin{gather}\label{118}
\binom{\frac{1}{2}}{k}\frac{1}{2}\int_{\partial \hat{\mathcal{C}}_0}\left(v_0
\int^x_{p_0}\delta\frac{\tilde{Q}^k}{v_0^{2k-1}}\right)
-\sum^{k-1}_{l=1}\binom{\frac{1}{2}}{l} \binom{\frac{1}{2}}{k-l}\frac{1}{2}\int_{\partial
\hat{\mathcal{C}}_0}\Bigg(\delta \frac{\tilde{Q}^{k-l}}{v_0^{2(k-l)-1}}
\int^x_{p_0}\frac{\tilde{Q}^l}{v_0^{2l-1}}\Bigg)\\
 \label{119}\qquad{} +\left[ \sum^{k}_{l=0}\binom{\frac{1}{2}}{l}\binom{\frac{1}{2}}{k-l}\right]
 \left\langle \oint \frac{\tilde{Q}^k}{v_0^{2k-1}},\delta \oint v_0\right\rangle .
 \end{gather}

Using the identity
\begin{equation*}%\label{120}
1+t=\big(\sqrt{1+t}\big)^2=\left(\sum^{\infty}_{k=0}\binom{\frac{1}{2}}{k}t^k \right)^2=\sum^{\infty}_{k=0}t^k \left[\sum^k_{l=0}\binom{\frac{1}{2}}{l} \binom{\frac{1}{2}}{k-l} \right], \qquad |t| \leq 1, \end{equation*}
and comparing the expressions near same powers of $t$ we conclude that the piece \eqref{119} is zero. Further, we can represent the expression in~\eqref{118} as the sum over residues at the branch points~$x_i$. Notice that the first term also has additional residues near $\big\{z^{(1)}_j, z^{(2)}_j \big\}^n_{j=1}$ due to simple poles of~$v_0$:
\begin{gather}\label{121}
\sum^n_{j=1}\binom{\frac{1}{2}}{k}\pi {\rm i}\underset{\{z^{(1)}_j, z^{(2)}_j \}}{\rm res}\left(v_0
\int^x_{p_0}\delta\frac{\tilde{Q}^k}{v_0^{2k-1}} \right)\\
 \label{122}
 {}+\sum^{4g-4+2n}_{i=1}\pi{\rm i} \, \underset{x_i}{\rm res}\left[\binom{\frac{1}{2}}{k}v_0
\int^x_{p_0}\delta\frac{\tilde{Q}^k}{v_0^{2k-1}}+ \sum^{k-1}_{l=1}\binom{\frac{1}{2}}{l} \binom{\frac{1}{2}}{k-l}\left(\frac{\tilde{Q}^l}{v_0^{2l-1}}
\int^x_{p_0}\delta \frac{\tilde{Q}^{k-l}}{v_0^{2(k-l)-1}}\right)\right].
\end{gather}
Similarly to \eqref{109}, the sum \eqref{121} could be rewritten as
\begin{equation}\label{123}\delta\left[\sum^n_{j=1}\binom{\frac{1}{2}}{k}
\pi {\rm i} r_j \int^{z^{(1)}_j}_{z^{(2)}_j}\frac{\tilde{Q}^k}{v_0^{2k-1}} \right],\end{equation}
here we used that the derivatives with respect to the coordinates $\big(A^{(0)}_i, B^{(0)}_i\big)^{g^-}_{i=1}$ commute with the integral.

Consider the expression defined on the double cover $\pi\colon \hat{\mathcal{C}}_0 \xrightarrow[]{} \mathcal{C}$, given by $v^2_0=Q_0$ in $T^*\mathcal{C}$:
\begin{equation*}%\label{124}
\underset{x_i}{\rm res}\left(\sqrt{Q_0+\hbar \tilde{Q}}\int^x_{p_0} \delta \sqrt{Q_0+\hbar \tilde{Q}}\right),
\end{equation*} where $x_i$ is a zero of $Q_0$ on~$\mathcal{C}$.

Formally, the Abelian differential $\hat{v}_\hbar=\sqrt{Q_0+\hbar \tilde{Q}}$ is globally defined on the $h$-dependent double cover $\hat{\pi}\colon \hathat{\mathcal{C}}_\hbar \xrightarrow[]{} \hat{\mathcal{C}}_0$, given by $(\hat{v}_\hbar)^2=Q_0+\hbar \tilde{Q}$ in $T^*\hat{\mathcal{C}}_0$ (note that $\hat{\mathcal{C}}_0$ itself is a double cover of $\mathcal{C}$).
Lifted from $\mathcal{C}$ to $\hat{\mathcal{C}}_0$, $Q_0$ has a 4th-order zeros at $x_i$ and double poles at $\big(z^{(1)}_j, z^{(2)}_j\big)^n_{j=1}$, while $\tilde{Q}$ gains a 2nd-order zero at $x_i$ and simple poles at $(z^{(1)}_j, z^{(2)}_j)$.
Thus, the map $\hat{\pi}$ is brached at $8g-8+4n$ simple zeroes $\tilde{x}^\hbar_j$ of ${Q_0+\hbar \tilde{Q}}$.
The double cover $\hathat{\mathcal{C}}_\hbar$ is smooth everywhere except for preimages of double zeroes $(x_i)^{4g-4+2n}_{i=1}$ of $Q_0+\hbar \tilde{Q}$, where $\hathat{\mathcal{C}}_\hbar$ gains nodes. The genus~$\hathat{g}$ of~$\hathat{\mathcal{C}}_\hbar$ equals $12g-11+4n$. Letting $\hbar \xrightarrow[]{} 0$, the nodes smoothen out and the covering surface~$\hathat{\mathcal{C}}_\hbar$ degenerates to the pair of smooth surfaces $\hat{\mathcal{C}}^{(1,2)}_0$. On the base curve $\hat{\mathcal{C}}_0$ that corresponds to the merging of triplets of points: two simple zeroes $\tilde{x}^\hbar_{i_1}$, $x^\hbar_{i_2}$ of ${Q_0+\hbar \tilde{Q}}$ converge to a double zero at~$x_i$, increasing its multiplicity to~4.

In the local coordinate $z(x)$ on $\hat{\mathcal{C}}_0$: $Q_0=v_0^2={\rm d}z^2$, $\tilde{Q}=\tilde{Q}(z){\rm d}z^2$. Differentiating with respect to the coordinates $\big(A^{(0)}_i, B^{(0)}_i\big)^{g^-}_{i=1}$ according to the rule \eqref{80}, when the coordinate $z(x)$ is kept fixed, one has that the residue could be written as
\begin{equation*}%\label{125}
\underset{x_i}{\rm res}\left(\sqrt{Q_0+\hbar \tilde{Q}}\int^x_{p_0} \frac{\hbar \delta \tilde{Q}}{2\sqrt{Q_0+\hbar \tilde{Q}}}\right)=0.
\end{equation*}
The residue vanishes since the expression inside is holomorphic at $x_i$. Then we can expand the left-hand side by powers of $\hbar$ and observe that the coefficients near the powers of $\hbar$ in the series are exactly the terms appearing in the sum \eqref{122}. It follows that these coefficients must vanish too. Thus, the coefficient near $\hbar^n, \ n \geq 2$ reduces to the expression \eqref{123}.

\textit{Full expansion}: combining \eqref{115} and \eqref{123}, we have that
\begin{gather}\label{126}\theta_1- \theta_0=\delta\left[\sum^n_{j=1}
\pi {\rm i} r_j \left[\sum^{\infty}_{k=1}\hbar^k\binom{\frac{1}{2}}{k}\int^{z^{(1)}_j}_{z^{(2)}_j}\frac{\tilde{Q}^k}{v_0^{2k-1}} \right]\right] + \hbar \left[\sum_{(jk) \in D}p_{jk} \delta \Omega_{jk}+ \sum^n_{l=1} p_l \delta q_l \right]. \end{gather}

We further notice, similarly to the argument in \eqref{101}, that the infinite series is formally the Taylor expansion by powers of $\hbar$ of the expression
\begin{equation*}%\label{127}
\int^{z^{(1)}_j(\hbar)}_{z^{(2)}_j(\hbar)}v^\hbar_1 - \int^{z^{(1)}_j}_{z^{(2)}_j}v_0 .
\end{equation*}
The issue is that the integrands are singular at the end points of the integration path, and one requires a regularization of the integral to have a proper identity.
Considering the regularization proposed in~\eqref{76}, one can see that
\begin{equation*}%\label{128}
{\rm reg}\int^{z^{(1)}_j(\hbar)}_{z^{(2)}_j(\hbar)}v^\hbar_1 ={\rm reg}\int^{z^{(1)}_j}_{z^{(2)}_j}v_0 + \sum^{\infty}_{k=1}\hbar^k \binom{\frac{1}{2}}{k}\int^{z^{(1)}_j}_{z^{(2)}_j}\frac{\tilde{Q}^k}{v_0^{2k-1}} ,
\end{equation*}
where the integrals near $h^n$ are already regular.
Using that, we rewrite the difference of potentials~as
\begin{gather}
\theta_1- \theta_0=\delta\left[\sum^n_{j=1}
\pi {\rm i} r_j \left({\rm reg}\int^{z^{(1)}_j(\hbar)}_{z^{(2)}_j(\hbar)}v^\hbar_1 - {\rm reg}\int^{z^{(1)}_j}_{z^{(2)}_j}v_0 \right)\right] \nonumber\\
\hphantom{\theta_1- \theta_0=}{}
+ \hbar \left[\sum_{(jk) \in D}p_{jk} \delta \Omega_{jk}+ \sum^n_{l=1} p_l \delta q_l \right].\label{129}
 \end{gather}

\textit{Case of two arbitrary differentials} $Q_0, Q_1 \in \mathcal{Q}_{g,n}[\mathbf{r}] $.
Theorem~1.3 of~\cite{chen2021classification} asserts that generically (outside of hyperelliptic locus for $g \geq 3$) space $\mathcal{Q}_{g,n}$, and, thus $\mathcal{Q}_{g,n}[\mathbf{r}],$ is connected. Let $\gamma_t=Q_t\colon [0,1] \xrightarrow{} \mathcal{Q}_{g,n}[\mathbf{r}] $ be a path such that $\gamma(0)=Q_0$, $\gamma(1)=Q_1.$ For each~$t$, the function $\oint_{s_i} v_t$ is holomorphic on $\mathcal{Q}_{g,n}[\mathbf{r}]$ and could be expanded by the Taylor series in some simply-connected open neighborhood $U_t$ of $Q_t$. Then $\bigcup_t U_t$ provides and open cover for $\gamma_t$. Due to compactness of $\gamma_t$, we can choose some finite subcover $\bigcup_{t_i} U_{t_i}$, $i \in [[0,N]].$ We can assume $Q_0=Q_{\hat{t}_0} \in U_{t_0}$, $Q_1=Q_{\hat{t}_{N+1}} \in U_{t_N}$ and take $Q_{\hat{t}_i} \in \gamma \cap U_{t_{i}} \cap U_{t_{i-1}}$, $i \in [[1,N]]$.
Due to~\eqref{129},
\begin{gather*}%\label{130}
\theta_{\hat{t}_{i+1}}-\theta_{\hat{t}_i}= \delta \left[\sum^n_{j=1} \pi {\rm i} r_j \left({\rm reg}\int^{z^{(1)}_j}_{z^{(2)}_j}{v_{\hat{t}_{i+1}}}-{\rm reg}\int^{z^{(1)}_j}_{z^{(2)}_j}{v_{\hat{t}_i}} \right) \right]\\
\hphantom{\theta_{\hat{t}_{i+1}}-\theta_{\hat{t}_i}=}{}
+\left[\sum_{(jk) \in D}p^{(\hat{t}_{i+1}, \hat{t}_i)}_{jk} \delta \Omega_{jk}+\sum^n_{j=1}p^{(\hat{t}_{i+1}, \hat{t}_i)}_l \delta q_l \right],
\end{gather*}
where $\big(p^{(\hat{t}_{i+1}, \hat{t}_i)}_{jk}, p^{(\hat{t}_{i+1}, \hat{t}_i)}_l\big)$ are coefficients of the linear representation of $\big(Q_{\hat{t}_{i+1}}-Q_{\hat{t}_{i}}\big)$ in the basis $\big(u_j u_k, Q^{z_l}\big)$.

Then applying the telescoping series, one has
\begin{gather}\theta_{1}-\theta_{0}=\sum^{N}_{i=0}(\theta_{\hat{t}_{i+1}}-\theta_{\hat{t}_i})= \delta \left[\sum^n_{j=1} \pi{\rm i} r_j \left({\rm reg}\int^{z^{(1)}_j}_{z^{(2)}_j}v_1-{\rm reg}\int^{z^{(1)}_j}_{z^{(2)}_j} v_0 \right) \right]\nonumber\\
\hphantom{\theta_{1}-\theta_{0}=}{}
+\left[\sum_{(jk) \in D}p^{(1,0)}_{jk} \delta \Omega_{jk}+\sum^n_{l=1}p^{(1,0)}_l \delta q_l\right],\label{131}
\end{gather}
where the latter expression, using \eqref{66}, is exactly the 1-form $\frac{1}{2}\Theta_{(S_0-S_1)}.$ By the assumption, coefficients $p^{(1,0)}_{jk}$ and $p^{(1,0)}_l$ are holomorphic functions of $(\Omega_{jk},q_l)$.
Applying differential to both sides of \eqref{131} one obtains the first statement of the proposition. If $\Theta_{(S_0 -S_1)}$ is assumed closed, then by the Poincar\'e lemma it could be locally integrated, leading to the second statement.
\end{proof}

Combining the result of the Theorem \ref{Th_4} and Proposition \ref{Prop_3}, we can formulate a condition for $S$ to become admissible.
\begin{Theorem} \label{Th_5}
The monodromy map
\begin{equation*}%\label{133}
\mathcal{F}_{(S)}\colon \ \mathcal{Q}_{g,n}[\mathbf{r}] \xrightarrow[]{} {\rm CV}_{g,n}[\mathbf{m}]\end{equation*}
is a symplectomorphism with $\mathcal{F}_{(S)}^*\Omega_G=-\Omega_{\hom}$
 if and only if the $1$-form $\Theta_{(S-S_B)}$, corresponding to family of quadratic differentials $S-S_B$ and locally defined on $\mathcal{M}_{g,n}$,
is closed, $\delta\Theta_{(S-S_B)}=0$. Equivalently, if and only if there exists a~local holomorphic function $\Hat{G}_{(S-S_B)}$ on $\mathcal{M}_{g,n}$, such that%
\begin{equation}\label{134}
\delta\Hat{G}_{(S-S_B)}=\Theta_{(S-S_B)}.
\end{equation}
\end{Theorem}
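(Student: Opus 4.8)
The plan is to deduce Theorem~\ref{Th_5} from the already-established admissibility of the Bergman connection (Theorem~\ref{Th_4} together with its Corollary, $\mathcal{F}_{(S_B)}^*\Omega_G=-\Omega_{\hom}$) and from the symplectic criterion for the shift map $\mathcal{H}$ supplied by Proposition~\ref{Prop_3}. Concretely, I would specialize the setup of \eqref{66}--\eqref{67} to $S_0=S_B$ and $S_1=S$, so that writing the same Schr\"odinger equation in the two ways forces the triangle to commute, $\mathcal{F}_{(S_B)}=\mathcal{F}_{(S)}\circ\mathcal{H}$, where $\mathcal{H}\colon Q_0\mapsto Q_0+\tfrac12(S_B-S)$. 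The first thing to record is that $\mathcal{H}$ is a biholomorphism of $\mathcal{Q}_{g,n}[\mathbf{r}]$: since $S_B-S$ has at most simple poles it leaves the double-pole coefficients untouched, so $\mathcal{H}$ fixes the biresidues and preserves the stratum, and on each affine fiber over $\mathcal{M}_{g,n}$ it is the translation by the fixed section $\tfrac12(S_B-S)$, inverted by the opposite translation. In particular $\mathcal{H}^*$ is injective on differential forms, which is the lever that drives the argument.

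For the equivalence of the first two assertions I would run both directions through the identity $\mathcal{F}_{(S_B)}^*\Omega_G=\mathcal{H}^*\big(\mathcal{F}_{(S)}^*\Omega_G\big)$ obtained by pulling $\Omega_G$ back along the commuting triangle. Assume first $\delta\Theta_{(S-S_B)}=0$. Because the assignment $S_0-S_1\mapsto\Theta_{(S_0-S_1)}$ is linear, $\Theta_{(S_B-S)}=-\Theta_{(S-S_B)}$ is closed too, so Proposition~\ref{Prop_3}(1) gives $\mathcal{H}^*\Omega_{\hom}=\Omega_{\hom}$, i.e.\ $\mathcal{H}$ is a symplectomorphism. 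Feeding the Corollary into the triangle identity yields $\mathcal{H}^*\big(\mathcal{F}_{(S)}^*\Omega_G\big)=-\Omega_{\hom}=\mathcal{H}^*(-\Omega_{\hom})$, and injectivity of $\mathcal{H}^*$ cancels the shift to leave $\mathcal{F}_{(S)}^*\Omega_G=-\Omega_{\hom}$. Conversely, if $\mathcal{F}_{(S)}^*\Omega_G=-\Omega_{\hom}$, the same identity together with the Corollary gives $\mathcal{H}^*(-\Omega_{\hom})=-\Omega_{\hom}$, hence $\mathcal{H}^*\Omega_{\hom}=\Omega_{\hom}$, and Proposition~\ref{Prop_3}(1) returns the closedness of $\Theta_{(S_B-S)}$, equivalently of $\Theta_{(S-S_B)}$.

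The last equivalence, $\delta\Theta_{(S-S_B)}=0$ if and only if $\Theta_{(S-S_B)}=\delta\Hat{G}_{(S-S_B)}$ for a local holomorphic $\Hat{G}_{(S-S_B)}$, is the holomorphic Poincar\'e lemma. In the coordinates $(\Omega_{jk},q_l)$ of \eqref{form} the form $\Theta_{(S-S_B)}$ has holomorphic coefficients $(p_{jk},p_l)$, so $\delta\Theta_{(S-S_B)}=0$ amounts to the symmetry of their first moduli-derivatives, which on a simply-connected chart is exactly the integrability condition for a holomorphic primitive; one direction is immediate from $\delta\delta=0$ and the other is the local existence statement.

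I expect the genuine analytic difficulty to lie not in Theorem~\ref{Th_5} but upstream in Proposition~\ref{Prop_3}, whose proof carries the perturbative $\hbar$-expansion of the periods and the modified bilinear-identity bookkeeping of Lemma~\ref{Lem_1}. Within the present statement the only points demanding care are the verification that $\mathcal{H}$ is a stratum-preserving diffeomorphism (so that $\mathcal{H}^*$ may be cancelled rather than merely applied) and the tracking of the sign under the swap $S_B-S\leftrightarrow S-S_B$; both become routine once the linearity of $S\mapsto\Theta_{(\cdot)}$ and the pole structure of $S_B-S$ are noted.
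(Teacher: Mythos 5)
Your proposal is correct and follows essentially the same route as the paper, which presents Theorem~\ref{Th_5} precisely as the combination of Theorem~\ref{Th_4} (admissibility of $S_B$) with Proposition~\ref{Prop_3} applied to $S_0=S_B$, $S_1=S$ via the commuting triangle through $\mathcal{H}$, plus the Poincar\'e lemma for the final equivalence. The only difference is that you spell out the cancellation of $\mathcal{H}^*$ and the sign bookkeeping, details the paper leaves implicit.
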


The computation similar to \eqref{114} (performed backwards) allows us to characterize the admissible projective connection in terms of the 1-form defined on $\mathcal{Q}_{g,n}[\mathbf{r}]$ in period coordinates.

\begin{Corollary}
The projective connection $S \in \mathbb{S}_{g,n}$ is admissible if and only if the following locally defined $1$-form on $\mathcal{Q}_{g,n}[\mathbf{r}]$
\begin{equation*}
\Theta_{(S-S_B)}=\sum^{g^-}_{j=1} \left [ \left(\oint_{b^-_j}\frac{S-S_B}{v} \right)\delta A_j - \left(\oint_{a^-_j}\frac{S-S_B}{v} \right)\delta B_j \right]
\end{equation*}
is closed, $\delta\Theta_{(S-S_B)}=0.$
\end{Corollary}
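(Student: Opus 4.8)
The plan is to recognize the displayed period-coordinate 1-form as the pullback to $\mathcal{Q}_{g,n}[\mathbf{r}]$ of the 1-form $\Theta_{(S-S_B)}$ already defined on $\mathcal{M}_{g,n}$ by \eqref{form}, and then to reduce the assertion to Theorem~\ref{Th_5}. First I would observe that, by the definition of the pairing \eqref{77} together with $A_j=\oint_{a^-_j}v$ and $B_j=\oint_{b^-_j}v$, the 1-form in the statement is literally
\begin{equation*}
\left\langle \oint \frac{S-S_B}{v}, \delta\oint v\right\rangle,
\end{equation*}
where $\frac{S-S_B}{v}$ is viewed as a meromorphic differential on $\hat{\mathcal{C}}$; it is skew-symmetric under $\mu$ because $S-S_B$ lifts to an even quadratic differential while $v$ is odd. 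This is exactly the type of object handled by \eqref{114}.

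Then I would run the computation \eqref{114} backwards. As recorded after \eqref{12}, the difference $S-S_B$ is a family of quadratic differentials with at most simple poles at the punctures, depending holomorphically on the moduli of $\mathcal{M}_{g,n}$; hence at a generic point (outside the hyperelliptic locus for $g\geq3$) it decomposes as $S-S_B=\sum_{(jk)\in D}p_{jk}u_j u_k+\sum^n_{l=1}p_l Q^{z_l}$ with $p_{jk},p_l$ holomorphic in $(\Omega_{jk},q_l)$. Lifting $\frac{S-S_B}{v}$ to $\hat{\mathcal{C}}$ and applying the variational formulas \eqref{51} and \eqref{111} term by term, exactly as in the chain of equalities leading to \eqref{114}, the pairing collapses to
\begin{equation*}
\left\langle \oint \frac{S-S_B}{v}, \delta\oint v\right\rangle=\sum_{(jk)\in D}p_{jk}\,\delta\Omega_{jk}+\sum^n_{l=1}p_l\,\delta q_l=\Theta_{(S-S_B)},
\end{equation*}
the right-hand side being precisely the 1-form \eqref{form} pulled back along the forgetful projection $\pi\colon\mathcal{Q}_{g,n}[\mathbf{r}]\to\mathcal{M}_{g,n}$ sending $(\mathcal{C},Q)$ to $(\mathcal{C},(z_j))$. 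The low-genus cases $g=0,1$ I would treat by analogy, using the special local coordinates of~\cite{korotkin2018periods}.

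Finally, since $\pi$ is a surjective submersion (its fibers are the positive-dimensional spaces of admissible quadratic differentials over a fixed pointed curve), the pullback $\pi^*$ is injective on differential forms and commutes with $\delta$, so $\pi^*(\delta\Theta_{(S-S_B)})=\delta(\pi^*\Theta_{(S-S_B)})$; consequently the displayed 1-form on $\mathcal{Q}_{g,n}[\mathbf{r}]$ is closed if and only if $\Theta_{(S-S_B)}$ is closed on $\mathcal{M}_{g,n}$. Invoking Theorem~\ref{Th_5}, the latter is equivalent to the admissibility of $S$, which proves the corollary. I expect the genuine work to lie not in the variational computation, which is routine once \eqref{114} is read in reverse, but in two bookkeeping points: verifying that the coefficients $(p_{jk},p_l)$ produced by the variational formulas are exactly those entering \eqref{form}, and justifying carefully the equivalence of closedness upstairs and downstairs across the submersion $\pi$.
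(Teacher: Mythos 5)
Your proposal is correct and follows essentially the same route the paper intends: the paper derives this corollary from Theorem~\ref{Th_5} by exactly the remark that the computation \eqref{114} ``performed backwards'' identifies the period-coordinate pairing $\bigl\langle \oint \frac{S-S_B}{v}, \delta\oint v\bigr\rangle$ with the pullback of the 1-form \eqref{form} from $\mathcal{M}_{g,n}$. Your additional care about the injectivity of pullback along the surjective submersion $\mathcal{Q}_{g,n}[\mathbf{r}]\to\mathcal{M}_{g,n}$ is a correct filling-in of a step the paper leaves implicit.
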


The following corollary gives an alternative characterization of admissible projective connections which
does not refer to the Bergman projective connection:

\begin{Corollary}The projective connection $S \in \mathbb{S}_{g,n}$ is admissible if and only if the following locally defined $1$-form on $\mathcal{Q}_{g,n}[\mathbf{r}]$
\begin{equation*}
\Theta_{(S-S_v)}=\sum^{g^-}_{j=1} \left[ \left(\oint_{b^-_j}\frac{S-S_v}{v} \right)\delta A_j - \left(\oint_{a^-_j}\frac{S-S_v}{v} \right)\delta B_j \right]
\end{equation*}
is closed, $\delta\Theta_{(S-S_v)}=0.$
\end{Corollary}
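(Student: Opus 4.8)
The plan is to reduce the claim to the preceding Corollary (equivalently Theorem~\ref{Th_5}), which already characterizes admissibility of $S$ through the closedness of $\Theta_{(S-S_B)}$. The key observation is that the assignment $W\mapsto\Theta_{(W)}$ is linear in the quadratic differential $W$, since the periods $\oint_{a^-_j}\frac{W}{v}$ and $\oint_{b^-_j}\frac{W}{v}$ depend linearly on $W$. As $(S-S_v)-(S-S_B)=S_B-S_v$, I would first record the decomposition
\begin{equation*}
\Theta_{(S-S_v)}=\Theta_{(S-S_B)}+\Theta_{(S_B-S_v)},
\end{equation*}
so that everything comes down to showing $\delta\Theta_{(S_B-S_v)}=0$.

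To handle the correction term I would use the definition \eqref{q_form} of the meromorphic function $q$, namely $S_B-S_v=2qv^2$, whence $\frac{S_B-S_v}{v}=2qv$ and
\begin{equation*}
\Theta_{(S_B-S_v)}=2\sum^{g^-}_{j=1}\left[\left(\oint_{b^-_j}qv\right)\delta A_j-\left(\oint_{a^-_j}qv\right)\delta B_j\right].
\end{equation*}
Comparing this with the expression \eqref{60} for the restricted differential of the Bergman tau-function, I expect to identify $\Theta_{(S_B-S_v)}=24\pi{\rm i}\,\delta\log\tau_B|_r$. Since this is the $\delta$-differential of the locally defined function $\log\tau_B|_r$, the correction form is exact, hence closed: $\delta\Theta_{(S_B-S_v)}=24\pi{\rm i}\,\delta^2\log\tau_B|_r=0$.

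Putting these together gives $\delta\Theta_{(S-S_v)}=\delta\Theta_{(S-S_B)}$, so $\Theta_{(S-S_v)}$ is closed precisely when $\Theta_{(S-S_B)}$ is, and admissibility of $S$ then follows from the preceding Corollary. The only genuinely nontrivial step is the recognition of $\Theta_{(S_B-S_v)}$ as a constant multiple of $\delta\log\tau_B|_r$; the remaining manipulations are purely formal. The practical payoff, worth emphasizing, is that $S_v$ is constructed intrinsically from $v$ (hence from $Q$) and carries no dependence on the Torelli marking implicit in $S_B$, which is exactly why this criterion does not refer to the Bergman projective connection.
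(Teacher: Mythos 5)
Your proposal is correct and is essentially the paper's own argument: the paper likewise notes that $\Theta_{(S-S_v)}$ and $\Theta_{(S-S_B)}$ differ by $(24\pi{\rm i})\,\delta\log\tau_B|_r$, which follows from \eqref{60} together with $S_B-S_v=2qv^2$, and that this exact (hence closed) correction makes the two closedness conditions equivalent. Your identification of the constant $24\pi{\rm i}$ and the reduction to the preceding Corollary match the paper exactly.
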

\begin{proof}
Notice, that from \eqref{60} it follows that 1-forms
$\Theta_{(S-S_v)}$ and $\Theta_{(S-S_B)}$ differ by the closed form $(24 \pi {\rm i})\delta \log \tau_B |_r$
so, their conditions of closedness are equivalent.
\end{proof}

In \cite{Bertola_2017}, authors discussed alternative ways of fixing the reference projective connection. It was showed that if $S$ is chosen to be either Schottky, Wirtinger or Bers projective connection, it is equivalent to the Bergman projective connection $S_B$ in the sense \eqref{134}. While explicit formulas~$\Hat{G}_{(S-S_B)}$ for Schottky and Wirtinger connections were derived in \cite{Bertola_2017}, for Bers connection it was only conjectured, and recently proven in \cite{https://doi.org/10.48550/arxiv.2109.02033}. Moreover,
the definition of Bergman projective connection itself depends on the choice of Torelli marking on~$\mathcal{C}.$ Let two Torelli markings $\alpha^\sigma$ and $\alpha$ be related by ${\rm Sp}(2g,\Z)$ matrix
\begin{equation}\label{matrix_hom}
\sigma=\begin{pmatrix}
C & A \\
D & B
\end{pmatrix}\colon \
\begin{pmatrix}
b \\
a
\end{pmatrix}^{\sigma}= \sigma
\begin{pmatrix}
b \\
a
\end{pmatrix}.
\end{equation}
Then two corresponding Bergman projective connections $S^{\sigma}_B$ and $S_B$ are related by
\begin{equation}\label{136}
S^{\sigma}_B=S_B-12 \pi {\rm i} \sum_{1 \leq j \leq k \leq g}u_j u_k \frac{\delta}{\delta \Omega_{jk}}\log \det (C \Omega +D).
\end{equation}
and also equivalent due \eqref{134} with the generating function $\Hat{G}_{(S^{\sigma}_B-S_B)}$ given by
\begin{equation}\label{gen}
\Hat{G}_{(S^{\sigma}_B-S_B)}=-12 \pi {\rm i} \log \det (C \Omega +D).
\end{equation}
That allows us to formulate the following corollary of Theorem~\ref{Th_5}.

\begin{Corollary}
If $S \in \mathbb{S}_{g,n}$ is chosen to be either Bergman $($corresponding to any Torelli marking$)$, Schottky, Wirtinger or Bers $($defined with respect their own data$)$ then the monodromy map
\begin{equation*}%\label{sym}
\mathcal{F}_{(S)}\colon \ \mathcal{Q}_{g,n}[\mathbf{r}] \xrightarrow[]{} {\rm CV}_{g,n}[\mathbf{m}]
\end{equation*} is a symplectomorphism with $\mathcal{F}_{(S)}^*\Omega_G=-\Omega_{\hom}$.
\end{Corollary}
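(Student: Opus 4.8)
The plan is to read the statement off Theorem~\ref{Th_5}, which reduces the admissibility of any section $S \in \mathbb{S}_{g,n}$ (equivalently, the identity $\mathcal{F}_{(S)}^*\Omega_G = -\Omega_{\hom}$) to the existence of a local holomorphic primitive $\Hat{G}_{(S-S_B)}$ on $\mathcal{M}_{g,n}$ satisfying \eqref{134}. Crucially, the $1$-form $\Theta_{(S-S_B)}$ depends only on the family of quadratic differentials $S-S_B$ on $\mathcal{M}_{g,n}$ through \eqref{form}, and not on the chosen stratum $\mathcal{Q}_{g,n}[\mathbf{r}]$ or on the double-pole data of $Q$. Hence the task decouples into four independent verifications, each of which is either immediate or already available in the literature.

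First I would dispose of the Bergman cases. For $S=S_B$ computed with the reference Torelli marking the difference vanishes identically, so \eqref{134} holds trivially with $\Hat{G}_{(S-S_B)}=0$ and one recovers Theorem~\ref{Th_4}. For the Bergman connection $S^{\sigma}_B$ attached to a second marking related by $\sigma \in {\rm Sp}(2g,\Z)$, formula \eqref{136} exhibits $S^{\sigma}_B-S_B$ as the holomorphic quadratic differential $-12\pi {\rm i}\sum_{1\leq j\leq k\leq g}u_j u_k \frac{\delta}{\delta\Omega_{jk}}\log\det(C\Omega+D)$, while \eqref{gen} supplies the explicit primitive $\Hat{G}_{(S^{\sigma}_B-S_B)}=-12\pi {\rm i}\log\det(C\Omega+D)$. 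The only point to check is that this is a bona fide local holomorphic function: since $\Omega$ lies in the Siegel upper half-space and $\sigma$ is symplectic, $C\Omega+D$ is invertible, so $\det(C\Omega+D)$ is holomorphic and nowhere vanishing and admits a local holomorphic logarithm; differentiating it in the chart coordinates $(\Omega_{jk},q_l)$ reproduces the coefficients read off from \eqref{136}, giving $\delta\Hat{G}_{(S^{\sigma}_B-S_B)}=\Theta_{(S^{\sigma}_B-S_B)}$.

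It remains to handle the Schottky, Wirtinger and Bers connections, and here I would import the known results. The explicit primitives $\Hat{G}_{(S-S_B)}$ for the Schottky and Wirtinger connections were computed in \cite{Bertola_2017}, and for the Bers connection the existence of a primitive, conjectured there, was proved in \cite{https://doi.org/10.48550/arxiv.2109.02033}; in each case this furnishes a local holomorphic function satisfying \eqref{134}. Feeding the four primitives into Theorem~\ref{Th_5} yields the asserted symplectomorphism $\mathcal{F}_{(S)}^*\Omega_G=-\Omega_{\hom}$. The genuine content is thus external: the only self-contained verification is the change of Torelli marking, whose (minor) obstacle is confirming holomorphy and nonvanishing of $\det(C\Omega+D)$ and matching its logarithmic $\delta$-differential with \eqref{136}; admissibility of the Schottky, Wirtinger and Bers connections is not reproved but quoted from \cite{Bertola_2017} and \cite{https://doi.org/10.48550/arxiv.2109.02033}.
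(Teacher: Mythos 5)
Your proposal is correct and follows essentially the same route as the paper: the corollary is obtained there, without a separate written proof, by feeding into Theorem~\ref{Th_5} the identity $\Hat{G}_{(S^{\sigma}_B-S_B)}=-12\pi{\rm i}\log\det(C\Omega+D)$ from \eqref{136}--\eqref{gen} for the change of Torelli marking, and the quoted primitives for Schottky and Wirtinger from \cite{Bertola_2017} and for Bers from \cite{https://doi.org/10.48550/arxiv.2109.02033}. Your added check that $\det(C\Omega+D)$ is nonvanishing and holomorphic on the Siegel upper half-space is a harmless (and reasonable) elaboration the paper leaves implicit.
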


\subsection{Definition of the monodromy generating function}
Fixing the Bergman projective connection as the base connection $S=S_B$, we may choose a~symplectic potential on the moduli space $\mathcal{Q}_{g,n}[\mathbf{r}]$ in period coordinates
\begin{equation}\label{pot_hom}\theta_{{\rm hom}}=\sum^{g^-}_{j=1} \big(B_j \delta A_j - A_j \delta B_j \big)\end{equation}
with another symplectic potential on the character variety ${\rm CV}_{g,n}[\mathbf{m}]$ in homological shear coordinates
\begin{equation}\label{pot_G}
\theta_{G}=\sum^{g^-}_{j=1}\big( \rho_{b^-_j} \delta \rho_{a^-_j} - \rho_{a^-_j} \delta \rho_{b^-_j} \big)
\end{equation}
and consider the generating function of symplectomorphism $\mathcal{F}_{(S_B)}$ (the Yang--Yang function introduced in~\cite{nekrasov} ) given by
 \begin{equation} \label{yang_1}
 \delta \mathcal{G}_{B}=\mathcal{F}_{(S_B)}^*\theta_{G}-\theta_{{\rm hom}}.
 \end{equation}
Assuming that the triangulation of the surface $\mathcal{C}$ used to define homological shear coordinates is specified by the horizontal trajectories of the GMN-differential $Q$, the remaining parameters that define the function $\mathcal{G}_{B}$ include the choice of the Torelli marking on $\mathcal{C}$ and the choice of generators $\big(a^-_j,b^-_j\big)$ in $H_{-}.$ It is easy to see, that the symplectic potentials are invariant under symplectic transformations of the generators in $H_{-}$. Namely, under the transformation $\sigma \in {\rm Sp}(2g,\Z)$
\begin{equation*}
\sigma=\begin{pmatrix}
C_- & A_- \\
D_- & B_-
\end{pmatrix}\colon \
\begin{pmatrix}
b_- \\
a_-
\end{pmatrix}^{\sigma}= \sigma
\begin{pmatrix}
b_- \\
a_-
\end{pmatrix}
\end{equation*}
the potentials $\theta_{{\rm hom}}$ and $\theta_{G}$ remain the same, leaving the function $\mathcal{G}_B$ also invariant. The question how the change of Torelli marking affects the monodromy generating function was posed in~\cite{bertola2021wkb} and Proposition~\ref{Prop_3} allows us to provide the answer. Under the change \eqref{matrix_hom} of the canonical basis of $\mathcal{C}$ the Goldman potential $\theta_G$ remains invariant, while the homological potentials~$\theta^{\sigma}_{{\rm hom}}$,~$\theta_{{\rm hom}}$ for new and old Torelli markings are related by the term~\eqref{71}
\begin{equation*}\theta^{\sigma}_{{\rm hom}}=\theta_{{\rm hom}}+\delta\mathcal{G}_{{\rm hom}}. \end{equation*}
In our setting, with the help of \eqref{136} and \eqref{gen} one has
\begin{equation*}%\label{137}
Q_0=Q, \qquad Q_1=Q+6 \pi {\rm i} \sum_{1 \leq j \leq k \leq g}u_j u_k \frac{\partial}{\partial \Omega_{jk}}\log \det (C \Omega +D)
\end{equation*}
and
\begin{equation*}\delta\mathcal{G}_{{\rm hom}}=\delta\Bigg[\sum^n_{i=1}\pi {\rm i} r_i\Bigg({\rm reg}\int^{z^{(1)}_i}_{z^{(2)}_i}v_1 -{\rm reg}\int^{z^{(1)}_i}_{z^{(2)}_i}v_0 \Bigg)\Bigg] +6 \pi {\rm i} \delta \log \det (C \Omega +D),
\end{equation*}
where $v_0^2=Q_0$, $v_1^2=Q_1 $ define two different canonical coverings.
Combining that with the definition of the generating function~\eqref{yang_1}, we have

\begin{Proposition}
Under the change \eqref{matrix_hom} of the Torelli marking, the monodromy generating function transforms as
\begin{equation*}
\mathcal{G}^{\sigma}_B=\mathcal{G}_B+\sum^n_{i=1}\pi {\rm i} r_i\left({\rm reg}\int^{z^{(1)}_i}_{z^{(2)}_i}v_1 -{\rm reg}\int^{z^{(1)}_i}_{z^{(2)}_i}v_0 \right) +6 \pi {\rm i} \log \det (C \Omega +D),
\end{equation*}
where $v_0^2=Q$ and $v_1^2=Q+6 \pi {\rm i} \sum_{1 \leq j \leq k \leq g}u_j u_k \frac{\partial}{\partial \Omega_{jk}}\log \det (C \Omega +D)$.
\end{Proposition}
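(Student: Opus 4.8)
The plan is to define the generating function for the new marking by the same recipe \eqref{yang_1} with $S_B$ replaced by $S_B^\sigma$ and $\theta_{\rm hom}$ replaced by $\theta_{\rm hom}^\sigma$, namely $\delta\mathcal{G}_B^\sigma=\mathcal{F}_{(S_B^\sigma)}^*\theta_G-\theta_{\rm hom}^\sigma$, and then to subtract this from \eqref{yang_1}. The change of Torelli marking \eqref{matrix_hom} affects the three ingredients in three different ways: the Goldman potential $\theta_G$ on ${\rm CV}_{g,n}[\mathbf{m}]$ is intrinsic to the character variety and its homological shear coordinates, so it is unchanged; the base projective connection transforms by \eqref{136}; and the homological potential changes because the period coordinates are built from cycles lifted from the marking of $\mathcal{C}$. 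First I would record that $\mathcal{F}_{(S_B)}$ and $\mathcal{F}_{(S_B^\sigma)}$ are the two legs of the commutative triangle preceding Proposition~\ref{Prop_3}: writing the same equation \eqref{1} with the two base connections forces, via \eqref{66}, the intertwining map \eqref{67} with $S_0=S_B$, $S_1=S_B^\sigma$, $Q_0=Q$ and $Q_1=Q+6\pi{\rm i}\sum_{1\le j\le k\le g}u_ju_k\,\frac{\partial}{\partial\Omega_{jk}}\log\det(C\Omega+D)$, so that $\mathcal{F}_{(S_B)}=\mathcal{F}_{(S_B^\sigma)}\circ\mathcal{H}$ and $v_0^2=Q$, $v_1^2=Q_1$ are the two canonical coverings.

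Next I would apply Proposition~\ref{Prop_3}(2) verbatim to this pair. It identifies the difference of homological potentials with the exact form $\theta_{\rm hom}^\sigma-\theta_{\rm hom}=\mathcal{H}^*\theta_1-\theta_0=\delta\mathcal{G}_{\rm hom}$, where $\mathcal{G}_{\rm hom}$ is given by \eqref{71}. The two ingredients of \eqref{71} are then evaluated in the present setting: the endpoint-regularized term is exactly $\sum_{i=1}^n\pi{\rm i}\,r_i\big({\rm reg}\int_{z_i^{(2)}}^{z_i^{(1)}}v_1-{\rm reg}\int_{z_i^{(2)}}^{z_i^{(1)}}v_0\big)$ with the regularization \eqref{76}, while the remaining term $\tfrac12\Hat G_{(S_0-S_1)}$ is computed from \eqref{gen}. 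Since $\Theta$, and hence $\Hat G$, is linear in its quadratic-differential argument, one gets $\Hat G_{(S_0-S_1)}=\Hat G_{(S_B-S_B^\sigma)}=-\Hat G_{(S_B^\sigma-S_B)}=12\pi{\rm i}\log\det(C\Omega+D)$, so that $\tfrac12\Hat G_{(S_0-S_1)}=6\pi{\rm i}\log\det(C\Omega+D)$. This already reproduces the two correction terms of the claimed formula as the single object $\mathcal{G}_{\rm hom}$.

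Finally I would assemble the pieces. Subtracting the two definitions and using that the pulled-back Goldman terms agree once the comparison is made at a common monodromy point — this is where $\mathcal{F}_{(S_B)}=\mathcal{F}_{(S_B^\sigma)}\circ\mathcal{H}$ and the invariance of $\theta_G$ enter, since $\mathcal{H}^*\mathcal{F}_{(S_B^\sigma)}^*\theta_G=\mathcal{F}_{(S_B)}^*\theta_G$ — the only surviving contribution is the homological-potential change $\theta_{\rm hom}^\sigma-\theta_{\rm hom}=\delta\mathcal{G}_{\rm hom}$. Integrating $\delta(\mathcal{G}_B^\sigma-\mathcal{G}_B)=\delta\mathcal{G}_{\rm hom}$ yields the stated identity up to an additive constant, which is pinned down by the normalization of $\Hat G$ through \eqref{gen}. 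The main obstacle is the bookkeeping of this last cancellation: one must be careful that the two generating functions are compared along corresponding fibres of $\mathcal{F}_{(S_B)}$ and $\mathcal{F}_{(S_B^\sigma)}$ (i.e.\ pulled back through $\mathcal{H}$), so that the $\theta_G$ contributions cancel and the orientation of $\mathcal{H}$ together with the identification $S_0=S_B$, $S_1=S_B^\sigma$ produces exactly the stated sign of $\mathcal{G}_{\rm hom}$. The analytic content — the regularization of the divergent endpoint integrals and the linearity used to evaluate $\Hat G$ — is already supplied by Proposition~\ref{Prop_3} and the Remark containing \eqref{76}, so no new estimates are needed.
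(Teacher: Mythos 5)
Your proposal follows the paper's own argument essentially verbatim: the paper likewise reduces the statement to Proposition~\ref{Prop_3}(2) applied to the pair $S_0=S_B$, $S_1=S^{\sigma}_B$, uses \eqref{136} and \eqref{66} to identify $Q_1=Q+6\pi{\rm i}\sum_{j\leq k}u_ju_k\,\partial_{\Omega_{jk}}\log\det(C\Omega+D)$, evaluates $\tfrac12\Hat{G}_{(S_0-S_1)}=6\pi{\rm i}\log\det(C\Omega+D)$ from \eqref{gen} by linearity, and invokes the invariance of $\theta_G$ together with $\theta^{\sigma}_{{\rm hom}}=\theta_{{\rm hom}}+\delta\mathcal{G}_{{\rm hom}}$. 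Your explicit tracking of the pullback through $\mathcal{H}$ so that the two Goldman contributions cancel at a common monodromy point is a slightly more careful rendering of a step the paper leaves implicit, but the route is the same.
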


\section[Generalized WKB expansion of the monodromy generating function]{Generalized WKB expansion\\ of the monodromy generating function}\label{section4}

\subsection{WKB approximation of the Schr\"odinger equation}\label{section4.1}
To study the asymptotic expansion of the monodromy generating function $\mathcal{G}_B$, we consider the second-order equation in the form
\begin{equation}\label{140}
\partial^2\phi+\left(\frac{1}{2}S_B-\frac{Q_1}{\hbar}-\frac{Q}{\hbar^2}\right)\phi=0,
\end{equation}
where $Q \in \mathcal{Q}_{g,n}[\mathbf{r}]$, while $Q_1$ is a fixed meromorphic quadratic differential assumed to depend holomorphically on moduli of $\mathcal{M}_{g,n}$, with at most simple poles at the punctures $(z_j)^n_{j=1}$.

The WKB approximation for this equation is performed in the following way:
consider the canonical double cover $\hat{\mathcal{C}}_\hbar$ given by the equation
\begin{equation*}%\label{141}
 v_{\hbar}^2=\frac{Q}{\hbar^2}.
 \end{equation*}
Rescaling the differential $v=\hbar v_\hbar$ pass to the cover $v^2=Q$ which is now independent of $\hbar$. Choose some base point $x_0.$ In terms of local coordinate $z(x)=\int^x_{x_0}v$ and the function $\varphi(x)=\phi\sqrt{v(x)}$ equation \eqref{140} takes the form
\begin{equation}\label{142}
 \varphi_{zz}+\big(q(z)-p(z)\hbar^{-1}-\hbar^{-2}\big)\varphi=0,
\end{equation}
where
\begin{equation*}%\label{143}
q=\frac{S_B-S_v}{2v^2}, \qquad p=\frac{Q_1}{v^2},
\end{equation*}
(notice that in local coordinate $z(x)$ the Schwarzian projective connection \eqref{47} vanishes). Introducing the asymptotic series $s=\sum^{\infty}_{k=-1}\hbar^k s_k$, we write the solution for \eqref{142} in the form
\begin{equation*}%\label{144}
f_{x_0}=v^{-\frac{1}{2}}\exp \left(\int^x_{x_0}\big(\hbar^{-1}s_{-1}+s_0+\hbar s_1 +\cdots \big)v\right) ,
\end{equation*}
where $s_k$ are meromorphic functions on $\hat{\mathcal{C}}.$ The asymptotic series $s$ satisfies the Ricatti equation:%
\begin{equation}\label{145}
{\rm d}s+s^2v=-qv+\hbar^{-1}pv+\hbar^{-2}v.
\end{equation}
Then plugging its expansion into \eqref{145} and comparing terms near the same powers of $\hbar$ one gets the following first terms~$s_k$:
\begin{equation}\label{146}
s_{-1}=\pm 1,\qquad s_0=\frac{p}{2s_{-1}},\qquad s_1=-\frac{d p}{4v} -\frac{1}{2 s_{-1}}\left(\frac{p^2}{4}+q \right),
\end{equation}
while the consecutive terms satisfy the recurrence relation
\begin{equation}\label{147}s_{k+1}=-\frac{1}{2s_{-1}}\Bigg(\frac{{\rm d} s_k}{v}+\sum_{\substack {j+l=k \\ j,l\geq 0}}s_j s_l \Bigg), \qquad k \geq 1. \end{equation}
In particular, when $k=2$, we get
\begin{equation*}%\label{148}
s_2= \frac{1}{8v}{\rm d}\left(\frac{{\rm d}p}{s_{-1}v}+\frac{p^2}{2}+2q \right)-\frac{p {\rm d}p}{8v}+\frac{1}{4 s_{-1}}\left(qp-\frac{p^3}{4}\right).\end{equation*}

There is an ambiguity in choosing the value of $s_{-1}$, which corresponds to the choice of the sign for the square root $\sqrt{Q}$. Further below, we shall assume that $s_{-1}=+1$. To obtain another asymptotic series corresponding to $s_{-1}=-1$ it is sufficient to apply involution $\mu$ to get $\mu^*v=-v$. We define \textit{even} and \textit{odd} part of the asymptotic series~$s$ by
\begin{equation*}%\label{149}
s_{\rm odd}=\frac{1}{2}(s +\mu^*s ), \qquad s_{\rm even}=\frac{1}{2}(s -\mu^*s ).
\end{equation*}
Notice that $\mu^*s_{\rm odd}=s_{\rm odd}$ and $\mu^*s_{\rm even}=-s_{\rm even} $,
while
\begin{equation*}%\label{150}
\mu^*(s_{\rm odd}v)=-s_{\rm odd}v, \qquad \mu^*(s_{\rm even}v)=s_{\rm even}v.
\end{equation*}

\begin{Lemma}
The following equation holds:
\begin{equation*}%\label{151}
{\rm d} s_{\rm odd}=- 2s_{\rm even} s_{\rm odd}v .
\end{equation*}
\end{Lemma}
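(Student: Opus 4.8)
The plan is to obtain the identity directly from the Riccati equation~\eqref{145} by tracking how its ingredients behave under the sheet-interchanging involution $\mu$. The key input is that the coefficient functions $q=(S_B-S_v)/(2v^2)$ and $p=Q_1/v^2$ are $\mu$-invariant, $\mu^*q=q$ and $\mu^*p=p$, whereas $\mu^*v=-v$ by skew-symmetry of $v$. For $p$ this is immediate, since both $Q_1$ and $Q=v^2$ are pulled back from $\mathcal{C}$, so $p$ is the pullback of the function $Q_1/Q$ on $\mathcal{C}$. For $q$ I would argue that the flat coordinate $w=\int^x v$ satisfies $\mu^*w=-w+\mathrm{const}$ (because $\mu^*v=-v$), so that the Schwarzian projective connection $S_v=\{w,\xi\}({\rm d}\xi)^2$ is $\mu$-invariant by Möbius invariance of the Schwarzian derivative in its first slot, applied to $w\mapsto -w$; hence $S_B-S_v$ and therefore $q$ descend to $\mathcal{C}$ and are $\mu$-invariant.

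Granting this, I would write \eqref{145} as $\mathrm{d}s+s^2v=Fv$ with $F:=-q+\hbar^{-1}p+\hbar^{-2}$, and apply $\mu^*$. Since $\mu$ is holomorphic, $\mu^*$ commutes with $\mathrm{d}$ and is a ring homomorphism, so $\mu^*(s^2)=(\mu^*s)^2$ term by term in the asymptotic series $s=\sum_k\hbar^k s_k$. Using $\mu^*F=F$ and $\mu^*v=-v$ I obtain
\[
\mathrm{d}(\mu^*s)-(\mu^*s)^2 v=-Fv.
\]
Adding this to the original equation cancels the right-hand side and, factoring the difference of squares $s^2-(\mu^*s)^2$, gives
\[
\mathrm{d}(s+\mu^*s)+(s-\mu^*s)(s+\mu^*s)\,v=0.
\]

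To finish, I would substitute the definitions $s_{\rm odd}=\tfrac12(s+\mu^*s)$ and $s_{\rm even}=\tfrac12(s-\mu^*s)$, which turn the previous display into $2\,\mathrm{d}s_{\rm odd}+4\,s_{\rm even}s_{\rm odd}\,v=0$, that is, the asserted ${\rm d}s_{\rm odd}=-2s_{\rm even}s_{\rm odd}v$. The identity is read off order by order in $\hbar$, which is legitimate because every operation above is $\hbar$-linear and $\mu^*$ acts coefficient-wise on the series.

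The only step that is not purely formal is the $\mu$-invariance of $q$ and $p$, and I expect the verification $\mu^*S_v=S_v$ to be the point requiring the most care: it is where one genuinely uses the transformation $w\mapsto -w$ of the flat coordinate together with the invariance of the Schwarzian derivative under affine (indeed Möbius) changes of its argument, ensuring that $q$ really descends to $\mathcal{C}$ rather than merely being defined on $\hat{\mathcal{C}}$. Once that is settled, the remainder is the short algebraic manipulation above.
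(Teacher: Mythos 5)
Your proof is correct and is essentially the paper's argument: adding the Riccati equation \eqref{145} to its pullback under $\mu$ and factoring the difference of squares is the same computation as substituting $s=s_{\rm odd}+s_{\rm even}$ and retaining only the $\mu$-symmetric terms, which is exactly what the paper does. The $\mu$-invariance of $q$ and $p$ that you verify carefully is left implicit in the paper, since both are introduced there as objects living on the base curve $\mathcal{C}$ (with the M\"obius invariance of the Schwarzian handling $S_v$, as you note).
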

\begin{proof}
Expressing $s=s_{\rm odd}+s_{\rm even}$ and plugging it into~\eqref{145} we have
\begin{equation*}%\label{152}
{\rm d}(s_{\rm odd}+s_{\rm even})+\big(s^2_{\rm odd} + s^2_{\rm even} + 2s_{\rm odd}s_{\rm even}\big)v=-qv-\hbar^{-1}pv-\hbar^{-2}v.
\end{equation*}
This equality contains terms both symmetric and skew-symmetric under involution. Comparing only symmetric terms, one gets
\begin{equation*}%\label{153}
{\rm d} s_{\rm odd}+2 s_{\rm even} s_{\rm odd}v=0 .\tag*{\qed}
\end{equation*} \renewcommand{\qed}{}
\end{proof}

Using this relation, it is easy to obtain two WKB-solutions for the equation \eqref{140}:
\begin{equation}\label{154}f^{\pm}_{x_0}=\frac{1}{(s_{\rm odd}v)^{\frac{1}{2}}}\exp \left[\pm \int^{x}_{x_0}s_{\rm odd}v \right],\end{equation}
where $x_0$ is often chosen to be one of the branch points $x_i$, called turning point (see, e.g.,~\cite{kawai2005algebraic}). The differential $s_{\rm odd}v$ is multi-valued on the base curve $\mathcal{C}$ and generically singular at $x_i$. To define the integral correctly we pass to the double cover $\hat{\mathcal{C}}$ where $s_{\rm odd}v$ is well-defined. Skew-symmetry of $s_{\rm odd}v$ implies it has a vanishing residue at~$x_i$. Therefore, we can define the integral~by%
\begin{equation*}%\label{155}
\int^x_{x_i}s_{\rm odd}v=\frac{1}{2}\int^{x^{(1)}}_{x^{(2)}}s_{\rm odd}v,\end{equation*}
where we join preimages $x^{(1)}$ and $x^{(2)}$ of $x$ by an arc passing through the branch cut, which connects $x_i$ with some another branch point.

Introduce the meromorphic differentials
\begin{equation*}%\label{156}
v_k=(s_{\rm odd})_kv.
\end{equation*}
Analytic continuation of the WKB-solutions \eqref{154} along the edges of graph $\Sigma_Q$ gives rise to the relation between the homological shear coordinates and the Voros symbols -- integrals of $s_{\rm odd}v$ over the elements of~$H_{-}.$ The following proposition generalizes the one stated in \cite{bertola2021wkb} to the case when $Q_1 \neq 0$ and is proven in complete analogy.
\begin{Proposition}
For each $l \in H_-$ the homological shear coordinate $\rho_l$ admits the following asymptotic expansion
\begin{equation}\label{157}
 \rho_{l}(\hbar) \sim \int_l s_{\rm odd}v=\frac{1}{\hbar}\int_l v_{-1}+ \int_l v_0 + \hbar \int_l v_1 +\cdots, \qquad \hbar \xrightarrow[]{} 0^+,
 \end{equation}
 where the relation is understood modulo an addition of $ \pi{\rm i} k$, $k \in \Z.$
\end{Proposition}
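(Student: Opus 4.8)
The plan is to reduce the statement to the known $Q_1=0$ case of \cite{bertola2021wkb} by showing that each Thurston shear coordinate $\zeta_e$ is asymptotic, as $\hbar \to 0^+$, to one half of the Voros symbol $\int_{l_e} s_{\rm odd} v$ attached to the skew-symmetric cycle $l_e$ dual to the edge $e$, and then assembling the half-integer linear combinations that produce $\rho_l$ from the $l_e$ generating $l \in H_-$. The crucial preliminary observation is that the Stokes graph (equivalently, the trajectory triangulation $\Sigma_Q$ together with its turning points $x_i$) is determined solely by the leading symbol $v^2=Q$ occurring at order $\hbar^{-2}$; the subleading perturbation $Q_1/\hbar$ moves neither the turning points (the zeroes of $Q$) nor the combinatorics of the spectral network. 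Hence the abelianization picture of \cite{bertola2021wkb} --- the identification of $\hat{\mathcal{C}}_{\Sigma_B}$ with the canonical cover $\hat{\mathcal{C}}$ and of the cycles $l_e$ with elements of $H_-$ --- carries over verbatim, and only the numerical values of the Voros integrals change.

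First I would express, as in the appendix of \cite{bertola2021wkb}, the coordinate $\zeta_e$ as the logarithm of the cross-ratio of the four WKB solutions $f^\pm_{x_0}$ of \eqref{154} associated with the two triangles adjacent to $e$, with base points at the relevant turning points. Substituting \eqref{154}, the prefactors $(s_{\rm odd} v)^{-1/2}$ are Möbius data that drop out of the cross-ratio, so the asymptotics are governed entirely by the exponents $\pm\int^x s_{\rm odd} v$; their alternating combination around the quadrilateral collapses, up to an additive ambiguity in $\pi{\rm i}\Z$ coming from the branch of the logarithm and the multivaluedness of $s_{\rm odd} v$ on $\mathcal{C}$, to the integral of $s_{\rm odd} v$ over the dual cycle $l_e$. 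Because $s_{\rm odd} v$ is skew-symmetric under $\mu$, it is single-valued on $\hat{\mathcal{C}}$ with vanishing residues at the $x_i$, so this integral over an $H_-$-cycle is well defined. Taking the same half-integer linear combinations of $2\zeta_e$ that define $\rho_l$ then yields $\rho_l(\hbar) \sim \int_l s_{\rm odd} v$. Expanding $s_{\rm odd} v = \sum_{k \ge -1} \hbar^k v_k$ with $v_k=(s_{\rm odd})_k v$ and integrating term by term produces the series \eqref{157}; from \eqref{146} one reads off $v_{-1}=v$, while $v_0$ and $v_1$ now acquire the extra $Q_1$-dependent contributions $\tfrac{1}{2}Q_1/v$ and $-\tfrac{1}{8}Q_1^2/v^3-\tfrac{1}{2}qv$ that are absent when $Q_1=0$.

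The hard part is the same analytic input as in \cite{bertola2021wkb}: the justification that the \emph{exact} shear coordinate, defined through genuine monodromy cross-ratios of actual solutions, is asymptotic to the \emph{formal} Voros symbol built from the WKB series --- this rests on the exact-WKB resummation machinery and on the GMN (no saddle connection) hypothesis guaranteeing a well-defined trajectory triangulation. Since I am entitled to invoke that result, the only genuinely new verification is that switching on $Q_1$ does not disturb it: the perturbation enters $s_{\rm odd} v$ only through the $\mu$-even corrections $(s_{\rm odd})_k$ with $k \ge 0$, leaving the leading term $v_{-1}=v$, the skew-symmetry, and the Stokes geometry intact, so the comparison runs in complete analogy with the unperturbed case.
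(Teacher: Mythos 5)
Your proposal is correct and follows essentially the same route as the paper, which simply states that the proposition ``generalizes the one stated in \cite{bertola2021wkb} to the case when $Q_1 \neq 0$ and is proven in complete analogy''; you have merely spelled out what that analogy consists of (unchanged Stokes geometry and turning points, cross-ratio construction of $\zeta_e$, cancellation of prefactors, passage to $H_-$ via half-integer combinations, and the $Q_1$-dependence entering only through $(s_{\rm odd})_k$ for $k\geq 0$). No further comment is needed.
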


\begin{Remark}The similar result was present in \cite{voros_alegr} where the meromorphic potential in~\eqref{1} was arranged in a different way, such that the double poles with fixed biresidues were attached to the reference meromorphic projective connection, while in our case these double poles belong to the quadratic differential~$Q$.
\end{Remark}

\subsection{WKB expansion of the Yang--Yang function}

The requirement for the monodromy map of the equation \eqref{140} to be a symplectomorphism imposes a restriction on the differential~$Q_1$. We can regard
\begin{equation*}
S=S_B-\frac{2Q_1}{\hbar}
\end{equation*} as a chosen base projective connection, then the condition for it to be admissible is ruled by Theorem~\ref{Th_5}. Namely, the form $\Theta_{(Q_1)}$, defined locally on $\mathcal{M}_{g,n}$, must be closed, $\delta\Theta_{(Q_1)}=0$.
The generating function $\mathcal{G}_B(\hbar)$ of this symplectomorphism is defined by
 \begin{equation} \label{yang_wkb}
 \delta \mathcal{G}_{B}(\hbar)=\mathcal{F}_{(S_B-{2Q_1}/{\hbar})}^*\theta_{G}(\hbar)-\theta_{{\rm hom}}(\hbar), \end{equation}
 where the symplectic potentials $\theta_G$ and $\theta_{{\rm hom}}$ are defined by \eqref{pot_hom} and \eqref{pot_G}:
\begin{equation} \label{pot_G_wkb}
 \theta_{G}(\hbar)=\sum^{g^-}_{j=1} \big(\rho_{b_j} \delta \rho_{a_j} - \rho_{a_j} \delta \rho_{b_j} \big)(\hbar),
 \end{equation}
 here $\rho_l(\hbar)$ is the homological shear coordinate corresponding to a loop $l \in H_{-}$ and
 \begin{equation}\label{pot_hom_wkb}\theta_{{\rm hom}}(\hbar)=\frac{1}{\hbar^2}\sum^{g^-}_{j=1} \big( B_j \delta A_j - A_j \delta B_j \big),\end{equation}
 here $\big(A_j=\oint_{a^-_j}v, B_j=\oint_{b^-_j}v\big)$ are period coordinates on $\mathcal{Q}_{g,n}[\mathbf{r}]$.
Using the pairing notation~\eqref{77} the symplectic potential $\theta_{{\rm hom}}$ in period coordinates reads as
\begin{equation*}%\label{159}
\theta_{{\rm hom}}(\hbar)=\frac{1}{\hbar^2}\left\langle \oint v,\delta \oint v\right\rangle .
\end{equation*}
The potential $\theta_G$
in homological shear coordinates $\rho_{l}(\hbar)$ by means of the expansion~\eqref{157} has the following expression:
\begin{equation}\label{160}
\theta_{G}(\hbar)=\sum^{\infty}_{i=-2}\hbar^i \sum_{\substack {l+k=i \\ l,k\geq -1}}\left\langle \oint v_l, \delta \oint v_k \right\rangle. \end{equation}
Meromorphic differentials $v_k$ could be obtained by antisymmetrizing the differentials $s_k v$, where functions $s_k$ are given by \eqref{146}, \eqref{147}. First four differentials $v_k$ take the following form:
\begin{gather} \label{163}
v_{-1}=v, \qquad v_0=\frac{Q_1}{2v},
\\
\label{164}
v_1=-\frac{Q^2_1}{8v^3}-\frac{qv}{2}, \qquad v_2=\frac{1}{4} \left(q\frac{Q_1}{v}-\frac{Q^3_1}{v^5}\right).
\end{gather}

By plugging \eqref{160} in \eqref{yang_wkb}, we see that the coefficient in front of $\hbar^{-2}$ in the expansion of~\eqref{yang_wkb} vanishes and
\begin{equation*}%\label{161}
\delta \mathcal{G}_B(\hbar) =\sum^{\infty}_{i=-1}\hbar^i \sum_{\substack {l+k=i \\ l,k\geq -1}}\left\langle \oint v_l, \delta \oint v_k \right\rangle := \sum^{\infty}_{i=-1} \hbar^i \delta G_i, \qquad \hbar \xrightarrow[]{} 0^+.
\end{equation*}

\subsubsection[Term G\_\{-1\}]{Term $\boldsymbol{G_{-1}}$}

Equation for $\delta G_{-1}$:
 \begin{equation*}%\label{167}
 \delta G_{-1}=\left\langle \oint v,\delta \oint v_0\right\rangle +\left\langle \oint v_0,\delta \oint v\right\rangle \end{equation*}
using the expressions \eqref{163}, \eqref{164} for $v_k$ could be written as follows:
 \begin{equation*}
 \delta G_{-1}=\frac{1}{2}\left\langle \oint v, \delta \oint \frac{Q_1}{v}\right\rangle +\frac{1}{2}\left\langle \oint \frac{Q_1}{v}, \delta \oint v\right\rangle.
 \end{equation*}
Notice that (after relabeling $\tilde{Q} \xrightarrow[]{} Q_1$, $v_0 \xrightarrow[]{} v$) this is exactly the term~\eqref{105} near $\hbar^1$ appearing in the expansion of the potential $\theta_1$ in the proof of Proposition~\ref{Prop_3}. Thus, we immediately get
\begin{equation*}%\label{168}
\delta G_{-1} =\Theta_{(Q_1)}+\delta\left[\sum^n_{j=1}\frac{\pi{\rm i} r_j}{2}
\int^{z^{(1)}_j}_{z^{(2)}_j}\frac{{Q}_1}{v} \right],
\end{equation*}
where before we assumed that the form $ \Theta_{(Q_1)}$ is closed on $\mathcal{M}_{g,n}$.
Then the integration leads to%
\begin{equation*}%\label{169}
G_{-1} =\Hat{G}_{(Q_1)}+\sum^n_{j=1}\frac{\pi{\rm i} r_j}{2}
\int^{z^{(1)}_j}_{z^{(2)}_j}\frac{{Q}_1}{v} ,
\end{equation*}
where there exists local holomorphic function $\Hat{G}_{(Q_1)}$ on the moduli space $\mathcal{M}_{g,n}$, such that
\begin{equation} \label{funG} \delta\Hat{G}_{(Q_1)}= \Theta_{(Q_1)}.\end{equation}
The geometrical meaning of the term $G_{-1}$ is that the condition of closedness of $\delta G_{-1}$ (or equivalently the existence of $\Hat{G}_{(Q_1)}$) is an obstruction for the monodromy map $\mathcal{F}_{(S_B-{Q_1}/{\hbar})}$ to be a~symplectomorphism.

\subsubsection[Term G\_0]{Term $\boldsymbol{G_0}$}
Equation for $\delta G_{0}$:
\begin{equation*}%\label{171}
\delta G_{0}=\left\langle \oint v_0,\delta \oint v_0\right\rangle +\left\langle \oint v,\delta \oint v_1\right\rangle +\left\langle \oint v_1,\delta \oint v\right\rangle \end{equation*}
using the expressions \eqref{163}, \eqref{164} for $v_k$ could be written as follows:
\begin{gather}
\delta G_{0}=\left[\frac{1}{4}\left\langle \oint \frac{Q_1}{v}, \delta \oint \frac{Q_1}{v}\right\rangle -\frac{1}{8}\left\langle \oint v,\delta \oint \frac{Q^2_1}{v^3}\right\rangle -\frac{1}{8}\left\langle \oint \frac{Q^2_1}{v^3},\delta \oint v\right\rangle \right] \nonumber\\
\hphantom{\delta G_{0}=}{}
 -\left[\frac{1}{2}\left\langle \oint v,\delta \oint qv\right\rangle +\frac{1}{2}\left\langle \oint qv,\delta \oint v\right\rangle \right].\label{172}
\end{gather}

The term in the first bracket is the coefficient \eqref{106} near $\hbar^2$ in the expansion of the potential~$\theta_1$ in the proof in Proposition \ref{Prop_3} and it equals
\begin{equation*}%\label{173}
\delta\left[\sum^n_{j=1}
\pi{\rm i} r_j \binom{\frac{1}{2}}{2}\int^{z^{(1)}_j}_{z^{(2)}_j}\frac{{Q^2_1}}{v^{3}} \right]. \end{equation*}
To compute the term in the second bracket, we notice that
\begin{equation*}%\label{174}
\frac{1}{2}\left\langle \oint v,\delta \oint qv\right\rangle +\frac{1}{2}\left\langle \oint qv,\delta \oint v\right\rangle =\frac{1}{2}\delta\left\langle \oint v, \oint qv\right\rangle +\left\langle \oint qv,\delta \oint v\right\rangle. \end{equation*}
It follows from \eqref{60} that $\big\langle \int qv,\delta \int v\big\rangle $ is a differential of the Bergman tau-function, namely
\begin{equation*}%\label{175}
\left\langle \oint qv,\delta \oint v\right\rangle =12 \pi{\rm i} \delta \log \tau_B |_{r}.\end{equation*}
Applying the variational formulas \eqref{55}, \eqref{56} and the homogeneity property \eqref{58} of the function $\tau_B$ on the full space $\mathcal{Q}_{g,n}$, the term $\big\langle \oint v, \oint qv\big\rangle $ could be written as
\begin{align*}%\label{176}
\left\langle \oint v,\oint qv\right\rangle& =-12 \pi{\rm i} \sum^{g^-}_{j=1}\left(A_j \frac{\delta}{\delta A_j}+B_j \frac{\delta}{\delta B_j}\right) \log \tau_B\\
& =-12 \pi{\rm i} \Bigg[\frac{5(2g-2+n)}{72}-\sum^n_{j=1}r_j \frac{\delta}{\delta r_j}\log \tau_B \Bigg] \\
& =-12 \pi{\rm i} \left[\frac{5(2g-2+n)}{72}-\sum^n_{j=1}\frac{r_j}{12} \int^{z^{(1)}_j}_{z^{(2)}_j}\left(qv+\frac{1}{4r^2_k}v \right) \right].
\end{align*}
Restricting this formula to $\mathcal{Q}_{g,n}[\mathbf{r}]$, we get
\begin{equation*}%\label{177}
\delta\left\langle \oint v, \oint qv\right\rangle =\delta\left[\sum^n_{j=1}{\pi{\rm i} r_j} \int^{z^{(1)}_j}_{z^{(2)}_j}\left(qv+\frac{1}{4r^2_k}v \right) \right],
\end{equation*}
(alternatively, this term could be computed via the resides after applying the Riemann bilinear identity to differentials $v$ and~$qv$). Putting all terms together in~\eqref{172} and integrating we obtain%
\begin{equation}\label{178}G_0=-12 \pi{\rm i} \log \tau_B|_{r} -\sum^n_{j=1}\frac{\pi{\rm i} r_j}{2} \int^{z^{(1)}_j}_{z^{(2)}_j}\left(qv+\frac{1}{4r^2_k}v \right)+\sum^n_{j=1}
\pi{\rm i} r_j \binom{\frac{1}{2}}{2}\int^{z^{(1)}_j}_{z^{(2)}_j}\frac{{Q^2_1}}{v^{3}}.
\end{equation}

\subsubsection[Term G\_1]{Term $\boldsymbol{G_1}$}
Equation for $\delta G_{1}$:
\begin{equation*}%\label{179}
\delta G_{1}=\left\langle \oint v,\delta \oint v_2\right\rangle +\left\langle \oint v_0,\delta \oint v_1\right\rangle +\left\langle \oint v_1,\delta \oint v_0\right\rangle +\left\langle \oint v_2,\delta \oint v\right\rangle
\end{equation*}
using the expressions \eqref{163}, \eqref{164} for $v_k$ could be written as follows:
\begin{gather*}%\label{180}
\delta G_{1}=\Bigg[{-}\frac{1}{16}\left\langle \oint \frac{Q^2_1}{v^3}, \delta \oint \frac{Q_1}{v}\right\rangle -\frac{1}{16}\left\langle \oint \frac{Q_1}{v}, \delta \oint \frac{Q^2_1}{v^3}\right\rangle \\
\hphantom{\delta G_{1}=}{}
+\frac{1}{16}\left\langle \oint v,\delta \oint \frac{Q^3_1}{v^5}\right\rangle +\frac{1}{16}\left\langle \oint \frac{Q^3_1}{v^5},\delta \oint v\right\rangle \Bigg]\\
%\label{181}
\hphantom{\delta G_{1}=}{}
+\Bigg[{-}\frac{1}{4}\left\langle \oint \frac{Q_1}{v},\delta \oint {qv}\right\rangle -\frac{1}{4}\left\langle \oint {qv},\delta \oint \frac{Q_1}{v}\right\rangle \\
\hphantom{\delta G_{1}=}{}
+\frac{1}{4}\left\langle \oint v,\delta \oint q\frac{Q_1}{v}\right\rangle +\frac{1}{4}\left\langle \oint q\frac{Q_1}{v},\delta \oint v\right\rangle \Bigg].
\end{gather*}
The term in the first brackets is the coefficient \eqref{106} near $\hbar^3$ in the expansion of the potential~$\theta_1$ in the proof in Proposition~\ref{Prop_3} and it equals
\begin{equation*}
\delta\left[\sum^n_{j=1}\pi{\rm i} r_j\binom{\frac{1}{2}}{3}
 \int^{z^{(1)}_j}_{z^{(2)}_j}\frac{Q^3_1}{v^{5}} \right].
\end{equation*}
To treat the term in the second bracket first notice that it could be rewritten as
\begin{gather}
-\frac{1}{4}\delta\left\langle \oint \frac{Q_1}{v},\oint {qv}\right\rangle -\frac{1}{2}\left\langle \oint {qv},\delta \oint \frac{Q_1}{v}\right\rangle \nonumber\\
\qquad{} +\frac{1}{4}\delta\left\langle \oint v, \oint q\frac{Q_1}{v}\right\rangle +\frac{1}{2}\left\langle \oint q\frac{Q_1}{v},\delta \oint v\right\rangle .\label{simp}\end{gather}
Lemma~\ref{Lem_1} in the form~\eqref{79} implies that
\begin{equation*}
\left\langle \oint {qv},\delta \oint \frac{Q_1}{v}\right\rangle = \frac{1}{2}\int_{\partial \hat{\mathcal{C}}_0}\left( qv
\int^x_{p_0} \delta \frac{Q_1}{v}\right)+\left\langle \oint q\frac{Q_1}{v},\delta \oint v\right\rangle, \end{equation*}
so \eqref{simp} becomes
\begin{equation}\label{pairs}-\frac{1}{4}\delta\left\langle \oint\frac{Q_1}{v},\oint {qv}\right\rangle +\frac{1}{4}\delta\left\langle \oint v, \oint q\frac{Q_1}{v}\right\rangle -\frac{1}{4}\int_{\partial \hat{\mathcal{C}}_0}\left( qv
\int^x_{p_0} \delta \frac{Q_1}{v}\right).\end{equation}
Let us consider the last integral. While Abelian differential $\delta\frac{Q_1}{v}$ is holomorphic, $qv$ has residueless 4-order poles at the branch points $(x_i)$ and simple poles at the punctures $\{z^{(1)}_j, z^{(2)}_j \}^n_{j=1}$. So, the integral over the boundary reduces to the computation of residues:
\begin{gather}\label{182}\int_{\partial \hat{\mathcal{C}}_0}\Bigg( qv
\int^x_{p_0} \delta \frac{Q_1}{v}\Bigg)=\sum^{4g-4+2n}_{i=1}2\pi{\rm i} \ \underset{x_i}{\rm res}\left[qv \int^x_{p_0} \delta\frac{Q_1}{v}\right]+\sum^{n}_{j=1}2\pi{\rm i} \underset{\{z^{(1)}_j, z^{(2)}_j \}}{\rm res}\left[qv \int^x_{p_0} \delta \frac{Q_1}{v}\right].\!\!\!\end{gather}
To compute residues near simple poles, we recall the formulas \eqref{q_form} for $q(x)$ and $S_v$ and use the local coordinate $\zeta$~\eqref{39} to write near ${z}^{(1)}_j$:
\begin{equation*}%\label{183}
 qv=\frac{S_B-S_v}{2v}=\frac{S_B(\zeta)-\frac{1}{2\zeta^2}}{2\frac{{r}_j}{\zeta}}{\rm d}\zeta=\left(-\frac{1}{4{r}_j \zeta} +O(1)\right) {\rm d}\zeta.
\end{equation*}
Due to skew-symmetry of $qv$ the expansion near ${z}^{(2)}_j$ is the negation of the above formula.
Thus,%
\begin{equation*}%\label{184}
\big(\underset{z^{(1)}_j}{\rm res}+\underset{z^{(2)}_j}{\rm res} \big)\left(qv \int^x_{p_0}\delta\frac{Q_1}{v}\right)=\delta \left[-\frac{1}{4{r}_j}\int^{{z}^{(1)}_j}_{{z}^{(2)}_j}\frac{Q_1}{v}\right],\end{equation*}
since the differential $\delta$ commutes with the line integral. To simplify the residue near a branch point $x_i$ at first notice that the variational formula \eqref{54} implies that the differential $\delta (qv)$ is holomorphic at $x_i,$ so
\begin{equation*}
\underset{x_i}{\rm res}\left[qv \int^x_{p_0} \delta\frac{Q_1}{v}\right]
=\delta\left(\underset{x_i}{\rm res}\left[qv \int^x_{p_0} \frac{Q_1}{v}\right]\right). \end{equation*}
Let’s assume that the Bergman projective connection admits the following expansion near $x_i$ on the base curve $\mathcal{C}$ in local coordinate \eqref{38}:
\begin{equation*}S_B(\xi)=S_B (x_i)+{S_B}' (x_i) \xi +\cdots. \end{equation*}
Being lifted to $\hat{\mathcal{C}}$, it transforms like
\begin{equation*}S_B(\hat{\xi})\big({\rm d}\hat{\xi}\big)^2=S_B({\xi})({\rm d}{\xi})^2+S\big({\xi}, \hat{\xi}\big), \end{equation*}
where $\hat{\xi}$ is local coordinate \eqref{37} near $x_i$ on $\hat{\mathcal{C}}$, $S\big({\xi}, \hat{\xi}\big)$ is the Schwarzian derivative
\begin{equation*}
S\big({\xi}, \hat{\xi}\big)=\left(\frac{{\xi}''}{{\xi}'} \right)'-\frac{1}{2}\left(\frac{{\xi}''}{{\xi}'} \right)^2, \end{equation*}
where derivatives are taken with respect to $\hat{\xi}$. Having that ${\xi}=\hat{\xi}^2$, we write
\begin{equation*}
S_B\big(\hat{\xi}\big)\big({\rm d}\hat{\xi}\big)^2=4(S_B (x_i)+{S_B}' (x_i) \hat{\xi}^2)\hat{\xi}^2 \big({\rm d}\hat{\xi} \big)^2-\frac{3}{2\hat{\xi}^2}\big({\rm d}\hat{\xi} \big)^2.\end{equation*}
Also
\begin{equation*}%\label{185}
S_v=-\frac{4}{\hat{\xi}^2}\big({\rm d}\hat{\xi}\big)^2, \qquad v=3\hat{\xi}^2 {\rm d}\hat{\xi},\end{equation*}
leading to
\begin{equation*}%\label{190}
qv=\left[\frac{5}{12\hat{\xi}^4}+O(1) \right] {\rm d}\hat{\xi}.
\end{equation*}
Then the residue near $x_i$ could be expressed as
\begin{equation*}%\label{191}
\underset{x_i}{\rm res}\left[qv \int^x_{p_0} \frac{Q_1}{v}\right]=\underset{x_i}{\rm res} \left(\frac{5d \hat{\xi}}{12\hat{\xi}^4} \int^x_{p_0}\frac{Q_1}{v}\right)=\frac{5}{12}\frac{1}{3!}\left( \frac{(Q_1/v)}{ d\hat{\xi}} \right)''(x_i)=\frac{5}{36}\underset{x_i}{\rm res}\left(\frac{Q_1/v}{\int^x_{x_i}v} \right).
\end{equation*}
The integral \eqref{182} takes the following form:
\begin{equation}\label{pt1}\int_{\partial \hat{\mathcal{C}}_0}\left( qv
\int^x_{p_0} \delta \frac{Q_1}{v}\right)=\delta \left[\sum^{4g-4+2n}_{i=1} \frac{5 \pi{\rm i}}{18}\underset{x_i}{\rm res}\left(\frac{Q_1/v}{\int^x_{x_i}v} \right)-\sum^{n}_{j=1}\frac{\pi{\rm i}}{2{r}_j}\int^{{z}^{(1)}_j}_{{z}^{(2)}_j}\frac{Q_1}{v}\right].\end{equation}
The differential of the first pairing $\delta\big\langle \oint \frac{Q_1}{v}, \oint {qv}\big\rangle $ in \eqref{pairs} is computed by analogy. Applying the Riemann bilinear identity, one has
\begin{equation*}\left\langle \oint \frac{Q_1}{v},\oint {qv}\right\rangle =-\frac{1}{2}\int_{\partial \hat{\mathcal{C}}_0}\left( qv
\int^x_{p_0} \frac{Q_1}{v}\right), \end{equation*}
resulting in
\begin{equation}\label{pt2}
\delta\left\langle \oint \frac{Q_1}{v},\oint {qv}\right\rangle =\delta \left[-\sum^{4g-4+2n}_{i=1} \frac{5 \pi{\rm i}}{36}\underset{x_i}{\rm res}\left(\frac{Q_1/v}{\int^x_{x_i}v} \right)+\sum^{n}_{j=1}\frac{\pi{\rm i}}{4{r}_j}\int^{{z}^{(1)}_j}_{{z}^{(2)}_j}\frac{Q_1}{v}\right].
\end{equation}
Finally, applying the Riemann bilinear identity to the pairing $\big\langle \oint v, \oint q\frac{Q_1}{v}\big\rangle $, we have
\begin{equation*}%\label{192}
\left\langle \oint v, \oint q\frac{Q_1}{v}\right\rangle =-\sum^{4g-4+2n}_{i=1}\pi{\rm i} \ \underset{x_i}{\rm res}\left[q \frac{Q_1}{v} \int^x_{p_0} v\right]+\sum^{n}_{j=1}\pi{\rm i} \ \underset{\{z^{(1)}_j, z^{(2)}_j \}}{\rm res}\left[v \int^x_{p_0} q\frac{Q_1}{v}\right].\end{equation*}

While
\begin{equation*}%\label{193}
\underset{\{z^{(1)}_j, z^{(2)}_j \}}{\rm res}\left[v \int^x_{p_0} q\frac{Q_1}{v}\right]=r_j\int^{z^{(1)}_j}_{z^{(2)}_j} q\frac{Q_1}{v},\end{equation*}
the residue near $x_i$ is computed noticing that
\begin{equation*}%\label{194}
q \frac{Q_1}{v}=\left(\frac{5}{36\hat{\xi}^6}+O\left(\frac{1}{\hat{\xi}^2}\right) \right)\frac{Q_1}{v}, \qquad \int^x_{p_0}v=C(p_0)+{\hat{\xi}^3} .\end{equation*}
Differential $\frac{Q_!}{v}$ is antisymmetric, and, thus, expands by even powers of $\hat{\xi}$ near a branch point. That leads to
\begin{equation*}%\label{196}
\underset{x_i}{\rm res}\left[q \frac{Q_1}{v} \int^x_{p_0} v\right]=\frac{5}{36}\underset{x_i}{\rm res}\left[\frac{Q_1/v}{\hat{\xi}^3} \right]=\frac{5}{36}\underset{x_i}{\rm res}\left(\frac{Q_1/v}{\int^x_{x_i}v} \right).\end{equation*}
Overall,
\begin{equation}\label{pt3}\delta\left\langle \oint v, \oint q\frac{Q_1}{v}\right\rangle =\delta \left[-\sum^{4g-4+2n}_{i=1} \frac{5 \pi{\rm i}}{36}\underset{x_i}{\rm res}\left(\frac{Q_1/v}{\int^x_{x_i}v} \right)+\sum^{n}_{j=1}\pi{\rm i} r_j\int^{z^{(1)}_j}_{z^{(2)}_j} q\frac{Q_1}{v}\right].\end{equation}

Plugging derived expressions \eqref{pt1}, \eqref{pt2} and \eqref{pt3} into \eqref{pairs} and integrating, we obtain
\begin{gather}G_{1}=-\sum^{4g-4+2n}_{i=1} \frac{5 \pi{\rm i}}{72}\underset{x_i}{\rm res}\left(\frac{Q_1/v}{\int^x_{x_i}v} \right)+\sum^{n}_{j=1}\frac{\pi{\rm i} r_j }{4}\int^{z^{(1)}_j}_{z^{(2)}_j} q\frac{Q_1}{v} \nonumber\\
\hphantom{G_{1}=}{}
+\sum^{n}_{j=1}\frac{\pi{\rm i}}{16{r}_j}\int^{{z}^{(1)}_j}_{{z}^{(2)}_j}\frac{Q_1}{v}+\sum^n_{j=1}\pi{\rm i} r_j\binom{\frac{1}{2}}{3}
 \int^{z^{(1)}_j}_{z^{(2)}_j}\frac{Q^3_1}{v^{5}}.\label{trm1}
 \end{gather}
To sum up, we can formulate the following theorem
\begin{Theorem}
Consider the differential equation
\begin{equation*}
\partial^2\phi+\left(\frac{1}{2}S_B-\frac{Q_1}{\hbar}-\frac{Q}{\hbar^2}\right)\phi=0
\end{equation*}
on a Riemann surface $\mathcal{C}$. Let $Q$ be a GMN quadratic differential on $\mathcal{C}$ with simple zeroes and~$n$ second-order poles at $z_1,\dots, z_n$ and biresidues $r^2_1,\dots,r^2_n$. $Q_1$~is a meromorphic quadratic differential which depends holomorphically on moduli of $\mathcal{M}_{g,n}$, with at most simple poles at~$z_j$. $S_B$~is the Bergman projective connection~\eqref{12} defined with respect to some Torelli marking on~$\mathcal{C}$. Differential $Q$ defines a canonical double cover $\hat{\mathcal{C}}$ via $v^2=Q$ and gives rise to an ideal triangulation of $\mathcal{C}$ used in the definition of the homological shear coordinates \eqref{homsc}. For the chosen base projective connection $S_B-\frac{2Q_1}{\hbar}$ denote by $\mathcal{F}_{(S_B-2Q_1/\hbar)}$ the monodromy map between the moduli space $\mathcal{Q}_{g,n}[\mathbf{r}/\hbar]$ of pairs $\big(\mathcal{C}, Q/\hbar^2\big)$ and the symplectic leaf ${\rm CV}_{g,n}[\mathbf{m}(\hbar)]$ of the ${\rm PSL}(2, \CC)$ character variety, where each $m_j(\hbar)$ is related to $r_j$ via $\eqref{7}, \eqref{8}$ as
\begin{equation*}
\frac{r_j}{\hbar}=\pm \left[\frac{\log m_j}{2 \pi{\rm i} } \left(\frac{\log m_j}{2 \pi{\rm i} }-1 \right)\right]^{1/2}.
\end{equation*}

The map $\mathcal{F}_{(S_B-2Q_1/\hbar)}$ is a symplectomorphism, provided that the $1$-form $\Theta_{(Q_1)}$, locally defined on $\mathcal{M}_{g,n}$, is closed, $\delta\Theta_{(Q_1)}=0$.

Introduce the symplectic potential $\theta_{{\rm hom}}$ \eqref{pot_hom_wkb} of the homological symplectic form on $\mathcal{Q}_{g,n}[\mathbf{r}/\hbar]$ and symplectic potential $\theta_G$ \eqref{pot_G_wkb} for the Goldman form on ${\rm CV}_{g,n}[\mathbf{m}(\hbar)]$
The generating function~$\mathcal{G}_{B}$ of the monodromy symplectomorphism between $\mathcal{Q}_{g,n}[\mathbf{r}/\hbar]$ and ${\rm CV}_{g,n}[\mathbf{m}(\hbar)]$ is defined~by%
\begin{equation} \label{yang}
\delta \mathcal{G}_{B}(\hbar)=\mathcal{F}_{(S_B-{2Q_1}/{\hbar})}^*\theta_{G}(\hbar)-\theta_{{\rm hom}}(\hbar) \end{equation}
and has the following asymptotics
 as $\hbar \xrightarrow{} 0^+$:
\begin{equation*}\mathcal{G}_B(\hbar)=\frac{G_{-1}}{\hbar}+G_0+G_1 \hbar + O\big(\hbar^2\big).\end{equation*}
Here
\begin{equation*}G_{-1} =\Hat{G}_{(Q_1)}+\sum^n_{j=1}\frac{\pi{\rm i} r_j}{2}
\int^{z^{(1)}_j}_{z^{(2)}_j}\frac{{Q}_1}{v} , \end{equation*}
where the function $\Hat{G}_{(Q_1)}$ is defined by \eqref{funG}. Its explicit form depends on the concrete choice of~$Q_1$;
\begin{equation*}
G_{0}=-12 \pi{\rm i} \log \tau_B|_{r} -\sum^n_{j=1}\frac{\pi{\rm i} r_j}{2} \int^{z^{(1)}_j}_{z^{(2)}_j}\left(qv+\frac{1}{4r^2_j}v \right)+\sum^n_{j=1}
\pi{\rm i} r_j \binom{\frac{1}{2}}{2}\int^{z^{(1)}_j}_{z^{(2)}_j}\frac{{Q^2_1}}{v^{3}},
\end{equation*}
here $\log \tau_B|_{r}$ is the Bergman tau-function defined by \eqref{60} on stratum of the moduli space of quadratic differentials with second-order poles, $q(x)$ is a meromorphic function on $\mathcal{C}$ given by
\begin{equation*}q(x)=\frac{S_B-S_v}{2v^2}, \end{equation*}
here $S_v$ is the Schwarzian projective connection \eqref{q_form};
\begin{gather*}
G_1=-\sum^{4g-4+2n}_{i=1} \frac{5 \pi{\rm i}}{72}\underset{x_i}{\rm res}\left(\frac{Q_1/v}{\int^x_{x_i}v} \right)+\sum^{n}_{j=1}\frac{\pi{\rm i} r_j }{4}\int^{z^{(1)}_j}_{z^{(2)}_j} q\frac{Q_1}{v} \\
\hphantom{G_1=}{}
+\sum^{n}_{j=1}\frac{\pi{\rm i}}{16{r}_j}\int^{{z}^{(1)}_j}_{{z}^{(2)}_j}\frac{Q_1}{v}+\sum^n_{j=1}\pi{\rm i} r_j\binom{\frac{1}{2}}{3}
 \int^{z^{(1)}_j}_{z^{(2)}_j}\frac{Q^3_1}{v^{5}},
 \end{gather*}
where the first sum runs over the branch points of the double cover.
\end{Theorem}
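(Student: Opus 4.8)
The plan is to recognize the final statement as an assembly of results already established, and to split the argument into two independent parts: the symplectomorphism criterion and the three-term expansion of $\mathcal{G}_B(\hbar)$. For the former, write equation \eqref{140} as $\partial^2\phi+\big(\tfrac{1}{2}S-Q/\hbar^2\big)\phi=0$ with base projective connection $S=S_B-2Q_1/\hbar$, which is a legitimate section of $\mathbb{S}_{g,n}$ since $Q_1$ has at most simple poles and depends holomorphically on moduli. Theorem~\ref{Th_5} then asserts that $\mathcal{F}_{(S)}$ is a symplectomorphism with $\mathcal{F}_{(S)}^*\Omega_G=-\Omega_{\hom}$ precisely when $\Theta_{(S-S_B)}$ is closed. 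Because the assignment $Q\mapsto\Theta_{(Q)}$ is linear in its quadratic-differential argument (the coefficients $p_{jk},p_l$ of \eqref{form} depend linearly on the differential), we have $\Theta_{(S-S_B)}=-\tfrac{2}{\hbar}\Theta_{(Q_1)}$, so closedness is equivalent to $\delta\Theta_{(Q_1)}=0$, which settles the first assertion with no further work.

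For the expansion, the starting point is the defining relation \eqref{yang}. The key analytic input is the WKB expansion \eqref{157} of the homological shear coordinates in powers of $\hbar$, written through the Voros differentials $v_k=(s_{\rm odd})_k v$; substituting it into the Goldman potential \eqref{pot_G_wkb} produces the graded expression \eqref{160}, $\theta_G(\hbar)=\sum_{i\geq-2}\hbar^i\sum_{l+k=i}\langle\oint v_l,\delta\oint v_k\rangle$. Since $v_{-1}=v$ by \eqref{163}, the $\hbar^{-2}$ coefficient equals $\langle\oint v,\delta\oint v\rangle$, which is precisely $\hbar^2\theta_{\hom}(\hbar)$ from \eqref{pot_hom_wkb}; it therefore cancels in $\delta\mathcal{G}_B(\hbar)$, so the series opens at order $\hbar^{-1}$. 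Reading off $\delta G_{-1},\delta G_0,\delta G_1$ then reduces to inserting the explicit differentials \eqref{163}, \eqref{164} into the pairings and computing each in turn.

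Here I would exploit reuse. In each coefficient the terms built purely from $Q_1$ and $v$ coincide, after the relabeling $\tilde Q\to Q_1$, $v_0\to v$, with the $\hbar^1,\hbar^2,\hbar^3$ coefficients of the potential $\theta_1$ already evaluated in the proof of Proposition~\ref{Prop_3}; these contribute the binomial integrals $\binom{1/2}{k}\int^{z^{(1)}_j}_{z^{(2)}_j}Q_1^k/v^{2k-1}$, together with $\Theta_{(Q_1)}$ at order $\hbar^{-1}$, which integrates to $\hat G_{(Q_1)}$ by the closedness hypothesis. The genuinely new, $q$-dependent pieces are treated separately. For $G_0$ I would recognize, via \eqref{60}, that $\langle\oint qv,\delta\oint v\rangle=12\pi{\rm i}\,\delta\log\tau_B|_r$, and evaluate $\langle\oint v,\oint qv\rangle$ through the Euler-field homogeneity \eqref{58} of $\tau_B$ together with the stratum equations \eqref{55}, \eqref{56}, producing the regularized integrals of $qv+v/(4r_j^2)$ and the leading $-12\pi{\rm i}\log\tau_B|_r$ of \eqref{178}. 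For $G_1$ I would first apply Lemma~\ref{Lem_1} in the form \eqref{79} to convert the mixed pairings into a boundary integral over $\partial\hat{\mathcal{C}}_0$ plus a pairing against $\delta\oint v$, and then evaluate the boundary integrals as sums of residues at the branch points $x_i$ and the poles $\{z^{(1)}_j,z^{(2)}_j\}$, using the variational formula \eqref{54} to see that $\delta(qv)$ is holomorphic at the $x_i$ so that $\delta$ passes outside the residue.

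I expect the main obstacle to be the residue analysis at the branch points in $G_1$. This requires the local expansions of $S_B$, of the Schwarzian connection $S_v$, and of $v$ in the distinguished coordinate $\hat\xi_i$ of \eqref{37}, \eqref{38} (with $\xi=\hat\xi^2$ and $v=3\hat\xi^2{\rm d}\hat\xi$), from which the leading behaviour $qv\sim\tfrac{5}{12}\hat\xi^{-4}{\rm d}\hat\xi$ emerges and yields the universal rational coefficient $\tfrac{5}{36}$ (hence $\tfrac{5}{72}$ after the factors of $\tfrac{1}{4}$). The delicate point is keeping the exact numerical prefactors consistent across the three separate residue computations \eqref{pt1}, \eqref{pt2}, \eqref{pt3} and checking that the branch-point and simple-pole contributions combine as claimed. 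Once all pieces are collected, each $\delta G_i$ is manifestly exact; integrating yields the stated closed forms for $G_{-1}$, $G_0$ \eqref{178} and $G_1$ \eqref{trm1}, completing the proof.
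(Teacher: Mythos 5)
Your proposal is correct and follows essentially the same route as the paper: the symplectomorphism criterion is obtained by feeding $S=S_B-2Q_1/\hbar$ into Theorem~\ref{Th_5} (using linearity of $\Theta$ in its argument), and the expansion is computed by inserting \eqref{157} into \eqref{160}, identifying the pure-$Q_1$ pairings with the $\hbar^1,\hbar^2,\hbar^3$ coefficients from the proof of Proposition~\ref{Prop_3}, treating the $q$-dependent terms via \eqref{60}, the Euler-field homogeneity of $\tau_B$, Lemma~\ref{Lem_1} in the form \eqref{79}, and the residue analysis in the distinguished coordinate at the branch points. All the key ingredients, decompositions, and numerical coefficients match the paper's argument.
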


We can propose an alternative way of computing the expansion of Yang--Yang function $\mathcal{G}_{B}(\hbar)$, assuming we obtained the expansion of the related generating function $\mathcal{G}^0_{B}(\hbar)$ for $Q_1 \equiv 0.$ Rewrite the equation \eqref{140} in the form
\begin{equation*}\partial^2\phi+\left(\frac{1}{2}S_B-\frac{Q+ \hbar Q_1}{\hbar^2}\right)\phi=0,\end{equation*}
corresponding to detachment of differential $\frac{Q_1}{\hbar}$ from projective connection $S_B$ and joining it with~$\frac{Q_!}{\hbar^2}$, so that $\frac{Q+ \hbar Q_1}{\hbar^2} \in \mathcal{Q}_{g,n}[\mathbf{r}/\hbar]$. This operation induces the following diagram of maps, where symplectomorphism $\mathcal{F}_{(S_B-{2Q_1}/{\hbar})}$ factors through the composition $\mathcal{F}_{(S_B)} \circ \mathcal{H} $ of symplectomorphisms:
\[
 \begin{tikzcd}
{\mathcal{Q}_{g,n}[\mathbf{r}/\hbar]} \arrow[rr, " Q \xrightarrow{\mathcal{H}} Q+\hbar Q_1"] \arrow[rrd, "\mathcal{F}_{(S_B-{2Q_1}/{\hbar})}"'] & & {\mathcal{Q}_{g,n}[\mathbf{r}/\hbar] \ar[d, "\mathcal{F}_{(S_B)}"]} \\
 & & {{\rm CV}_{g,n}[\mathbf{m}(\hbar)].}
\end{tikzcd}
\]
Denote by $\theta^{1}_{{\rm hom}}(\hbar)$ the symplectic potential computed
via the periods of the Abelian differential~$v^\hbar_1$, which defines canonical double cover $\hat{\mathcal{C}_\hbar}$ by $\big(v^\hbar_{1}\big)^2=Q+\hbar Q_1$. Then the Monodromy generating function \eqref{yang} admits the following representation:
\begin{equation*} \delta \mathcal{G}_{B}(\hbar)=\mathcal{H}^*\big(\mathcal{F}_{(S_B)}^*\theta_{G}(\hbar)-\theta^{1}_{{\rm hom}}(\hbar)\big)+\big(\mathcal{H}^*\theta^{1}_{{\rm hom}}(\hbar)-\theta_{{\rm hom}}(\hbar)\big). \end{equation*}
The $\hbar$-expansion of the generating function for map $\mathcal{H}$ was computed in \eqref{126} along with the proof of Proposition \ref{Prop_3}:
\begin{equation*}\mathcal{H}^*\theta^{1}_{{\rm hom}}(\hbar)-\theta_{{\rm hom}}(\hbar)=\delta\left[\sum^n_{j=1}
\pi{\rm i} r_j \left[\sum^{\infty}_{k=1}\hbar^{k-2}\binom{\frac{1}{2}}{k}\int^{z^{(1)}_j}_{z^{(2)}_j}\frac{{Q^k_1}}{v^{2k-1}} \right]\right] + \hbar^{-1} \delta\Hat{G}_{(Q_1)}. \end{equation*}

Homological coordinates on $\mathcal{Q}_{g,n}[\mathbf{r}/\hbar]$, defined by the periods of $v^\hbar_{1}$, holomorphically depend on $\hbar.$ That allows us to perform the $\hbar$-expansion of the
\begin{equation}\label{exp}\mathcal{H}^*\big(\mathcal{F}_{(S_B)}^*\theta_{G}(\hbar)-\theta^{1}_{{\rm hom}}(\hbar)\big) \end{equation}
in two steps. At first, obtain the WKB-expansion of the generating function
\begin{equation*}
 \mathcal{F}_{(S_B)}^*\theta_{G}(\hbar)-\theta^{1}_{{\rm hom}}(\hbar),
\end{equation*}
assuming that $\hbar$ of $v^\hbar_{1}$ is fixed. This is equivalent to setting $Q_1=0$ and performing the computations as in Section~\ref{section4.1}. This expansion was considered in \cite{bertola2021wkb} and includes only even powers of~$\hbar$. We denote it by
\begin{equation*}\mathcal{G}^0_B(\hbar) =\sum^{\infty}_{i=0} \hbar^{2i} G^0_{2i}. \end{equation*}
Now vary $Q$ by the differential $\hbar Q_1$ and expand the terms $G^0_{2i}$ by Taylor series to obtain the full $\hbar$-expansion for~\eqref{exp}. The following proposition holds

\begin{Proposition}
Let \begin{equation} \label{G0}
\mathcal{G}^0_B(\hbar) =\sum^{\infty}_{i=0} \hbar^{2i} G^0_{2i}, \qquad \hbar \xrightarrow[]{} 0^+
\end{equation} be a WKB expansion of the Monodromy generating function $\delta\mathcal{G}^0_B(\hbar)=\mathcal{F}_{(S_B)}^*\theta_{G}(\hbar)-\theta_{{\rm hom}}(\hbar)$ of the equation
\begin{equation*}\partial^2\phi+\left(\frac{1}{2}S_B-\frac{Q }{\hbar^2}\right)\phi=0.\end{equation*}
Then the generalized WKB expansion
\begin{equation*}\mathcal{G}_B(\hbar) =\sum^{\infty}_{i=-1} \hbar^i G_i, \qquad \hbar \xrightarrow[]{} 0^+ \end{equation*}
of the monodromy generating function $\delta \mathcal{G}_{B}(\hbar)=\mathcal{F}_{(S_B-{2Q_1}/{\hbar})}^*\theta_{G}(\hbar)-\theta_{{\rm hom}}(\hbar)$ of the equation
\begin{equation*}\partial^2 \phi+\left(\frac{1}{2}S_B-\frac{{Q}_1}{\hbar}-\frac{Q}{\hbar^2}\right)\phi=0\end{equation*}
is related to \eqref{G0} by
\begin{gather}
G_{-1} =\Hat{G}_{(Q_1)}+\sum^n_{j=1}\frac{\pi{\rm i} r_j}{2}
\int^{z^{(1)}_j}_{z^{(2)}_j}\frac{{Q}_1}{v} ,\nonumber
\\
 \label{G_0 term}
G_{2k}=\sum^k_{i=0}\delta^{(2i)}_{Q_1}\frac{G^0_{2k-2i}}{(2i)!}+\sum^n_{j=1}\binom{\frac{1}{2}}{2k+2}
\pi{\rm i} r_j \int^{z^{(1)}_j}_{z^{(2)}_j}\frac{{Q}^{2k+2}_1}{v^{4k+3}}, \qquad k=0,\dots, \infty ,\\
\label{tpl}G_{2k+1}=\sum^k_{i=0}\delta^{(2i+1)}_{Q_1}\frac{G^0_{2k-2i}}{(2i+1)!}+\sum^n_{j=1}\binom{\frac{1}{2}}{2k+3}\pi{\rm i} r_j \int^{z^{(1)}_j}_{z^{(2)}_j}\frac{{Q}^{2k+3}_1}{v^{4k+5}}, \qquad k=0,\dots, \infty,
\end{gather}
where $\delta^{(k)}_{Q_1} f$ denotes $\frac{\partial^k}{\partial \hbar^k}f[Q+\hbar Q_1]\big|_{\hbar=0}$.
\end{Proposition}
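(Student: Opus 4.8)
The plan is to exploit the factorization $\mathcal{F}_{(S_B-2Q_1/\hbar)}=\mathcal{F}_{(S_B)}\circ\mathcal{H}$ recorded in the diagram above, in which equation \eqref{140} is rewritten as a $Q_1\equiv 0$ equation whose quadratic differential is the $\hbar$-deformation $Q+\hbar Q_1$. This yields the splitting
\[
\delta\mathcal{G}_B(\hbar)=\mathcal{H}^*\big(\mathcal{F}_{(S_B)}^*\theta_{G}(\hbar)-\theta^{1}_{{\rm hom}}(\hbar)\big)+\big(\mathcal{H}^*\theta^{1}_{{\rm hom}}(\hbar)-\theta_{{\rm hom}}(\hbar)\big),
\]
where $\big(v^\hbar_1\big)^2=Q+\hbar Q_1$. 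The second summand is exactly the generating function of the change-of-cover map $\mathcal{H}$, already evaluated in \eqref{126}; upon integration it contributes
\[
\sum^n_{j=1}\pi{\rm i} r_j\sum^{\infty}_{p=1}\hbar^{p-2}\binom{\frac{1}{2}}{p}\int^{z^{(1)}_j}_{z^{(2)}_j}\frac{Q_1^p}{v^{2p-1}}+\hbar^{-1}\Hat{G}_{(Q_1)}.
\]

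First I would treat the remaining summand $(\star)=\mathcal{H}^*\big(\mathcal{F}_{(S_B)}^*\theta_{G}-\theta^{1}_{{\rm hom}}\big)$. Since $\mathcal{F}_{(S_B)}^*\theta_{G}(\hbar)-\theta^{1}_{{\rm hom}}(\hbar)$ is the generating function of the monodromy map for the purely $\hbar^{-2}$ equation with quadratic differential $Q+\hbar Q_1$, its WKB expansion performed with the $\hbar$ entering $v^\hbar_1$ held frozen is, by \eqref{G0}, the even series $\mathcal{G}^0_B$ evaluated at $Q+\hbar Q_1$, namely $\sum_{i\ge 0}\hbar^{2i}G^0_{2i}[Q+\hbar Q_1]$. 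Invoking the holomorphic dependence of the periods of $v^\hbar_1$ on $\hbar$, I would then Taylor-expand each coefficient about the undeformed point, $G^0_{2i}[Q+\hbar Q_1]=\sum_{m\ge 0}\frac{\hbar^m}{m!}\,\delta^{(m)}_{Q_1}G^0_{2i}$, and substitute, so that $(\star)$ becomes the double series $\sum_{i,m}\frac{\hbar^{2i+m}}{m!}\,\delta^{(m)}_{Q_1}G^0_{2i}$.

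The final step is bookkeeping: I would add the two contributions and collect equal powers of $\hbar$. Matching $\hbar^{2k}$ forces $2i+m=2k$ in $(\star)$, and relabelling $i\mapsto k-i$ converts this into the first sum $\sum_{i=0}^k\delta^{(2i)}_{Q_1}G^0_{2k-2i}/(2i)!$ of \eqref{G_0 term}, while the index-$(2k+2)$ term of the change-of-cover series supplies the residual integral $\binom{\frac{1}{2}}{2k+2}\pi{\rm i} r_j\int Q_1^{2k+2}/v^{4k+3}$; the odd coefficients $\hbar^{2k+1}$ are matched identically to \eqref{tpl}, the clean even/odd split being forced by the fact that $\mathcal{G}^0_B$ carries only even powers of $\hbar$. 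The lone $\hbar^{-1}$ term originates entirely from the second summand (its $p=1$ piece together with $\hbar^{-1}\Hat{G}_{(Q_1)}$) and reproduces $G_{-1}$, since $(\star)$ has no contribution below $\hbar^0$.

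The step I expect to be the main obstacle is justifying this \emph{freeze--expand--re-expand} manoeuvre rigorously: one should introduce a WKB parameter $\epsilon$ and an independent deformation parameter $\eta$, form $\mathcal{G}^0_B(\epsilon)[Q+\eta Q_1]=\sum_i\epsilon^{2i}G^0_{2i}[Q+\eta Q_1]$, Taylor-expand in $\eta$, and only afterwards set $\epsilon=\eta=\hbar$. The legitimacy of reordering the resulting double power series into a single well-ordered series in $\hbar$ rests on the holomorphy of the periods of $v^\hbar_1$ in $\hbar$, which makes each $G^0_{2i}[Q+\eta Q_1]$ genuinely analytic in $\eta$; the parity structure of $\mathcal{G}^0_B$ then guarantees the separation into the two families recorded in \eqref{G_0 term} and \eqref{tpl}.
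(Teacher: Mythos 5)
Your proposal is correct and follows essentially the same route as the paper: the factorization $\mathcal{F}_{(S_B-2Q_1/\hbar)}=\mathcal{F}_{(S_B)}\circ\mathcal{H}$, the reuse of the $\mathcal{H}$-generating function from the proof of Proposition 3.3, and the freeze-then-Taylor-expand treatment of $\mathcal{H}^*\big(\mathcal{F}_{(S_B)}^*\theta_{G}-\theta^{1}_{{\rm hom}}\big)$ justified by holomorphic dependence of the periods of $v^\hbar_1$ on $\hbar$ are exactly the paper's argument, and your index bookkeeping reproduces the stated formulas. If anything, your explicit separation of the WKB parameter from the deformation parameter before setting both equal to $\hbar$ is spelled out more carefully than in the paper.
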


\begin{Remark}
Comparing the derived expression \eqref{178} for $G_0$ with the formula \eqref{G_0 term}, we obtain
\[
G^0_0=-12 \pi{\rm i} \log \tau_B|_{r} -\sum^n_{j=1}\frac{\pi{\rm i} r_j}{2} \int^{z^{(1)}_j}_{z^{(2)}_j}\left(qv+\frac{1}{4r^2_k}v \right),
\]
which is exactly the leading term of the WKB expansion, previously computed in \cite{bertola2021wkb}.
\end{Remark}

We expect that the above result may be useful for relating the WKB expansion with the framework of topological recursion \cite{Eynard_2007}. Indeed, the formulas \eqref{G_0 term}, \eqref{tpl} involve variations of the spectral cover for a fixed base curve which appear in the recursive definition of Eynard--Orantin invariants. For a higher genus such variations were studied in detail in \cite{Baraglia_2019, bertola2019spaces, klimov2021var}. In particular, due to the relation \eqref{tpl} one may alternatively derive the term $G_1$ \eqref{trm1} by varying the Bergman tau-function appearing in $G^0_0$ using variational techniques of \cite{klimov2021var}.

\subsection*{Acknowledgements} The author thanks his scientific advisor D.~Korotkin for posing the problem and fruitful discussions, and is grateful to anonymous referees for carefully reading the manuscript and giving a~relevant contribution to enhance the paper.

%\cite{Fay_1992,Fay_1973}
\pdfbookmark[1]{References}{ref}
\LastPageEnding

\end{document}